\newtheorem{Thm}{Theorem}
\newtheorem{Lem}[Thm]{Lemma}
\newtheorem{Cor}[Thm]{Corollary}
\newtheorem{Prop}[Thm]{Proposition}
\newtheorem{Def}{Definition}
\newenvironment{proof}{\noindent {\textbf{Proof }}}{$\Box$ \medskip}
\newcommand\mbR{\mbox{$\mathbb{R}$}}
\newcommand {\ie} {\textit{i.e.}\xspace}
\newcommand {\st} {\textit{s.t.}\xspace}
\newcommand{\taba}{\hspace*{0.25in}}
\newcommand{\tabb}{\hspace*{.5in}}
\newcommand{\tabc}{\hspace*{.75in}}
\newcommand{\tabd}{\hspace*{1.0in}}
\newcommand{\tabf}{\hspace*{1.5in}}
\newcommand{\tabi}{\hspace*{2.25in}}
\title{Structure and Algorithm for Path of Solutions to a Class of Fused Lasso Problems}
\author{
Cheng Lu\\
Department of Industrial Engineering and Operations Research, \\
University of California, Berkeley \\
email: {\tt chenglu@berkeley.edu}}
\begin{document}

\maketitle

\begin{abstract}
We study a class of fused lasso problems where the estimated parameters in a sequence are regressed toward their respective observed values (fidelity loss), with $\ell_1$ norm penalty (regularization loss) on the differences between successive parameters, which promotes local constancy. In many applications, there is a coefficient, often denoted as $\lambda$, on the regularization term, which adjusts the relative importance between the two losses.

%One often needs to solve the problem for many different values of $\lambda$ and choose the best $\lambda$ by examining the respective optimal solutions. This is often time and labor consuming.

In this paper, we characterize how the optimal solution evolves with the increment of $\lambda$. We show that, if all fidelity loss functions are convex piecewise linear, the optimal value for \emph{each} variable changes at most $O(nq)$ times for a problem of $n$ variables and total $q$ breakpoints. On the other hand, we present an algorithm that solves the path of solutions of \emph{all} variables in $\tilde{O}(nq)$ time for all $\lambda \geq 0$. Interestingly, we find that the path of solutions for each variable can be divided into up to $n$ locally convex-like segments. For problems of arbitrary convex loss functions, for a given solution accuracy, one can transform the loss functions into convex piecewise linear functions and apply the above results, giving pseudo-polynomial bounds as $q$ becomes a pseudo-polynomial quantity.

%Our path-of-solution algorithm is faster than solving for every $\lambda$ from scratch if there are $\Omega(n)$ $\lambda$ values to be solved for. Besides, understanding the structure of the path of solutions may help with $\lambda$ selection.

To our knowledge, this is the first work to solve the path of solutions for fused lasso of non-quadratic fidelity loss functions.

\end{abstract}

\section{Introduction}
In this paper, we characterize and solve the path of solutions to the following class of fused lasso problems:
\begin{equation}\label{prob: fused-lasso-lambda}
\begin{split}
(\text{FL})\ \min_{x_1,\ldots, x_n}\ &\sum_{i=1}^n f_i(x_i) + \lambda\sum_{i=1}^{n-1}|x_i - x_{i+1}|\\
\st\ &\ell_i \leq x_i \leq u_i,\ i = 1,\ldots,n.
\end{split}
\end{equation}
Each function $f_i(x_i)$ is a general convex function. The coefficient $\lambda$ ($\lambda \geq 0$) is a hyperparameter for the problem. The optimal solution varies with regard to $\lambda$. Let $\boldsymbol{x}^*(\lambda) = (x^*_1(\lambda), x^*_2(\lambda), \ldots, x^*_n(\lambda))$ be the optimal solution to FL for a given $\lambda$. The optimal solution $\boldsymbol{x}^*(\lambda)$ is a function of $\lambda$ and we refer the function, with $\lambda$ defined over $[0, +\infty)$, as the \emph{path of solutions} to problem FL. Without loss of generality, we only consider integer $\lambda$ values.

We first characterize and solve the path of solutions to a special case of FL:
\begin{equation*}
\begin{split}
(\text{PL-FL})\ \min_{x_1,\ldots, x_n}\ &\sum_{i=1}^n f^{pl}_i(x_i) + \lambda\sum_{i=1}^{n-1}|x_i - x_{i+1}|\\
\st\ &\ell_i \leq x_i \leq u_i,\ i = 1,\ldots,n.
\end{split}
\end{equation*}
Each function $f^{pl}_i(x_i)$ is a convex piecewise linear function (the superscript ``pl" stands for ``piecewise linear") of $q_i$ breakpoints. Note that any FL problem can be ``piecewise linearized" to a PL-FL problem for a given solution accuracy $\epsilon$ \cite{HS90}, where we solve an $\epsilon$-accurate solution $\boldsymbol{x}$ for FL such that there is an optimal solution $\boldsymbol{x}^*$ for FL satisfying $||\boldsymbol{x} - \boldsymbol{x}^*||_{\infty} < \epsilon$. In other words, $\boldsymbol{x}$'s first $\log\frac{1}{\epsilon}$ significant digits after the decimal point are identical to those of $\boldsymbol{x}^*$. The ``piecewise linearization" is done by introducing $q_i = (u_i - \ell_i)/\epsilon + 1$ breakpoints for each $f_i(x_i)$, $\{\ell_i, \ell_i + \epsilon, \ell_i + 2\epsilon, \ldots, u_i - 2\epsilon, u_i - \epsilon, u_i\}$, and defining a convex piecewise linear function $f^{pl}_i$ whose left and right sub-gradients, if exists, on each breakpoint are defined as:
\begin{equation*}
\begin{split}
(f^{pl}_i)'_L(x) &= (f_i(x) - f_i(x-\epsilon))/\epsilon,\\
(f^{pl}_i)'_R(x) &= (f_i(x+\epsilon) - f_i(x))/\epsilon.
\end{split}
\end{equation*}
With the transformation, a bound for PL-FL directly leads to a bound for FL. The caveat is that, while $q_i$ is an input parameter for PL-FL\footnote{We assume that a piecewise linear function is represented by a sorted list of breakpoints with slopes of linear pieces in-between}, it is not for FL. As a result, a bound for PL-FL that is polynomial of $q_i$ becomes a pseudo-polynomial bound for FL.

Without loss of generality, any convex piecewise linear function $f^{pl}_i(x_i)$ with box constraint $\ell_i \leq x_i \leq u_i$ is equivalent to a convex piecewise linear function without the box constraint:
\begin{equation*}
\tilde{f}^{pl}_i(x_i) = \begin{cases}
				f^{pl}_i(\ell_i) - M(x_i - \ell_i) \ \text{for}\ x_i < \ell_i,\\
				f^{pl}_i(x_i) \ \text{for}\ \ell_i \leq x_i \leq u_i,\\
				f^{pl}_i(u_i) + M(x_i - u_i) \ \text{for}\ x_i > u_i	
				\end{cases}
\end{equation*}
for $M$ sufficiently large. Therefore, in the paper, the PL-FL problem is unconstraint, without loss of generality:
\begin{equation}\label{prob: PL-FL-lambda}
(\text{PL-FL})\ \min_{x_1,\ldots, x_n}\ \sum_{i=1}^n f^{pl}_i(x_i) + \lambda\sum_{i=1}^{n-1}|x_i - x_{i+1}|,
\end{equation}
with the first piece of each convex piecewise linear function having negative slope $(-M)$ and the last piece of each convex piecewise linear function having positive slope $(+M)$. To simplify notation, the path of solutions to PL-FL (\ref{prob: PL-FL-lambda}) is also denoted as $\boldsymbol{x}^*(\lambda) = (x^*_1(\lambda), x^*_2(\lambda), \ldots, x^*_n(\lambda))$. In the remainder of the paper, it should be clear which problem an optimal solution refers to from the context.

In this paper, we show that, for PL-FL (\ref{prob: PL-FL-lambda}), the optimal value for \emph{each} variable $x_i$ changes at most $O(nq)$ times as $\lambda$ increases, where $q = \sum_{i=1}^n q_i$ is the total number of breakpoints (counting multiplicity) of all the $n$ convex piecewise linear functions. On the other hand, we present an algorithm that solves the path of solutions of \emph{all} variables in $\tilde{O}(nq)$ time. The two bounds only differ by a logarithmic factor. In addition, we find that the path of solutions for each variable can be divided into up to $n$ locally convex-like segments.

With the above transformation between FL (\ref{prob: fused-lasso-lambda}) and PL-FL (\ref{prob: PL-FL-lambda}), we have $q = O(\frac{nU}{\epsilon})$ for transformed FL (\ref{prob: fused-lasso-lambda}) of solution accuracy $\epsilon$, where $U = \max_i\{u_i - \ell_i\}$. As a result, applying the above bounds, we have, in FL (\ref{prob: fused-lasso-lambda}) of solution accuracy $\epsilon$, the optimal value for \emph{each} variable $x_i$ changes at most $O(\frac{n^2U}{\epsilon})$ times as $\lambda$ increases, and the path-of-solution algorithm has time complexity $\tilde{O}(\frac{n^2U}{\epsilon})$.

\subsection{Applications of PL-FL}
Besides being a bridge for FL (\ref{prob: fused-lasso-lambda}), special cases of PL-FL problem (\ref{prob: PL-FL-lambda}) appear in many applications. An example is in array-CGH analysis in bioinformatics \cite{EdM05}. It is to estimate the ratio of gene copying numbers at each position in DNA sequences between tumor and normal cell samples, based on the biological knowledge that the ratios between adjacent positions in the DNA sequences are similar. Eilers and de Menezes in \cite{EdM05} proposed the following quantile fused lasso model to identify the estimated log-ratio $x_i$, based on the observed log-ratio $a_i$ at the $i$th position:
\begin{equation*}
\min_{x_1,\ldots,x_n}\ \sum_{i=1}^n\rho_\tau(x_i; a_i) + \lambda\sum_{i=1}^{n-1}|x_i - x_{i+1}|,
\end{equation*}
where $\rho_\tau(x_i; a_i)$ is a quantile function defined for parameter $\tau \in [0,1]$ as:
\begin{equation*}
\rho_\tau(x_i; a_i) = \begin{cases}
				\tau(x_i - a_i)\ \text{if}\ x_i - a_i \geq 0,\\
				-(1-\tau)(x_i - a_i)\ \text{if}\ x_i - a_i < 0.
				\end{cases}
\end{equation*}

In signal processing, Storath, Weinmann, and Unser in \cite{SWU16} considered a fused lasso model with $\ell_1$ fidelity loss functions:
\begin{equation*}
\min_{x_1,\ldots,x_n}\ \sum_{i=1}^nw_i|x_i - a_i| + \lambda\sum_{i=1}^{n-1}|x_i - x_{i+1}|,
\end{equation*}
where the $w_i$'s are positive weights.

In the above models, the hyperparameter $\lambda$ weights the relative importance between the fidelity loss and the regularization loss. It is often selected by solving the problem for many different values of $\lambda$ and choose the best one by examining the respective optimal solutions. This is often time and labor consuming. As we shall show, if the set/interval of candidate $\lambda$ values is large, our path-of-solution algorithm is faster than solving the problem for each candidate $\lambda$ from scratch.
 
\subsection{Existing path-of-solution algorithms}
Existing works on the solution path of special cases and variants of FL problem (\ref{prob: fused-lasso-lambda}) inspire the work in the paper. A special case of FL (\ref{prob: fused-lasso-lambda}), called fused-lasso signal approximator (FLSA), is studied in \cite{FHHT07, HH10}. The problem is defined as follows:
\begin{equation}\label{prob: FLSA}
(\text{FLSA})\ \min_{x_1,\ldots, x_n}\ \frac{1}{2}\sum_{i=1}^n(x_i - a_i)^2 + \lambda_1\sum_{i=1}^n|x_i| + \lambda_2\sum_{i=1}^{n-1}|x_i - x_{i+1}|.
\end{equation}
Friedman et al. in \cite{FHHT07} prove a ``fusing property" of FLSA (\ref{prob: FLSA}). Let $\lambda_1$ be fixed. They prove that if the optimal values of $x_i$ and $x_{i+1}$ are equal for a $\lambda_2$, then for all $\lambda'_2 > \lambda_2$, the optimal values of $x_i$ and $x_{i+1}$ remain equal. Inspired by their proof technique, we shall prove in the paper that the same fusing property holds for FL (\ref{prob: fused-lasso-lambda}) of arbitrary convex fidelity loss functions, not only the convex quadratic-type functions in FLSA (\ref{prob: FLSA}). Hoefling in \cite{HH10} provides an efficient path-of-solution algorithm to solve FLSA (\ref{prob: FLSA}) for all values of $\lambda_1 \geq 0$ and $\lambda_2 \geq 0$. Hoefling's algorithm has time complexity $O(n\log n)$, and the space complexity to store the path of solutions is $O(n)$.

Tibshirani and Taylor in \cite{TT11} present path-of-solution algorithms to a generalized lasso problem as follows:
\begin{equation}
(\text{Generalized Lasso})\ \min_{\boldsymbol{x}\in \mbR^n}\ \frac{1}{2}||\boldsymbol{b} - A\boldsymbol{x}||^2_2 + \lambda||D\boldsymbol{x}||_1. 
\end{equation}
The case of interest here is $A = I$ and $D$ being the 1-dimensional fused lasso matrix, which is also a special case of FL (\ref{prob: fused-lasso-lambda}). A path-of-solution algorithm for this case is discussed in Section 5 of \cite{TT11}, yet the time complexity of the algorithm and the space complexity to store the path of solutions are not explicitly provided. Note that this case is also a special case of FLSA (\ref{prob: FLSA}) with $\lambda_1 = 0$.

As a variant, Tibshirani et al. in \cite{THT11} present an efficient path-of-solution algorithm to solve a ``nearly-isotonic" problem:
\begin{equation}
(\text{Nearly-isotonic})\ \min_{x_1,\ldots, x_n}\ \frac{1}{2}\sum_{i=1}^n(x_i - a_i)^2 + \lambda\sum_{i=1}^{n-1}(x_i - x_{i+1})_+,
\end{equation}
where $(x)_+$ is the positive part of $x$, $\max\{x,0\}$. The path-of-solution algorithm in \cite{THT11} for the nearly-isotonic problem has $O(n\log n)$ time complexity and $O(n)$ space complexity to store the path of solutions.

To our knowledge, the work presented here is the first to solve the path of solutions for fused lasso problems of non-quadratic fidelity loss functions.

\subsection{Overview}
The rest of the section is organized as follows. In Section \ref{sect: bound-fusing-breakpoints}, we prove that for FL (\ref{prob: fused-lasso-lambda}) of arbitrary convex loss functions, once a pair of neighboring variables are fused together for a $\lambda$ value, then for any $\lambda' > \lambda$, the pair of neighboring variables remain fused. This result leads to the definition of fusing $\lambda$ values and bounds the number of fusing $\lambda$ values by $O(n)$. The fusing $\lambda$ values partition the whole interval $[0,+\infty)$ into $O(n)$ segments such that in each segment, no variables are fused together. With this observation, in Section \ref{sect: bound-between-fusing-values}, we bound the number of different solutions to PL-FL (\ref{prob: PL-FL-lambda}) and FL (\ref{prob: fused-lasso-lambda}) between two adjacent fusing $\lambda$ values. The above two sections together bound the number of times a variable changes its optimal value and characterize how the path of solutions look like as $\lambda$ increases.

Inspired by the proofs in Section \ref{sect: bound-fusing-breakpoints} and \ref{sect: bound-between-fusing-values}, an algorithm is designed in Section \ref{sect: path_alg} to solve the path of solutions to PL-FL (\ref{prob: PL-FL-lambda}) in polynomial time and FL (\ref{prob: fused-lasso-lambda}) in pseudo-polynomial time. Concluding remarks are provided in Section \ref{sect: conclusion}.

\section{Bounding the number of fusing $\lambda$ values}\label{sect: bound-fusing-breakpoints}
We formally define the concepts of a \emph{fusing} $\lambda$ value as follows:
\begin{Def}
$\lambda_0$ is a \emph{fusing} $\lambda$ value for FL (\ref{prob: fused-lasso-lambda}) if there exists an $i$ such that $x^*_{i}(\lambda) \neq x^*_{i+1}(\lambda)$ for any $\lambda < \lambda_0$ but $x^*_{i}(\lambda) = x^*_{i+1}(\lambda)$ for all $\lambda \geq \lambda_0$.
\end{Def}
The validity of the above definition is supported by the following theorem:
\begin{Thm}\label{thm: fusing-points}
For FL (\ref{prob: fused-lasso-lambda}), suppose $x^*_{i}(\lambda_0) = x^*_{i+1}(\lambda_0)$ for two adjacent coordinates $i$ and $i+1$ for some $\lambda_0 \geq 0$, then for any $\lambda \geq \lambda_0$, we have $x^*_{i}(\lambda) = x^*_{i+1}(\lambda)$.
\end{Thm}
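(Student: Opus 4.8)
The plan is to reformulate optimality for FL through subgradients, read off a clean ``force-balance'' characterization of when a pair is fused, and then show that this characterization can only be preserved, never destroyed, as $\lambda$ grows. First I would write the subgradient optimality conditions: at an optimum there are $\phi_i \in \partial f_i(x_i^*)$ and, for each edge $k$, a value $z_k \in \partial|x_k - x_{k+1}|$ (so $z_k = \mathrm{sign}(x_k - x_{k+1})$ when $x_k \neq x_{k+1}$ and $z_k \in [-1,1]$ when $x_k = x_{k+1}$) with $\phi_k + \lambda z_k - \lambda z_{k-1} = 0$ and $z_0 = z_n = 0$. Summing from $1$ to $k$ gives the cumulative relation $\sum_{j\le k}\phi_j = -\lambda z_k$, so, writing $P_k = \sum_{j\le k}\phi_j$, optimality is equivalent to choosing subgradients with $P_k \in [-\lambda,\lambda]$ for all $k$, $P_n = 0$, and the complementarity rule: $|P_k| < \lambda$ forces $x_k^* = x_{k+1}^*$, while $x_k^* \neq x_{k+1}^*$ forces $P_k = \pm\lambda$ with the sign giving the direction of the jump. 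Here $P_k$ is exactly the total subgradient of the left block $\{1,\dots,k\}$, i.e.\ the derivative of the partially minimized message $L_k(t;\lambda) = \min\{\sum_{j\le k} f_j(x_j) + \lambda\sum_{j<k}|x_j-x_{j+1}| : x_k = t\}$, which is convex in $t$.

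With this in hand, fusion of the pair $(i,i+1)$ at a value $v$ is equivalent to the balance condition that there exist $g \in \partial L_i(v;\lambda)$ with $-g \in \partial R_{i+1}(v;\lambda)$ and $|g| \le \lambda$, where $R_{i+1}$ is the symmetric right message. I would then argue persistence as follows: suppose the pair is fused at $\lambda_0$, so a balancing $g_0$ exists with $|g_0| \le \lambda_0$. Treat the fused block as a single node carrying the convex fidelity $f_i + f_{i+1}$ and follow the optimum as $\lambda$ increases. The goal is to show the internal flow $P_i(\lambda)$ never reaches the splitting boundary with the sign that would separate $x_i$ from $x_{i+1}$; equivalently, that a balancing subgradient can always be kept in $[-\lambda,\lambda]$. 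Since that interval only widens with $\lambda$, a split could arise only if the required internal subgradient grew in magnitude faster than $\lambda$ itself.

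The main obstacle is precisely that the fused value $v(\lambda)$ and the flows on the two external edges $i-1$ and $i+1$ co-vary with $\lambda$, so the block cannot be analyzed in isolation. I would control this by a monotone comparative-statics / exchange argument resting on two facts: (i) each $\partial f_j$ is nondecreasing, hence the block's total subgradient is nondecreasing in its common value; and (ii) raising $\lambda$ can only contract the solution---for instance $\sum_k |x_k^* - x_{k+1}^*|$ is nonincreasing in $\lambda$, by the standard exchange argument for $\min g + \lambda h$ with $h \ge 0$. Assuming for contradiction that the pair splits at some $\lambda_1 > \lambda_0$, say with $x_i^* > x_{i+1}^*$, the complementarity rule forces $P_i(\lambda_1) = -\lambda_1$. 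I would then compare the left block at $\lambda_1$ against its fused configuration at $\lambda_0$: driving $P_i$ down to $-\lambda_1$ requires the left block to sit strictly below its $\lambda_0$ position, while the contraction property simultaneously pulls the two endpoints together, producing a sign contradiction in the balance condition. I expect this coupled sign bookkeeping---tracking how $v(\lambda)$, the external flows, and the internal flow move together---to be the delicate step, and it is where monotonicity of subgradients for general convex $f_i$ must replace the explicit quadratic formulas used by Friedman et al.

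A minor preliminary is to make $x^*(\lambda)$ well defined when the $f_i$ are not strictly convex: since $|x_k - x_{k+1}|$ is submodular and the $f_j$ are separable, the objective is submodular and its minimizers form a lattice, so one fixes the canonical smallest minimizer and interprets fusion with respect to that selection; the balance characterization and the argument above then apply verbatim to this canonical path.
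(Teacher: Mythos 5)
Your setup is sound and in fact coincides with how the paper's proof begins: your subgradient conditions $\phi_k + \lambda z_k - \lambda z_{k-1} = 0$ with $z_0 = z_n = 0$ are exactly the paper's (\ref{eqn: subgradients})--(\ref{eqn: s_constraints}), and the cumulative flows $P_k = -\lambda z_k$ together with the complementarity rule are a correct equivalent statement of optimality (the lattice/canonical-minimizer remark is also fine, though peripheral). The problem is that the theorem's actual content --- persistence of fusion --- is never proved: you explicitly defer the ``coupled sign bookkeeping'' as the delicate step, so what you have is a correct reformulation plus a conjecture that a contradiction can be extracted from it, not a proof.

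Moreover, the two ingredients you propose for closing that step are too weak as stated. The contraction fact that $\sum_k |x^*_k - x^*_{k+1}|$ is nonincreasing in $\lambda$ is a global aggregate statement; it is perfectly consistent with a \emph{new} jump opening at $(i,i+1)$ while jumps elsewhere shrink, so it cannot ``pull the two endpoints together'' at the specific pair in question. Likewise, from $P_i(\lambda_1) = -\lambda_1 < -\lambda_0 \leq P_i(\lambda_0)$ and monotonicity of subgradients you only get that the \emph{sum} $\sum_{j\leq i}\phi_j$ decreased; this does not imply $x^*_i(\lambda_1) < x^*_i(\lambda_0)$ or any pointwise ordering of the left block (flat pieces of $\partial f_j$, or redistribution among coordinates, can absorb the change), so the claimed ``sign contradiction in the balance condition'' does not follow. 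The balance condition itself involves $\partial L_i(\cdot;\lambda)$ and $\partial R_{i+1}(\cdot;\lambda)$, which vary with $\lambda$ through the entire left and right sub-chains, and you establish no invariant for how these message subdifferentials evolve. Contrast this with what the paper does at exactly this point: it fixes the maximal fused block $\{j,\ldots,j+k\}$ at $\lambda_0$, notes that the internal dual variables are \emph{uniquely} determined by the invertible tridiagonal system (\ref{eqn: succ-diff-mat}), so that dual feasibility reduces to the consecutive subgradient differences lying in intervals (\ref{eqn: succ-diff_range}) that scale linearly with $\lambda$; then, in the unbalanced cases $s_{j+k}-s_{j-1} = \pm 2$, the summed equation (\ref{eqn: sum_subgradients}) forces the block's total subgradient to change by exactly $2(\lambda-\lambda_0)$, and monotonicity of subgradients guarantees each \emph{individual} subgradient moves by at most $2(\lambda-\lambda_0)$, so every interval condition still holds at the shifted common value. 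That quantitative, within-block certificate construction is the missing core of your argument; neither total-variation contraction nor aggregate subgradient monotonicity can substitute for it.
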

\begin{proof}
At $\lambda_0$, suppose we have a stretch of joined coordinates $j, j+1, \ldots, j+k$ that include $i$ and $i+1$ such that $x^* = x^*_{j}(\lambda_0) = x^*_{j+1}(\lambda_0) = \ldots = x^*_{j+k}(\lambda_0)$. The certificate for a solution to be optimal for FL (\ref{prob: fused-lasso-lambda}) is that the set of (partial) sub-gradients with regard to each coordinate contains 0. Thus, for $\lambda_0$, there exist values of $s_{j-1}, s_j, \ldots, s_{j+k}$ such that, together with $x^*$, make the following sub-gradient equations hold:
\begin{equation}\label{eqn: subgradients}
\partial f_{i'}(x) + \lambda(s_{i'} - s_{i'-1}) = 0\ (i' = j,\ldots,j+k)
\end{equation}
where $\partial f_{i'}(x)$ is a sub-gradient value of $f_{i'}$ at $x$, and $s_{i'}$ has the constraints that:
\begin{equation}\label{eqn: s_constraints}
\begin{cases}
s_{i'} = 1,\ \text{if}\ x_{i'}(\lambda_0) > x_{i'+1}(\lambda_0) \\
s_{i'} \in [-1,1],\ \text{if}\ x_{i'}(\lambda_0) = x_{i'+1}(\lambda_0) \\
s_{i'} = -1,\ \text{if}\ x_{i'}(\lambda_0) < x_{i'+1}(\lambda_0).
\end{cases}
\end{equation}
For notation convenience, we let $s_0 = s_n = 0$.

Summing up the equations in (\ref{eqn: subgradients}), we have
\begin{equation}\label{eqn: sum_subgradients}
\sum_{i'=j}^{j+k}\partial f_{i'}(x) + \lambda(s_{j+k} - s_{j-1}) = 0.
\end{equation}
Note that $s_{j+k}, s_{j-1} \in \{-1, 1\}$, and they remain constants as long as the group of coordinates $\{j, j+1,\ldots, j+k\}$ do not merge with the adjacent ones.

On the other hand, taking pairwise differences of equations in (\ref{eqn: subgradients}), we have:
\begin{equation}\label{eqn: succ-diff}
\partial f_{i'+1}(x) - \partial f_{i'}(x) + \lambda(s_{i'+1} - 2s_{i'} + s_{i'-1}) = 0\ (i' = j,\ldots,j+k-1).
\end{equation}
The values of $s_j, s_{j+1}, \ldots, s_{j+k-1}$ satisfy the following equations:
\begin{equation}\label{eqn: succ-diff-mat}
\boldsymbol{A}s = \frac{1}{\lambda}\Delta f + c,
\end{equation}
where
\begin{align*}
&\boldsymbol{A} = \begin{bmatrix}
2 & -1 & 0 & \ldots & 0 & 0 & 0 \\
-1 & 2 & -1 & \ldots & 0 & 0 & 0 \\
\ldots \\
0 & 0 & 0 &\ldots & 0 & -1 & 2
\end{bmatrix},\\
&s = (s_j, \ldots, s_{j+k-1})^T,\\
&\Delta f = (\partial f_{j+1}(x) - \partial f_{j}(x), \ldots, \partial f_{j+k}(x) - \partial f_{j+k-1}(x))^T,\\
&c = (s_{j-1}, 0, \ldots, 0, s_{j+k}).
\end{align*}
Since $\boldsymbol{A}$ is invertible, for any value $\lambda \geq \lambda_0$, the value of $s$ is uniquely determined by $\Delta f$ ($c$ is constant). And the value of $s$ satisfy the constraints (\ref{eqn: s_constraints}) if and only if the elements of $\Delta f$ are in the following range:
\begin{equation}\label{eqn: succ-diff_range}
\begin{cases}
\partial f_{j+1}(x) - \partial f_j(x) \in [\lambda(-3-s_{j-1}), \lambda(3-s_{j-1})]\\
\partial f_{i'+1}(x) - \partial f_{i'}(x) \in [-4\lambda, 4\lambda]\ (i' = j+1,\ldots,j+k-2)\\
\partial f_{j+k}(x) - \partial f_{j+k-1}(x) \in [\lambda(-3-s_{j+k}), \lambda(3-s_{j+k})].
\end{cases}
\end{equation}

We consider three cases depending on the values of $s_{j+k}$ and $s_{j-1}$:
\begin{enumerate}
\item $s_{j+k} - s_{j-1} = 0$. This includes the cases $s_{j+k} = s_{j-1} \in \{-1, 0, 1\}$ (where the case of equaling to $0$ is for the boundary case of $j = 1$ and $j + k = n$). W.o.l.g., we assume $s_{j+k} = s_{j-1} = 1$. For any $\lambda \geq \lambda_0$, the equation (\ref{eqn: sum_subgradients}) holds true for $x^*$. And $x^*$ satisfies
\begin{equation*}
\begin{cases}
\partial f_{j+1}(x^*) - \partial f_j(x^*) \in [-4\lambda_0, 2\lambda_0] \subseteq [-4\lambda, 2\lambda]\\
\partial f_{i'+1}(x^*) - \partial f_{i'}(x^*) \in [-4\lambda_0, 4\lambda_0] \subseteq [-4\lambda, 4\lambda]\ (i' = j+1,\ldots,j+k-2)\\
\partial f_{j+k}(x^*) - \partial f_{j+k-1}(x^*) \in [-4\lambda_0, 2\lambda_0] \subseteq [-4\lambda, 2\lambda].
\end{cases}
\end{equation*}
Hence $x^*$ remains an optimal solution for variables $x_j, \ldots, x_{j+k}$ for $\lambda \geq \lambda_0$.

\item $s_{j+k} - s_{j-1} = 2$. This corresponds to the case where $s_{j+k} = 1$ and $s_{j-1} = -1$. As $\lambda$ increases from $\lambda_0$, the term $\sum_{i'=j}^{j+k}\partial f_{i'}(x^*)$ should decrease in order to satisfy equation (\ref{eqn: sum_subgradients}). We prove that there exists a value $x^*_{\ell} \leq x^*$ that is optimal for variables $x_j, \ldots, x_{j+k}$ for $\lambda \geq \lambda_0$. This is equivalent to proving that
    \begin{equation}\label{eqn: succ-diff_equiv}
    \begin{cases}
    \partial f_{j+1}(x^*_\ell) - \partial f_j(x^*_\ell) \in [-2\lambda, 4\lambda]\\
    \partial f_{i'+1}(x^*_\ell) - \partial f_{i'}(x^*_\ell) \in [-4\lambda, 4\lambda]\ (i' = j+1,\ldots,j+k-2)\\
    \partial f_{j+k}(x^*_\ell) - \partial f_{j+k-1}(x^*_\ell) \in [-4\lambda, 2\lambda]
    \end{cases}
    \end{equation}

By equation (\ref{eqn: sum_subgradients}), the value of $x^*_\ell$ satisfies $\sum_{i'=j}^{j+k}\partial f_{i'}(x^*) - \sum_{i'=j}^{j+k}\partial f_{i'}(x^*_\ell) = 2(\lambda - \lambda_0)$. One can rewrite the left hand side of the relations in (\ref{eqn: succ-diff_equiv}) as
    \begin{equation}\label{eqn: partition-subtraction}
    \begin{split}
    &\quad \partial f_{i'+1}(x^*_\ell) - \partial f_{i'}(x^*_\ell) \\
    &= (\partial f_{i'+1}(x^*) - \partial f_{i'}(x^*)) + (\partial f_{i'+1}(x^*_\ell) - \partial f_{i'+1}(x^*)) - (\partial f_{i'}(x^*_\ell) - \partial f_{i'}(x^*))\ (i' = j,\ldots,j+k-1).
    \end{split}
    \end{equation}
    As $x^*$ is optimal for $\lambda_0$, we have
    \begin{equation}\label{eqn: succ-diff_equiv1}
    \begin{cases}
    \partial f_{j+1}(x^*) - \partial f_j(x^*) \in [-2\lambda_0, 4\lambda_0]\\
    \partial f_{i'+1}(x^*) - \partial f_{i'}(x^*) \in [-4\lambda_0, 4\lambda_0]\ (i' = j+1,\ldots,j+k-2)\\
    \partial f_{j+k}(x^*) - \partial f_{j+k-1}(x^*) \in [-4\lambda_0, 2\lambda_0].
    \end{cases}
    \end{equation}
    On the other hand, the convexity of $f_{i'}$ functions implies that
    \begin{equation*}
    0 \leq \partial f_{i'}(x^*) - \partial f_{i'}(x^*_\ell) \leq 2(\lambda - \lambda_0)\ (i' = j,\ldots,j+k).
    \end{equation*}
    Thus we have
    \begin{equation}\label{eqn: succ-diff_equiv2}
    (\partial f_{i'+1}(x^*_\ell) - \partial f_{i'+1}(x^*)) - (\partial f_{i'}(x^*_\ell) - \partial f_{i'}(x^*)) \in [-2(\lambda-\lambda_0), 2(\lambda-\lambda_0)]\ (i' = j, \ldots, j+k-1).
    \end{equation}
    Finally we add the inclusion relations (\ref{eqn: succ-diff_equiv1}) and (\ref{eqn: succ-diff_equiv2}) to (\ref{eqn: partition-subtraction}), implying that (\ref{eqn: succ-diff_equiv}) hold.

\item $s_{j+k} - s_{j-1} = -2$. This corresponds to the case where $s_{j+k} = -1$ and $s_{j-1} = 1$. This case is symmetric to case 2, and thus can be proved that there exists an $x^*_u \geq x^*$ that is optimal for variables $x_j, \ldots, x_{j+k}$ for $\lambda \geq \lambda_0$.
\end{enumerate}
\end{proof}

As there are $n$ coordinates in FL (\ref{prob: fused-lasso-lambda}), we immediately have
\begin{Cor}\label{cor: fusing-num}
The number of fusing $\lambda$ values for FL (\ref{prob: fused-lasso-lambda}) is at most $n-1$.
\end{Cor}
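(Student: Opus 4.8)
The plan is to set up an injective correspondence between fusing $\lambda$ values and the $n-1$ adjacent coordinate pairs $(i,i+1)$, and then simply count. The crucial input is Theorem \ref{thm: fusing-points}, which guarantees that once $x^*_i$ and $x^*_{i+1}$ coincide at some $\lambda_0$, they coincide for every $\lambda \geq \lambda_0$. This irreversibility of fusing is exactly what lets us charge each fusing value to a pair that newly fuses there without ever charging that pair again.

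First I would fix an arbitrary fusing $\lambda$ value $\lambda_0$. By the definition of a fusing $\lambda$ value, there is at least one index $i \in \{1,\ldots,n-1\}$ with $x^*_i(\lambda) \neq x^*_{i+1}(\lambda)$ for all $\lambda < \lambda_0$ and $x^*_i(\lambda) = x^*_{i+1}(\lambda)$ for all $\lambda \geq \lambda_0$; call such an index a \emph{witness} for $\lambda_0$. Thus every fusing value admits at least one witness among the $n-1$ adjacent pairs, and I may define a map $\phi$ sending each fusing value to one (say, the smallest) of its witnessing indices.

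Next I would show $\phi$ is injective. Suppose a single pair $(i,i+1)$ witnessed two distinct fusing values $\lambda_0 < \lambda_1$. Witnessing $\lambda_0$ forces $x^*_i(\lambda)=x^*_{i+1}(\lambda)$ for all $\lambda \geq \lambda_0$, and in particular for $\lambda$ in the interval $[\lambda_0,\lambda_1)$; but witnessing $\lambda_1$ requires $x^*_i(\lambda)\neq x^*_{i+1}(\lambda)$ for all $\lambda < \lambda_1$, a contradiction. This is precisely where Theorem \ref{thm: fusing-points} does the work, since it is what makes the witness condition self-consistent (a pair cannot fuse, separate, and refuse). Hence no pair witnesses two different fusing values, so $\phi$ maps distinct fusing values to distinct indices in $\{1,\ldots,n-1\}$.

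Finally, injectivity of $\phi$ into a set of size $n-1$ yields at most $n-1$ fusing values, which is the claim. The only point needing care --- and the main, though mild, obstacle --- is the bookkeeping when several pairs fuse at the very same $\lambda_0$: this does not inflate the count, because such a $\lambda_0$ still occupies only one value and I select a single witness for it, keeping $\phi$ injective. No pair is ever double-counted precisely because fusing is permanent by Theorem \ref{thm: fusing-points}.
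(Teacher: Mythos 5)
Your proof is correct and matches the paper's reasoning: the paper derives this corollary immediately from Theorem \ref{thm: fusing-points} by charging each fusing $\lambda$ value to one of the $n-1$ adjacent pairs, exactly the injective-witness argument you spell out. Your write-up simply makes explicit the bookkeeping (injectivity, simultaneous fusings) that the paper leaves implicit.
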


\section{Bound the number of different solutions between two adjacent fusing $\lambda$ values}\label{sect: bound-between-fusing-values}
Given the concept of fusing $\lambda$ values, for any $\lambda \geq 0$, an optimal solution to FL (\ref{prob: fused-lasso-lambda}) can be partitioned into groups of adjacent coordinates, where variables in a same group have identical optimal value. The $O(n)$ fusing $\lambda$ values act as anchors on the interval $[0, +\infty)$ that cut the interval into $O(n)$ sub-intervals. Inside each sub-interval, as $\lambda$ varies, the group partition is not changed, yet the exact identical optimal values for each group. In this section we provide uniform bounds on the number of different optimal solutions for PL-FL (\ref{prob: PL-FL-lambda}) and FL (\ref{prob: fused-lasso-lambda}) as $\lambda$ increases in any sub-interval. These bounds multiplied by $O(n)$ would bound the total number of different solutions to both problems for $\lambda \in [0, +\infty)$.

\subsection{PL-FL (\ref{prob: PL-FL-lambda})}
%Our algorithm has time complexity $O((n + \log\lambda_{\max})q\log q)$ to solve the path of solutions for all $\lambda \in [0, \lambda_{\max}]$. As in most practical applications, we have $n > \log\lambda_{\max}$, the above complexity can be simplified as $O(nq\log q)$.

The characterization of the bound on the number of different solutions for PL-FL (\ref{prob: PL-FL-lambda}) is inspired by the algorithm in \cite{HL17}, where an efficient algorithm is presented to solve a generalization of PL-FL, called \emph{Generalized Isotonic Median Regression} (GIMR) problem \cite{HL17}:
\begin{equation}\label{prob: GIMR}
(\text{GIMR})\ \min\ \sum_{i=1}^n f^{pl}_i(x_i) + \sum_{i=1}^{n-1}d_{i,i+1}(x_i - x_{i+1})_+ + \sum_{i=1}^{n-1}d_{i+1,i}(x_{i+1} - x_i)_+.
\end{equation}
$(x)_+ = x$ if $x > 0$ and $0$ otherwise. The $d_{i,i+1}$ and $d_{i+1,i}$ are fixed nonnegative coefficients. GIMR generalizes PL-FL in that the absolute difference $|x_i - x_{i+1}|$ is split into two terms, each with different coefficients. Hochbaum and Lu in \cite{HL17} give an efficient $O(q\log n)$ algorithm (called HL-algorithm hereafter) to solve GIMR (\ref{prob: GIMR}) for any given $d_{i,i+1}$ and $d_{i+1,i}$.

\subsubsection{Overview of HL-algorithm for GIMR}\label{sect: recap}
In this section, we give an overview of the algorithm for GIMR (\ref{prob: GIMR}) in \cite{HL17}. We first introduce the notation and preliminaries necessary to present the algorithm. These notation and preliminaries are used throughout the paper. Key results of HL-algorithm in \cite{HL17} then follow.

\paragraph{Notation and Preliminaries}
GIMR (\ref{prob: GIMR}) can be viewed as defined on a bi-directional path (bi-path) graph $G = (V, A)$ with node set $V = \{1,2,\ldots,n\}$ and $A = \{(i,i+1), (i+1,i)\}_{i=1,\ldots,n-1}$. Each node $i$ in the graph corresponds to variable $x_i$.

Let \emph{interval} $[i,j]$ in bi-path graph $G$ for $i \leq j$ be the subset of $V$, $\{i, i+1,\ldots, j-1,j\}$. If $i = j$, the interval $[i,i]$ is the singleton $i$. The notations $[i,j)$ and $(i,j]$ indicate the intervals $[i,j-1]$ and $[i+1,j]$ respectively. Let $[i,j] = \emptyset$ if $i > j$.

Let the directed $s,t$-graph $G^{st} = (V_{st}, A_{st})$ be associated with graph $G = (V,A)$ such that $V_{st} = V\cup\{s,t\}$ and $A_{st} = A\cup A_s \cup A_t$. The appended node $s$ is called the \emph{source} node and $t$ is called the \emph{sink} node. $A_s = \{(s,i): i\in V\}$ and $A_t = \{(i,t): i\in V\}$ are the respective sets of \emph{source adjacent arcs} and \emph{sink adjacent arcs}. Each arc $(i,j)\in A_{st}$ has an associated nonnegative capacity $c_{i,j}$.

For any two subsets of nodes $V_1,V_2 \subseteq V_{st}$, we let $(V_1, V_2) = \{(i,j)\in A_{st}|i\in V_1, j\in V_2\}$ and $C(V_1, V_2) = \sum_{(i,j)\in (V_1, V_2)}c_{i,j}$.

An \emph{$s,t$-cut} is a partition of $V_{st}$, $(\{s\}\cup S, T\cup \{t\})$, where $T = \bar{S} = V\setminus S$. For simplicity, we refer to an $s,t$-cut partition as $(S,T)$. We refer to $S$ as the \emph{source set} of the cut, excluding $s$. For each node $i\in V$, we define its \emph{status} in graph $G^{st}$ as $status(i) = s$ if $i \in S$ (referred as an \emph{$s$-node}), otherwise $status(i) = t)$ ($i \in T$) (referred as a \emph{$t$-node}).

The \emph{capacity} of a cut $(S,T)$ is defined as $C(\{s\}\cup S,T\cup\{t\})$. A \emph{minimum cut} in $s,t$-graph $G^{st}$ is an $s,t$-cut $(S,T)$ that minimizes $C(\{s\}\cup S,T\cup\{t\})$. Hereafter, any reference to a minimum cut is to the unique minimum $s,t$-cut with the \emph{maximal source set}. That means if there are multiple minimum cuts, then the one selected has a source set that is not contained in any other source set of a minimum cut.

A convex piecewise linear function $f^{pl}_i(x_i)$ is specified by its ascending list  of $q_i$ breakpoints, $a_{i,1} < a_{i,2} < \ldots < a_{i,q_i}$, and the slopes of the $q_i + 1$ linear pieces between every two adjacent breakpoints, denoted by $w_{i,0} < w_{i,1} < \ldots < w_{i,q_i}$.
%We assume $w_{i,0} \leq 0$ and $w_{i,q_i} \geq 0$, which is a sufficient condition to guarantee that PL-FL (\ref{prob: PL-FL-lambda}) has bounded minimum for all $\lambda \geq 0$.
%We assume that the $n$ sets of breakpoints are disjoint and that the total number of breakpoints in the union is $q = \sum_{i=1}^n q_i$.
Let the sorted list of the union of $q$ breakpoints of all the $n$ convex piecewise linear functions be $a_{i_1,j_1} < a_{i_2,j_2} < \ldots < a_{i_q,j_q}$ (w.l.o.g. we assume that the $n$ sets of breakpoints are disjoint, explained in \cite{HL17}), where $a_{i_k,j_k}$, the $k$th breakpoint in the sorted list, is the breakpoint between the $(j_k - 1)$th and the $j_k$th linear pieces of function $f^{pl}_{i_k}(x_{i_k})$.

\paragraph{Algorithm overview}
%we give an overview of the algorithm in \cite{HL15} that solves GIMR (\ref{prob: GIMR}) for any fixed values of $d_{i,i+1}$ and $d_{i+1,i}$.

We construct a \emph{parametric graph} $G^{st}(\alpha) = (V_{st}, A_{st})$ associated with the bi-path graph $G = (V,A)$, for any scalar value $\alpha$. The capacities of arcs $(i,i+1), (i+1,i) \in A$ are $c_{i,i+1} = d_{i,i+1}$ and $c_{i+1,i} = d_{i+1,i}$ respectively. Each arc in $A_s = \{(s,i)\}_{i\in V}$ has capacity $c_{s,i} = \max\{0, -(f^{pl}_i)'(\alpha)\}$ and each arc in $A_t = \{(i,t)\}_{i \in V}$ has capacity $c_{i,t} = \max\{0, (f^{pl}_i)'(\alpha)\}$, where $(f^{pl}_i)'(\alpha)$ is the right sub-gradient of function $f^{pl}_i(\cdot)$ at argument $\alpha$. (One can select instead the left sub-gradient.) Note that for any given value of $\alpha$, either $c_{s,i} = 0$ or $c_{i,t} = 0$.

The link between the minimum cut for any given value of $\alpha$ and the optimal solution to GIMR (\ref{prob: GIMR}) is characterized in the following \emph{threshold theorem} \cite{DH01, HL17}:
\begin{Thm}\label{thm: threshold}\emph{(threshold theorem, Hochbaum \cite{DH01})}.
For any given $\alpha$, let $S^*$ be the maximal source set of the minimum cut in graph $G^{st}(\alpha)$. Then there is an optimal solution $\textbf{x}^*$ to GIMR (\ref{prob: GIMR}) satisfying $x^*_i \geq \alpha$ if $i \in S^*$ and $x^*_i < \alpha$ if $i \in T^*$.
\end{Thm}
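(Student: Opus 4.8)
The plan is to prove the threshold theorem through a \emph{layered} (level-set) decomposition that identifies the GIMR objective, up to an additive constant, with the integral over all threshold levels $\alpha$ of the cut capacity in $G^{st}(\alpha)$. For a candidate solution $\boldsymbol{x}$ and a level $\alpha$, associate the source set $S_\alpha = \{i : x_i \geq \alpha\}$, let $T_\alpha = V\setminus S_\alpha$, and write $y_i(\alpha) = 1$ if $i \in S_\alpha$ and $0$ otherwise. Writing $w_i^-(\alpha) = \max\{0, -(f^{pl}_i)'(\alpha)\}$, the first step is to establish the per-level identity
\begin{equation*}
C(\{s\}\cup S_\alpha,\, T_\alpha \cup \{t\}) = \sum_i w_i^-(\alpha) + \sum_i (f^{pl}_i)'(\alpha)\, y_i(\alpha) + \sum_i \bigl[\, d_{i,i+1}\, y_i(\alpha)(1-y_{i+1}(\alpha)) + d_{i+1,i}\, y_{i+1}(\alpha)(1-y_i(\alpha)) \,\bigr],
\end{equation*}
which follows by rewriting each source/sink contribution via $\max\{0,w\} - \max\{0,-w\} = w$ and by noting that an internal arc $(i,i+1)$ (resp. $(i+1,i)$) is cut exactly when $i\in S_\alpha,\, i+1\in T_\alpha$ (resp. the reverse). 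Integrating over $\alpha$ and using $\int (f^{pl}_i)'(\alpha)\,\mathbf{1}[\alpha \leq x_i]\,d\alpha = f^{pl}_i(x_i) + \text{const}$ together with $(x_i - x_{i+1})_+ = \int y_i(\alpha)(1-y_{i+1}(\alpha))\,d\alpha$ shows that $\int C(\{s\}\cup S_\alpha, T_\alpha \cup\{t\})\,d\alpha$ equals the GIMR objective at $\boldsymbol{x}$ plus a constant independent of $\boldsymbol{x}$.

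Given this decomposition, minimizing the GIMR objective is equivalent to minimizing the cut capacity at each level $\alpha$ \emph{separately}, subject only to the coupling constraint that $y_i(\alpha)$ be non-increasing in $\alpha$ for each $i$ (forced by $S_\alpha$ being a super-level set). Dropping that constraint yields the lower bound given by the integral of the minimum cut capacities; the crux is to exhibit a single $\boldsymbol{x}^*$ whose level sets realize a minimum cut at every $\alpha$ simultaneously, which attains the bound and is hence optimal. The enabling fact is parametric monotonicity: since each $f^{pl}_i$ is convex, $(f^{pl}_i)'(\alpha)$ is non-decreasing in $\alpha$, so the source capacities $c_{s,i}(\alpha) = \max\{0, -(f^{pl}_i)'(\alpha)\}$ are non-increasing and the sink capacities $c_{i,t}(\alpha) = \max\{0, (f^{pl}_i)'(\alpha)\}$ are non-decreasing. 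By the standard nested-cut property of monotone parametric cuts, together with the maximal-source-set selection rule, the maximal source sets satisfy $S^*(\alpha') \subseteq S^*(\alpha)$ whenever $\alpha' > \alpha$.

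I would then define $x^*_i = \sup\{\alpha : i \in S^*(\alpha)\}$. Nestedness guarantees that $\{\alpha : i \in S^*(\alpha)\}$ is a downward-closed ray, and the maximal-source-set convention (using the right sub-gradient) makes it right-closed, so that for every level one has $i \in S^*(\alpha) \iff x^*_i \geq \alpha$ and hence $i \in T^*(\alpha) \iff x^*_i < \alpha$. The level sets of this $\boldsymbol{x}^*$ coincide with the chosen minimum cuts at every $\alpha$, so by the first paragraph it attains the decoupled lower bound and is optimal for GIMR; specializing to the given $\alpha$ yields exactly the claimed threshold property.

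The main obstacle is the careful handling of the non-uniqueness of minimum cuts and the behavior at levels where $(f^{pl}_i)'(\alpha)$ crosses zero (so both $c_{s,i}$ and $c_{i,t}$ vanish), together with the breakpoint values of the piecewise-linear $f^{pl}_i$ where $(f^{pl}_i)'$ is a step function. These are precisely the places where the integral decomposition and the ``$\geq$ versus $<$'' boundary of the threshold must be pinned down, and the maximal-source-set tie-breaking rule is what I would use to resolve them consistently, verifying that it makes $\{\alpha : i \in S^*(\alpha)\}$ right-closed and that the resulting level sets are measurable and nested, so that the integral argument is rigorous.
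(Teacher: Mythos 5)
The paper does not actually prove this theorem; it is quoted from Hochbaum \cite{DH01}, so your proposal supplies a proof where the paper only cites one. Your strategy --- the per-level cut-capacity identity, integration over $\alpha$ to identify the GIMR objective with the integral of level cut capacities up to an additive constant, the decoupled lower bound, and the construction of $\boldsymbol{x}^*$ from the nested maximal source sets --- is precisely the classical threshold-decomposition argument, and those steps are sound: the per-level identity is correct, the additive constant is finite once you integrate over a bounded interval containing all breakpoints (the $\pm M$ end slopes confine every optimum to that interval), and it is harmless that the level sets $\{i : x^*_i \geq \alpha\}$ may disagree with $S^*(\alpha)$ at finitely many values of $\alpha$, since a null set does not affect the integral.

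The genuine gap is in the step you yourself flagged as the crux: the claim that the maximal-source-set rule together with the \emph{right} sub-gradient makes $\{\alpha : i \in S^*(\alpha)\}$ right-closed. This is exactly backwards. With right sub-gradients the capacities $c_{s,i}(\alpha)$ and $c_{i,t}(\alpha)$ are right-continuous step functions, so $S^*(\cdot)$ is constant on each interval $[a_k, a_{k+1})$ between consecutive breakpoints, and by nestedness $\{\alpha : i \in S^*(\alpha)\}$ is a right-\emph{open} ray, so the supremum defining $x^*_i$ need not be attained. Concretely, take $n=1$, $f^{pl}_1(x) = |x|$, no pairwise terms, and $\alpha = 0$: the right sub-gradient at $0$ is $+1$, so $c_{s,1} = 0$ and $c_{1,t} = 1$, the unique (hence maximal) minimum-cut source set is empty, yet the unique optimum is $x^*_1 = 0$, which is not $< 0$. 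Thus under the right-sub-gradient convention your construction only yields the dichotomy $x^*_i > \alpha$ on $S^*$ and $x^*_i \leq \alpha$ on $T^*$ --- and indeed the theorem as literally stated (with $\geq$ and $<$) fails at breakpoint levels under that convention, so no argument could rescue it there. The repair is to build $G^{st}(\alpha)$ with the \emph{left} sub-gradient, which the paper parenthetically permits: then the capacities are left-continuous, $S^*(\cdot)$ is constant on intervals $(a_k, a_{k+1}]$, the rays $\{\alpha : i \in S^*(\alpha)\}$ are right-closed, your supremum is attained, the level sets of $\boldsymbol{x}^*$ coincide with $S^*(\alpha)$ for every $\alpha$, and the $\geq$/$<$ statement follows verbatim by your argument.
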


An important property of $G^{st}(\alpha)$ is that the capacities of source adjacent arcs are nonincreasing functions of $\alpha$, the capacities of sink adjacent arcs are nondecreasing functions of $\alpha$, and the capacities of all the other arcs are constants. This implies the following \emph{nested cut property}:
\begin{Lem}\label{lem: nestedness}\emph{(nested cut property \cite{GGT89, DH01, DH08})}.
For any two parameter values $\alpha_1 \leq \alpha_2$, let $S_{\alpha_1}$ and $S_{\alpha_2}$ be the respective maximal source set of the minimum cuts of $G^{st}(\alpha_1)$ and $G^{st}(\alpha_2)$, then $S_{\alpha_1} \supseteq S_{\alpha_2}$.
\end{Lem}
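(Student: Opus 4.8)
The plan is to combine the submodularity of the cut-capacity function at a \emph{fixed} parameter with the monotone dependence of the arc capacities on $\alpha$, and then invoke the maximality of the chosen source set. Write $C_\alpha(S)$ for the capacity $C(\{s\}\cup S, T\cup\{t\})$ of the cut $(S,T)$ in the parametric graph $G^{st}(\alpha)$. First I would decompose $C_\alpha(S)$ into three parts: the source-adjacent arcs that cross the cut (those $(s,i)$ with $i \notin S$, contributing $\sum_{i \notin S} c_{s,i}(\alpha)$), the sink-adjacent arcs that cross (those $(i,t)$ with $i \in S$, contributing $\sum_{i \in S} c_{i,t}(\alpha)$), and the internal arcs of $A$ that cross, whose total is independent of $\alpha$. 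This decomposition isolates exactly where the parameter enters.

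The second step is to record two structural facts. First, for every fixed $\alpha$ the map $S \mapsto C_\alpha(S)$ is submodular, which is the standard submodularity of graph-cut functions; equivalently, for any $S_1, S_2$ we have $C_\alpha(S_1 \cup S_2) + C_\alpha(S_1 \cap S_2) \le C_\alpha(S_1) + C_\alpha(S_2)$. Second, for any set $R$ disjoint from a base set $S$, the marginal cost $C_\alpha(S \cup R) - C_\alpha(S)$ of moving $R$ to the source side is \emph{nondecreasing} in $\alpha$: by the decomposition, its $\alpha$-dependent part equals $\sum_{i \in R}\big(c_{i,t}(\alpha) - c_{s,i}(\alpha)\big)$, and since each $c_{i,t}(\alpha)$ is nondecreasing and each $c_{s,i}(\alpha)$ is nonincreasing, this difference only grows with $\alpha$, while the internal contribution of adding $R$ is constant in $\alpha$.

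With these in hand, set $S_1 = S_{\alpha_1}$, $S_2 = S_{\alpha_2}$, $R = S_2 \setminus S_1$, and note that $(S_1 \cap S_2) \cup R = S_2$ and $S_1 \cup R = S_1 \cup S_2$. I would then chain the inequalities
\[
0 \;\le\; C_{\alpha_1}(S_1 \cup S_2) - C_{\alpha_1}(S_1) \;\le\; C_{\alpha_1}(S_2) - C_{\alpha_1}(S_1 \cap S_2) \;\le\; C_{\alpha_2}(S_2) - C_{\alpha_2}(S_1 \cap S_2) \;\le\; 0,
\]
where the first inequality is optimality of $S_1$ at $\alpha_1$, the second is submodularity at $\alpha_1$, the third is the cross-parameter monotonicity applied with base $S_1 \cap S_2$ as the parameter increases from $\alpha_1$ to $\alpha_2$, and the last is optimality of $S_2$ at $\alpha_2$. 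Forcing equality throughout yields $C_{\alpha_1}(S_1 \cup S_2) = C_{\alpha_1}(S_1)$, so $S_1 \cup S_2$ is also a minimum cut at $\alpha_1$.

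The final step invokes maximality: since $S_1 \cup S_2 \supseteq S_1$ is a minimum-cut source set at $\alpha_1$ and $S_1$ is, by definition, the maximal such source set, we must have $S_1 \cup S_2 = S_1$, i.e. $S_2 \subseteq S_1$, which is exactly $S_{\alpha_1} \supseteq S_{\alpha_2}$. I expect the main obstacle to be the correct assembly of this four-term chain — in particular getting the monotonicity fact pointed in the right direction (the marginal cost of \emph{enlarging} the source set is what increases with $\alpha$) and choosing the base set to be $S_1 \cap S_2$ so that adjoining $R$ reproduces $S_2$. A secondary subtlety is checking that the constant internal-arc contributions cancel in the cross-parameter comparison, and that it is maximality, rather than mere optimality, that finally rules out $R \neq \emptyset$.
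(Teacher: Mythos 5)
Your proof is correct. The paper itself does not prove Lemma \ref{lem: nestedness} --- it states it as a known result and cites \cite{GGT89, DH01, DH08} --- and your argument (decompose the cut capacity to isolate the $\alpha$-dependence, combine submodularity of $S \mapsto C_\alpha(S)$ with the fact that the marginal cost of enlarging the source set is nondecreasing in $\alpha$, then force equality through the four-term chain and invoke maximality of $S_{\alpha_1}$) is precisely the classical argument used in those references, with all steps, including the choice of base set $S_1 \cap S_2$ and the final appeal to maximality rather than mere optimality, assembled correctly.
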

We remark that the above threshold theorem and nested cut property both work not only for GIMR (\ref{prob: GIMR}) defined on a bi-path graph, but also for an generalization of GIMR that is defined on arbitrary (directed) graphs.

Based on the threshold theorem, it is sufficient to solve the minimum cuts in the parametric graph $G^{st}(\alpha)$ for all values of $\alpha$, in order to solve GIMR (\ref{prob: GIMR}). In piecewise linear functions, the right sub-gradients for $\alpha$ values between any two adjacent breakpoints are constant. Thus the source and sink adjacent arc capacities remain constant for $\alpha$ between any two adjacent breakpoint values in the sorted list of breakpoints over all the $n$ convex piecewise linear functions. Therefore the minimum cuts in $G^{st}(\alpha)$ remain unchanged as capacities of all the arcs in the parametric graph are unchanged. Thus we have:
\begin{Lem}
The minimum cuts in $G^{st}(\alpha)$ remain unchanged for $\alpha$ assuming any value between any two adjacent breakpoints in the sorted list of breakpoints of all the $n$ convex piecewise linear functions, $\{f^{pl}_i(x_i)\}_{i=1,\ldots,n}$.
\end{Lem}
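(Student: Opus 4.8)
The plan is to observe that the minimum cut of $G^{st}(\alpha)$ is determined entirely by the arc capacities, so it suffices to show that every arc capacity is constant as $\alpha$ ranges over an open interval between two adjacent breakpoints of the combined sorted list $a_{i_1,j_1} < a_{i_2,j_2} < \ldots < a_{i_q,j_q}$. The capacities split into two kinds. The path arcs $(i,i+1)$ and $(i+1,i)$ carry the fixed capacities $c_{i,i+1} = d_{i,i+1}$ and $c_{i+1,i} = d_{i+1,i}$, which do not depend on $\alpha$ at all. The only $\alpha$-dependent capacities are the source-adjacent $c_{s,i} = \max\{0, -(f^{pl}_i)'(\alpha)\}$ and the sink-adjacent $c_{i,t} = \max\{0, (f^{pl}_i)'(\alpha)\}$, each of which is a function of the single quantity $(f^{pl}_i)'(\alpha)$, the right sub-gradient of $f^{pl}_i$ at $\alpha$.

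The crux therefore reduces to showing that each right sub-gradient $(f^{pl}_i)'(\alpha)$ is constant on the open interval in question. First I would recall that for a convex piecewise linear $f^{pl}_i$ with breakpoints $a_{i,1} < \ldots < a_{i,q_i}$ and piece slopes $w_{i,0} < \ldots < w_{i,q_i}$, the right sub-gradient is exactly the slope of the linear piece immediately to the right of $\alpha$; in particular $(f^{pl}_i)'(\alpha) = w_{i,j}$ for every $\alpha \in [a_{i,j}, a_{i,j+1})$, so it is a step function whose value can change only at a breakpoint of $f^{pl}_i$ itself. Now fix any open interval $(a_{i_k,j_k}, a_{i_{k+1},j_{k+1}})$ determined by two consecutive entries of the combined sorted list. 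By construction of that list this open interval contains no breakpoint of any of the $n$ functions, so for each $i$ the argument $\alpha$ stays inside a single linear piece of $f^{pl}_i$, and hence $(f^{pl}_i)'(\alpha)$ takes one constant value throughout the interval.

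Combining the two observations, all source- and sink-adjacent capacities are constant on the open interval while the remaining capacities are constant everywhere, so the entire capacity assignment of $G^{st}(\alpha)$ is identical for all $\alpha$ in the interval. Since a minimum cut, and in particular the unique minimum cut with maximal source set under the tie-breaking convention fixed earlier, is a function of the capacities alone, the minimum cut coincides for every such $\alpha$, which is the claim.

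I expect no serious obstacle: the statement is essentially a bookkeeping consequence of the piecewise-linearity of the fidelity terms together with the fact that only the source- and sink-adjacent arcs absorb the $\alpha$-dependence. The only point meriting a little care is the behaviour at the breakpoints themselves, namely whether the capacity there is assigned using the left or the right sub-gradient; but since the lemma concerns values of $\alpha$ lying strictly between two adjacent breakpoints, this ambiguity never arises and the convention chosen for $(f^{pl}_i)'$ is immaterial.
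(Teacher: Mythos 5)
Your proposal is correct and follows essentially the same argument as the paper: the $\alpha$-dependence enters only through the right sub-gradients $(f^{pl}_i)'(\alpha)$, which are constant between adjacent breakpoints of the combined sorted list, so all arc capacities --- and hence the (maximal-source-set) minimum cut --- are unchanged on each such interval. Your added remarks on the path-arc capacities being $\alpha$-independent and on the left/right sub-gradient convention being immaterial strictly between breakpoints are just more explicit bookkeeping of the same reasoning.
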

Thus the values of $\alpha$ to be considered can be restricted to the set of breakpoints of the $n$ convex piecewise linear functions, $\{f^{pl}_i(x_i)\}_{i=1,\ldots,n}$. The HL-algorithm solves GIMR (\ref{prob: GIMR}) by efficiently computing the minimum cuts of $G^{st}(\alpha)$ for subsequent values of $\alpha$ in the ascending list of breakpoints, $a_{i_1,j_1} < a_{i_2,j_2} < \ldots < a_{i_q,j_q}$.

Let $G_k$, for $k \geq 1$, be the parametric graph $G^{st}(\alpha)$ for $\alpha$ equal to $a_{i_k,j_k}$, \ie, $G_k = G^{st}(a_{i_k,j_k})$. For $k = 0$, we let $G_0 = G^{st}(a_{i_1,j_1} - \epsilon)$ for a small value of $\epsilon > 0$. Let $(S_k, T_k)$ be the minimum cut in $G_k$, for $k \geq 0$. Recall that $S_k$ is the maximal source set. The nested cut property (Lemma \ref{lem: nestedness}) implies that $S_k \supseteq S_{k+1}$ for $k \geq 0$. Based on the threshold theorem and the nested cut property, we know that for each node $j = 1,\ldots,n$, $x^*_j = a_{i_k, j_k}$ for the index $k$ such that $j \in S_{k-1}$ and $j \in T_k$.

The HL-algorithm generates the respective minimum cuts of graphs $G_k$ in increasing order of $k$. It is shown in \cite{HL17} that $(S_k, T_k)$ can be computed from $(S_{k-1}, T_{k-1})$ in time $O(\log n)$. Hence the total complexity of the algorithm is $O(q\log n)$. The efficiency of updating $(S_k, T_k)$ from $(S_{k-1}, T_{k-1})$ is based on the following key results.

The update of the graph from $G_{k-1}$ to $G_k$ is simple as it only involves a change in the capacities of the source and sink adjacent arcs of $i_k$, $(s,i_k)$ and $(i_k,t)$. Recall that from $G_{k-1}$ to $G_k$, the right sub-gradient of $f^{pl}_{i_k}$ changes from $w_{i_k,j_k-1}$ to $w_{i_k,j_k}$. Thus the change of $c_{s,i_k}$ and $c_{i_k,t}$ from $G_{k-1}$ to $G_k$ depends on the signs of $w_{i_k, j_k-1}$ and $w_{i_k,j_k}$. There are three possible cases:\\
Case 1. $w_{i_k, j_k-1} \leq 0$, $w_{i_k, j_k} \leq 0$: $c_{s,i_k}$ is changed from $-w_{i_k,j_k-1}$ to $-w_{i_k,j_k}$.\\
Case 2. $w_{i_k, j_k-1} \leq 0$, $w_{i_k,j_k} \geq 0$: $c_{s,i_k}$ is changed from $-w_{i_k,j_k-1}$ to 0 and $c_{i_k,t}$ is changed from 0 to $w_{i_k,j_k}$.\\
Case 3. $w_{i_k, j_k-1} \geq 0$, $w_{i_k, j_k} \geq 0$: $c_{i_k,t}$ is updated from $w_{i_k,j_k-1}$ to $w_{i_k, j_k}$.\\
Note that the update from $G_{k-1}$ to $G_k$ does not involve the values of $d_{i,i+1}$ and $d_{i+1,i}$ in GIMR (\ref{prob: GIMR}), thus nor does it involve the value of $\lambda$ in PL-FL (\ref{prob: PL-FL-lambda}).

Based on the nested cut property, for any node $i$, if $i \in T_{k-1}$, then $i$ remains in the sink set for all subsequent cuts, and in particular $i \in T_k$. Hence an update of the minimum cut in $G_k$ from the minimum cut in $G_{k-1}$ can only involve shifting some nodes from source set $S_{k-1}$ to sink set $T_k$. Formally, the relation between $(S_{k-1}, T_{k-1})$ and $(S_k, T_k)$ is characterized in the following two lemmas:
\begin{Lem}\label{lem: sink}
If $i_k \in T_{k-1}$, then $(S_k, T_k) = (S_{k-1}, T_{k-1})$.
\end{Lem}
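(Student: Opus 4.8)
The plan is to show that the cut $(S_{k-1},T_{k-1})$ which is the maximal-source-set minimum cut of $G_{k-1}$ survives unchanged as the maximal-source-set minimum cut of $G_k$. The starting point is the observation already recorded above: passing from $G_{k-1}$ to $G_k$ alters only the two arcs incident to $i_k$, and by the three cases enumerated (equivalently, by the general monotonicity of arc capacities in $\alpha$) the capacity $c_{s,i_k}$ is nonincreasing while $c_{i_k,t}$ is nondecreasing. I would write $\delta_s \le 0$ for the change in $c_{s,i_k}$ and $\delta_t \ge 0$ for the change in $c_{i_k,t}$; the sign conditions are what make the argument work.

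First I would track, for an arbitrary cut $(S,T)$, how its capacity changes. The arc $(s,i_k)$ crosses the cut exactly when $i_k\in T$, and $(i_k,t)$ crosses exactly when $i_k\in S$. Hence the capacity of $(S,T)$ increases by $\delta_s$ if $i_k\in T$ and by $\delta_t$ if $i_k\in S$, and no other arc contributes a change. In particular, since $i_k\in T_{k-1}$, the capacity of $(S_{k-1},T_{k-1})$ changes by $\delta_s\le 0$, so it can only decrease.

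Next comes the minimality comparison. Let $m$ denote the capacity of $(S_{k-1},T_{k-1})$ in $G_{k-1}$, which is minimum there. For any competing cut $(S,T)$: if $i_k\in T$, its $G_k$-capacity equals its $G_{k-1}$-capacity plus $\delta_s$, hence at least $m+\delta_s$, which is exactly the $G_k$-capacity of $(S_{k-1},T_{k-1})$; if $i_k\in S$, its $G_k$-capacity equals its $G_{k-1}$-capacity plus $\delta_t\ge 0$, hence at least $m\ge m+\delta_s$. In either case the $G_k$-capacity of $(S,T)$ is at least that of $(S_{k-1},T_{k-1})$, so $(S_{k-1},T_{k-1})$ is a minimum cut of $G_k$.

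The remaining point, and the one needing the most care, is that the algorithm tracks the minimum cut with \emph{maximal} source set, so I must argue $(S_{k-1},T_{k-1})$ is not merely a minimum cut of $G_k$ but the maximal-source-set one. Here I would invoke the nested cut property (Lemma~\ref{lem: nestedness}) with $\alpha_1=a_{i_{k-1},j_{k-1}}\le \alpha_2=a_{i_k,j_k}$, giving $S_k\subseteq S_{k-1}$. Since $S_{k-1}$ is itself the source set of a minimum cut of $G_k$ (just established) and $S_k$ is maximal among such source sets, we also get $S_{k-1}\subseteq S_k$, forcing $S_k=S_{k-1}$ and hence $(S_k,T_k)=(S_{k-1},T_{k-1})$. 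I expect the minimality case analysis to be routine; the genuine obstacle is this maximal-source-set bookkeeping, which is why the signs $\delta_s\le 0\le \delta_t$ and the nestedness structure must be used rather than simply noting that some capacities change.
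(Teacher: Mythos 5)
Your proof is correct. Note, however, that the paper itself does not prove this lemma: it is stated as part of the recap of the HL-algorithm and imported from \cite{HL17}, so there is no in-paper proof to compare against. Your argument is a valid self-contained derivation using exactly the facts the paper records in Section \ref{sect: recap}: that the passage from $G_{k-1}$ to $G_k$ changes only $c_{s,i_k}$ (nonincreasing, $\delta_s\le 0$) and $c_{i_k,t}$ (nondecreasing, $\delta_t\ge 0$); the cut-capacity bookkeeping showing $(S_{k-1},T_{k-1})$ remains a minimum cut of $G_k$; and the nested cut property (Lemma \ref{lem: nestedness}) plus the maximal-source-set convention to pin down $(S_k,T_k)=(S_{k-1},T_{k-1})$. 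One small presentational point: your step ``$S_{k-1}$ is a min-cut source set and $S_k$ is maximal, hence $S_{k-1}\subseteq S_k$'' implicitly uses that the maximal source set contains every min-cut source set (union-closure of min-cut source sets, which also underlies the paper's claim that the maximal-source-set cut is unique). Even under the weaker reading of ``maximal'' (not strictly contained in any other min-cut source set), your conclusion still follows: nestedness gives $S_k\subseteq S_{k-1}$, and strict containment would contradict maximality of $S_k$, so equality holds either way.
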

\begin{Lem}\label{lem: source}
If $i_k \in S_{k-1}$, then all the nodes that change their status from $s$ in $G_{k-1}$ to $t$ in $G_k$ must form a (possibly empty) interval of $s$-nodes containing $i_k$ in $G_{k-1}$.
\end{Lem}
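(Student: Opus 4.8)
The plan is to recast the minimum-cut computation as the minimization of a path-separable set function and to track how its maximal minimizer moves when the capacities at node $i_k$ change. First I would write the capacity of a cut with source set $S$ as $C(S)=\mathrm{const}+\Phi(S)$, where the constant $\sum_i c_{s,i}$ is independent of $S$ and $\Phi(S)=\sum_{i\in S}(f^{pl}_i)'(\alpha)+(\text{internal cut cost})$. On the bi-path this $\Phi$ is separable: $\Phi(S)=\sum_i p_i\,[i\in S]+\sum_i \theta_i([i\in S],[i{+}1\in S])$, where $p_i=(f^{pl}_i)'(\alpha)=c_{i,t}-c_{s,i}$ and each edge term $\theta_i$ depends only on the status of $i$ and $i+1$. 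Passing from $G_{k-1}$ to $G_k$ only the right sub-gradient of $f^{pl}_{i_k}$ changes, and by convexity it increases by $\delta:=w_{i_k,j_k}-w_{i_k,j_k-1}>0$; hence $\Phi_{G_k}(S)=\Phi_{G_{k-1}}(S)+\delta\,[i_k\in S]$. Writing $\Phi:=\Phi_{G_{k-1}}$ and $\psi:=\Phi_{G_k}$, the set $S_{k-1}$ is the maximal minimizer of $\Phi$ and $S_k$ the maximal minimizer of $\psi$; the nested cut property (Lemma~\ref{lem: nestedness}) gives $S_k\subseteq S_{k-1}$, so $D:=S_{k-1}\setminus S_k$ is exactly the set flipping from $s$ to $t$, and in particular $D\subseteq S_{k-1}$ consists of $s$-nodes of $G_{k-1}$.

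Next I would prove the structural claim through a single exchange argument applied to the connected components (maximal intervals) of $D$. For a component $K=[c,d]$ of $D$, its neighbors $c-1$ and $d+1$ lie outside $D$ and therefore have the same status in $S_{k-1}$ and in $S_k$. Because $\Phi$ is edge-separable and distinct components of $D$ are non-adjacent, flipping $K$ back to the source from $S_k$ is the exact reverse of removing $K$ from the source of $S_{k-1}$ under identical boundary conditions, which I would record as the identity $\Phi(S_k\cup K)-\Phi(S_k)=-m(K)$, where $m(K):=\Phi(S_{k-1}\setminus K)-\Phi(S_{k-1})\ge 0$; the inequality holds because $S_{k-1}$ minimizes $\Phi$.

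Then I would argue by contradiction that every component of $D$ contains $i_k$. Suppose $K$ is a component with $i_k\notin K$. Since $i_k\notin K$ and $i_k\notin S_k$, adding $K$ back does not change the membership of $i_k$, so $\psi(S_k\cup K)-\psi(S_k)=\Phi(S_k\cup K)-\Phi(S_k)=-m(K)\le 0$. Thus $S_k\cup K$ is also a minimizer of $\psi$, and it strictly contains $S_k$, contradicting the maximality of the source set $S_k$. Hence every component of $D$ contains $i_k$; as the components are pairwise disjoint there is at most one, so $D$ is either empty or a single interval, and in the latter case it contains $i_k$. On the path a connected component is precisely an interval, which is the claimed structure; note that this single argument simultaneously yields the containment of $i_k$, so no separate treatment of the case $i_k\in S_k$ is required.

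I expect the crux to be the reversal identity $\Phi(S_k\cup K)-\Phi(S_k)=-m(K)$ together with the additivity over the components of $D$: these are exactly where the bi-path is used — equivalently, the edge-separability of $\Phi$ combined with the fact that the two neighbors of each component keep their status — and they are what let a global statement about the maximal minimizer be reduced to independent local flips. Everything else (the reduction to $\Phi$, the monotone shift by $\delta\,[i_k\in\cdot]$, and the appeal to maximality of the source set) is bookkeeping once this identity is in hand.
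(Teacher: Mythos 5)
Your proof is correct, but note that this paper never proves Lemma~\ref{lem: source} at all: the lemma appears in the overview of the HL-algorithm (Section~\ref{sect: recap}) and is imported from \cite{HL17}, so there is no in-paper proof to compare against. On its own merits, your route is clean and self-contained: the reduction of the cut capacity to the edge-separable potential $\Phi$ is valid (since $c_{i,t}-c_{s,i}=(f^{pl}_i)'(\alpha)$, and the discarded constant $\sum_i c_{s,i}$, though it changes from $G_{k-1}$ to $G_k$, never depends on $S$); the update $G_{k-1}\to G_k$ is exactly the shift $\psi=\Phi+\delta\,[i_k\in\cdot]$ with $\delta>0$ by convexity of $f^{pl}_{i_k}$; the reversal identity $\Phi(S_k\cup K)-\Phi(S_k)=-m(K)$ is correct precisely because both neighbors of a component $K$ of $S_{k-1}\setminus S_k$ retain their status in the two cuts (this is where nestedness, Lemma~\ref{lem: nestedness}, enters); and promoting $S_k\cup K$ to another minimum-cut source set contradicts maximality of $S_k$, which is a sound appeal since the union of two minimum-cut source sets is again a minimum-cut source set. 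Two details worth tightening. First, the clause ``$i_k\notin S_k$'' in your contradiction step is unjustified --- nothing forces $i_k$ to leave the source set --- but it is also unnecessary: $i_k\notin K$ alone makes the $\delta$-terms cancel, which is all you use. Second, a component of $S_{k-1}\setminus S_k$ touching node $1$ or node $n$ has only one neighbor; the identity holds verbatim with the missing boundary term absent, but this edge case deserves a sentence. As a bonus, your argument never invokes the hypothesis $i_k\in S_{k-1}$, so it simultaneously yields Lemma~\ref{lem: sink} for free: if $i_k\in T_{k-1}$, then no component of $S_{k-1}\setminus S_k$ can contain $i_k$, forcing $S_k=S_{k-1}$.
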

Lemma \ref{lem: source} shows that the minimum cut in $G_k$ is derived by updating the minimum cut in $G_{k-1}$ on an interval of nodes that change their status from $s$ to $t$. Note that all nodes that in the status changing interval in Lemma \ref{lem: source} have optimal value $a_{i_k, j_k}$ in GIMR (\ref{prob: GIMR}).

Both Lemma \ref{lem: sink} and \ref{lem: source} hold for any values of $d_{i,i+1}$ and $d_{i+1,i}$ in GIMR (\ref{prob: GIMR}), thus is also true for any values of $\lambda$ in PL-FL (\ref{prob: PL-FL-lambda}). Yet, for different values of $d_{i,i+1}$ and $d_{i+1,i}$ in GIMR (\ref{prob: GIMR}), the node status changing interval in Lemma \ref{lem: source} may be different. This is the place where the values of $d_{i,i+1}$ and $d_{i+1,i}$ in GIMR (\ref{prob: GIMR}), and thus the value of $\lambda$ in PL-FL (\ref{prob: PL-FL-lambda}), affect the optimal solution to GIMR (\ref{prob: GIMR}) and PL-FL (\ref{prob: PL-FL-lambda}) respectively.

\subsubsection{Structure of the path of solutions}\label{sect: path_structure}
According to HL-algorithm for GIMR (\ref{prob: GIMR}), we immediately have the following lemma on the structure of the path of solutions for each node $i$ in PL-FL (\ref{prob: PL-FL-lambda}):
\begin{Lem}\label{lem: piecewise_constant}
For each node $i$, the path of solutions of $x^*_{i}(\lambda)$ for all $\lambda \geq 0$ is piecewise constant. All the constants are taken from the set of breakpoints $\{a_{i_1,j_1}, a_{i_2, j_2}, \ldots, a_{i_q, j_q}\}$.
\end{Lem}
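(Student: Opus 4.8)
The plan is to read off both assertions from the parametric minimum-cut formulation of GIMR (\ref{prob: GIMR}) specialized to the PL-FL (\ref{prob: PL-FL-lambda}) coefficients $d_{i,i+1} = d_{i+1,i} = \lambda$. The claim that the attained values lie among the breakpoints is essentially already in hand, while the real content is piecewise-constancy in $\lambda$.

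For the second assertion, I would invoke the characterization recalled just above: for any \emph{fixed} $\lambda \geq 0$, the threshold theorem (Theorem \ref{thm: threshold}) together with the nested cut property (Lemma \ref{lem: nestedness}) gives, for each node $i$, $x^*_i(\lambda) = a_{i_k,j_k}$ for the unique index $k$ with $i \in S_{k-1}$ and $i \in T_k$. Since this holds for every $\lambda \geq 0$, the image of the map $\lambda \mapsto x^*_i(\lambda)$ is contained in the finite set $\{a_{i_1,j_1}, \ldots, a_{i_q,j_q}\}$, which is exactly what is claimed about the constants.

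The substance is the piecewise-constancy. Here I would fix a candidate threshold $\alpha = a_{i_k,j_k}$ and examine how the minimum cut of $G^{st}(\alpha)$ depends on $\lambda$. In this graph the source- and sink-adjacent capacities $c_{s,i} = \max\{0,-(f^{pl}_i)'(\alpha)\}$ and $c_{i,t} = \max\{0,(f^{pl}_i)'(\alpha)\}$ depend only on $\alpha$ and not on $\lambda$, while every path arc $(i,i+1),(i+1,i)$ carries capacity $\lambda$. Consequently the capacity of any cut $(S,T)$ is \emph{affine} in $\lambda$: a constant coming from the source/sink arcs it severs, plus $\lambda$ times the number of path arcs it severs. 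The minimum cut value is therefore the pointwise minimum of the finitely many (at most $2^n$) affine functions of $\lambda$, one per subset $S$, hence a concave piecewise-linear function of $\lambda$ with finitely many breakpoints; on the interior of each linear piece the family of minimizing cuts is fixed, and in particular its maximal-source-set member $S(\alpha,\lambda)$ is fixed.

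Finally I would aggregate over the $q$ candidate thresholds. Collecting the finitely many $\lambda$-breakpoints arising from all $\alpha \in \{a_{i_1,j_1}, \ldots, a_{i_q,j_q}\}$ partitions $[0,\infty)$ into finitely many intervals on each of which every $S(\alpha,\lambda)$, and hence every cut $(S_k,T_k)$, is constant; by the displayed characterization $x^*_i(\lambda)$ is then constant on each such interval, giving piecewise-constancy. The one point needing care, which I expect to be the main obstacle, is that it is not enough for the minimum cut \emph{value} to be locally affine: I must ensure that the \emph{maximal-source-set} minimizer is locally constant. I would handle this by noting that on the interior of a linear piece of the lower envelope the set of minimizing cuts does not change, so its unique maximal element (under the source-set lattice) does not change either; ties and the tie-breaking convention matter only at the finitely many breakpoint values of $\lambda$ themselves, which contribute only to the partition and not to the constancy inside each interval.
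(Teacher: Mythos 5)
Your proposal is correct, and both halves are sound: the claim about the constants is exactly the paper's reasoning (via Theorem \ref{thm: threshold} and Lemma \ref{lem: nestedness}, each node $i$ gets value $a_{i_k,j_k}$ at the unique index $k$ where it moves from $S_{k-1}$ to $T_k$), and your lower-envelope argument for piecewise constancy is valid, including the key point that on the interior of a linear piece of the envelope any minimizing cut's affine capacity must coincide identically with the envelope there, so the minimizer set, and hence its maximal-source-set element, is locally constant. Where you differ from the paper: the paper states Lemma \ref{lem: piecewise_constant} as an immediate consequence of the HL-algorithm and offers no separate proof, deferring the real work on $\lambda$-dependence to Proposition \ref{prop: lambda_ranges} and Lemma \ref{lem: m_breakpoints}, which compare the two candidate cut capacities (node in source versus sink) as affine functions of $\lambda$ and thereby extract \emph{explicit} thresholds such as $\lceil smt_{j,k}(I_{j,k})/2 \rceil$ and $\lfloor tms_{j,k}(I_{j,k})/2 \rfloor$. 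Your argument is in the same spirit (affine-in-$\lambda$ cut capacities) but applied globally to all $2^n$ cuts at once, which buys a clean, self-contained proof of finiteness of the pieces without any algorithmic machinery; what it does not buy is any useful bound on the number of pieces, which is precisely what the paper's constructive route delivers (at most one new $\lambda$-breakpoint per graph $G_{j,k}$, hence the $O(nq)$ bounds of Theorem \ref{thm: induced_graph_breakpoints} and Corollary \ref{cor: total-fusing-num}). So your proof establishes the lemma as stated, while the paper's approach is what makes the subsequent quantitative results possible.
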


Lemma \ref{lem: piecewise_constant} leads to the following notations and concepts. A \emph{$\lambda$-interval} $\Lambda = [\lambda_{\ell}, \lambda_r]$ is specified by its two endpoints, $\lambda_\ell$ and $\lambda_r$. If $\lambda_\ell = \lambda_r$, the interval $\Lambda$ contains a single value. Let $\Lambda = \emptyset$ if $\lambda_\ell > \lambda_r$. For two disjoint $\lambda$-intervals $\Lambda = [\lambda_\ell, \lambda_r]$ and $\Lambda' = [{\lambda'}_\ell, {\lambda'}_r]$, we define that $\Lambda < \Lambda'$ if $\lambda_r < {\lambda'}_\ell$, and $\Lambda > \Lambda'$ if $\Lambda' < \Lambda$. We say two disjoint $\lambda$-intervals $\Lambda$ and $\Lambda'$ are \emph{adjacent} if $\lambda_r = {\lambda'}_\ell-1$ ($\Lambda < \Lambda'$) or ${\lambda'}_r = \lambda_\ell - 1$ ($\Lambda > \Lambda'$).

We define a $\lambda$-interval $\Lambda(i) = [\lambda_\ell(i), \lambda_r(i)]$ to be a \emph{$\lambda$-constant-interval} for node $i$ if $x^*_{i}(\lambda)$ is a constant for $\lambda \in \Lambda$. And we say a $\lambda$-constant-interval is \emph{maximal} if it is not strictly contained in a larger $\lambda$-interval in which $x^*_{i}(\lambda)$ remains constant. If $\Lambda(i) = [\lambda_\ell(i), \lambda_r(i)]$ is a maximal $\lambda$-constant-interval for node $i$, we call $\lambda_\ell(i)$ (or ($\lambda_\ell(i) - 1$)) is a \emph{$\lambda$-breakpoint} for node $i$, and $\lambda_r(i)$ (or ($\lambda_r(i) + 1$)) is another \emph{$\lambda$-breakpoint} for node $i$. Note that fusing $\lambda$ values are $\lambda$-breakpoints.

Recall that at every fusing $\lambda$ value, some pairs/sets of variables start to always have a same optimal value. Thus we can fuse those variables in PL-FL (\ref{prob: PL-FL-lambda}) to reduce the problem size. Suppose there are $p$ fusing $\lambda$ values, where $p \leq n - 1$ according to Corollary \ref{cor: fusing-num}. Let $\lambda^{(f)}_0 = 0$ and the $j$th ($j\in [p]$) fusing $\lambda$ value\footnote{It could be the case that the value $\lambda = 0$ is already a fusing $\lambda$ value, \ie, the minimizers for $f_i(x_i)$ and $f_{i+1}(x_{i+1})$ are the same for some $i$.} be $\lambda^{(f)}_j$. For each $\lambda^{(f)}_j$, we define a reduced PL-FL problem, namely \emph{PL-FL-$\lambda^{(f)}_j$}, from PL-FL (\ref{prob: PL-FL-lambda}) as follows. For each group of nodes $i_\ell, i_\ell+1, \ldots, i_r-1, i_r$ that are always of a same optimal value for all $\lambda \geq \lambda^{(f)}_j$, we fuse those nodes to generate a \emph{super-node}\footnote{The notation of super-node $I$ in the presentation plays two roles, on one hand it acts as an integer index for the super-node in PL-FL-$\lambda^{f}_j$, on the other hand it refers to the interval $[i_\ell, i_r]$ in PL-FL that the super-node is merged from.} $I_{[i_\ell,i_r]}$ and introduce a new decision variable $x_{I_{[i_\ell, i_r]}}$ in replace of $x_{i_\ell}, x_{i_\ell}+1, \ldots, x_{i_r-1}, x_{i_r}$. We define $f^{pl}_{I_{[i_\ell, i_r]}}(x_{I_{[i_\ell,i_r]}}) = \sum_{i\in I}f^{pl}_i(x_{I_{[i_\ell, i_r]}})$ to replace the original loss functions $\sum_{i\in I}f^{pl}_i(x_i)$. Note that function $f^{pl}_{I_{[i_\ell, i_r]}}$ is also convex piecewise linear, and its breakpoints are union of the breakpoints of functions $f^{pl}_{i_\ell}$ to $f^{pl}_{i_r}$. On the other hand, the term $\lambda|x_{i_\ell-1} - x_{i_\ell}| + \lambda\sum_{i = i_{\ell}}^{i_r-1}|x_i - x_{i+1}| + \lambda|x_{i_r} - x_{i_r+1}|$ is replaced by $\lambda|x_{i_\ell-1} - x_{I_{[i_\ell, i_r]}}| + \lambda|x_{I_{[i_\ell, i_r]}} - x_{i_r+1}|$.

In the following presentation, nodes in PL-FL-$\lambda^{(f)}_j$ are all called \emph{super-nodes} (it could be that a super-node corresponds to a singleton interval, \ie, no fusing) while the term \emph{node} is reserved to nodes in the original PL-FL (\ref{prob: PL-FL-lambda}). We re-index the super-nodes in PL-FL-$\lambda^{(f)}_j$ from $1$ to $n_j$, where $n \geq n_0 > n_1 > \ldots > n_p$. The mappings between the re-indexed values in PL-FL-$\lambda^{(f)}_j$ and the corresponding intervals (could be singleton) in PL-FL is maintained. We define $I_{j,k}$ as the super-node in PL-FL-$\lambda^{(f)}_j$ that contains the node $i_k$ in PL-FL.

All problems, \{PL-FL-$\lambda^{(f)}_0$, PL-FL-$\lambda^{(f)}_1$, \ldots, PL-FL-$\lambda^{(f)}_p$\}, share the same set of piecewise linear breakpoints, $a_{i_1,j_1} < a_{i_2,j_2} < \ldots < a_{i_q,j_q}$, where $a_{i_k,j_k}$ still refers to the breakpoint between the $(j_k-1)$th and the $j_k$th linear pieces of function $f^{pl}_{i_k}(x_{i_k})$ in PL-FL. But note that as the sets of fused nodes are different, the slopes of the linear pieces of the ``fused" loss functions are different among the problems.

PL-FL-$\lambda^{(f)}_j$ has the same optimal solution as PL-FL for all $\lambda \geq \lambda^{(f)}_j$. $x^*_{I_{[i_\ell, i_r]}} = a_{i_k, j_k}$ implies that $x^*_i = a_{i_k, j_k}$ for all $i\in [i_\ell, i_r]$. The PL-FL-$\lambda^{(f)}_j$ problem has exactly the same form as the original PL-FL problem (\ref{prob: PL-FL-lambda}), sharing the same set of breakpoints, yet has smaller number of decision variables. Lemma \ref{lem: piecewise_constant} also applies to PL-FL-$\lambda^{(f)}_j$, so are the concepts of maximal $\lambda$-constant-interval and $\lambda$-breakpoint.

%In each PL-FL-$\lambda^{(f)}_j$, we define two mappings: $I_j: [1,n_j] \rightarrow [1,\ldots,n]\times[1,\ldots,n]$ and $I^{-1}_j: [1,\ldots,n] \rightarrow [1,\ldots,n_j]$. For each super-node $I$, $I_j(I) = [i_\ell,i_r]$ being the interval of nodes in PL-FL (\ref{prob: PL-FL-lambda}) that are merged to generate the super-node $I$ in PL-FL-$\lambda^{(f)}_j$. If $i_r = i_{\ell}$, the interval contains the singleton $i_\ell$, which means that no fusing happens on the node $i_\ell$ in PL-FL (\ref{prob: PL-FL-lambda}). For each node $i$ in PL-FL (\ref{prob: PL-FL-lambda}), $I^{-1}_j(i) = I$ being the super-node $I$ in PL-FL-$\lambda^{(f)}_j$ such that $i \in I_j(I)$.

As each PL-FL-$\lambda^{(f)}_j$ is an instance of PL-FL of smaller size, all the prior analysis on PL-FL applies to PL-FL-$\lambda^{(f)}_j$ defined on a bi-path graph of super-nodes. Thus to solve PL-FL-$\lambda^{(f)}_j$, we construct the associated parametric graph $G^{st}_j(\alpha)$ defined over the super-nodes similar to $G^{st}(\alpha)$ for PL-FL. Similarly, we define $G_{j,0} = G^{st}_j(a_{i_1,j_1} - \epsilon)$ for a small value of $\epsilon > 0$ and $G_{j,k} = G^{st}_{j}(a_{i_k, j_k})$ for $k = 1,\ldots, q$. Let $(S_{j,k}, T_{j,k})$ be the minimum cut in $G_{j,k}$ for $k \geq 0$. Based on the HL-algorithm, in PL-FL-$\lambda^{(f)}_j$, we have that for every super-node $I = 1,\ldots,n_j$, $x^*_I = a_{i_k,j_k}$ for some index $k$ such that $I \in S_{j,k-1}$ and $I \in T_{j,k}$. For $\lambda \geq \lambda^{(f)}_j$, it also implies that for any node $i \in I$, $x^*_i = a_{i_k,j_k}$ in PL-FL.

%As PL-FL-$\lambda^{(f)}_j$ and PL-FL share the same optimal solutions for any $\lambda \geq \lambda^{(f)}_j$,
%the set of maximal $\lambda$-constant-intervals and $\lambda$-breakpoints in PL-FL-$\lambda^{(f)}_j$ must be a subset of the set of maximal $\lambda$-constant-intervals and $\lambda$-breakpoints in PL-FL (\ref{prob: PL-FL-lambda}).
%(TO CHECK, ONLY FOR $\lambda \geq \lambda_0$????)a maximal $\lambda$-constant-interval and $\lambda$-breakpoint for a super-node $I$ in PL-FL-$\lambda^{(f)}_j$ is a maximal $\lambda$-constant-interval and $\lambda$-breakpoint for all nodes $i \in I$ in PL-FL.

In PL-FL-$\lambda^{(f)}_j$, for any $\lambda \in [\lambda^{(f)}_j, \lambda^{(f)}_{j+1}-1]$ (define $\lambda^{(f)}_{p+1} = +\infty$), no two adjacent super-nodes take a same optimal value because there is no value fusion for $\lambda \in [\lambda^{(f)}_j, \lambda^{(f)}_{j+1}-1]$. On the other hand, from HL-algorithm in \cite{HL17} for PL-FL (\ref{prob: PL-FL-lambda}), we observe that any two adjacent nodes are of the same optimal value, say $a_{i_k,j_k}$, only if they are both in the source set in $G_{k-1}$ but shift to the sink set in $G_k$. Furthermore, at $G_k$, if there is at least one node shifted to the sink set, node $i_k$ must be one of them. Combining the above three observations, we have the following key insight on PL-FL-$\lambda^{(f)}_j$:
\begin{Lem}\label{lem: one-node-shift}
For any $\lambda \in [\lambda^{(f)}_j, \lambda^{(f)}_{j+1}-1]$, if there exits at least one super-node that is in $S_{j,k-1}$ in $G_{j,k-1}$ but shifts to $T_{j,k}$ in $G_{j,k}$, it must be the super-node $I_{j,k}$ that contains $i_k$ in PL-FL. %We define $I_{j,k} := I^{-1}_j(i_k)$.
\end{Lem}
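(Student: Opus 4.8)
The plan is to reduce the statement to the two cut-update lemmas for the HL-algorithm, Lemma~\ref{lem: sink} and Lemma~\ref{lem: source} (which apply to PL-FL-$\lambda^{(f)}_j$ since it is itself an instance of PL-FL), together with the distinctness of adjacent optimal values recorded just above the statement. Throughout, fix $\lambda \in [\lambda^{(f)}_j, \lambda^{(f)}_{j+1}-1]$, so that the graphs $G_{j,k}$ and the minimum cuts $(S_{j,k}, T_{j,k})$ are all taken at this $\lambda$. First I would split on the status of the super-node $I_{j,k}$ in the previous cut. If $I_{j,k} \in T_{j,k-1}$, then Lemma~\ref{lem: sink} gives $(S_{j,k}, T_{j,k}) = (S_{j,k-1}, T_{j,k-1})$, so no super-node shifts and the hypothesis of the lemma fails; hence I may assume $I_{j,k} \in S_{j,k-1}$.

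Under that assumption, Lemma~\ref{lem: source} says the super-nodes that change status from $s$ to $t$ between $G_{j,k-1}$ and $G_{j,k}$ form a (possibly empty) interval of $s$-nodes containing $I_{j,k}$. By hypothesis this interval is non-empty, so it already contains $I_{j,k}$; the only thing left is to show it contains nothing else. For this I would invoke the characterization recalled earlier that a super-node $I$ attains $x^*_I = a_{i_k,j_k}$ precisely at the index $k$ for which $I \in S_{j,k-1}$ and $I \in T_{j,k}$. Consequently every super-node in the status-changing interval shares the common optimal value $a_{i_k,j_k}$.

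The step that does the real work is the combination with distinctness: for $\lambda$ in the range $[\lambda^{(f)}_j, \lambda^{(f)}_{j+1}-1]$ no value fusion occurs, so no two adjacent super-nodes of PL-FL-$\lambda^{(f)}_j$ take the same optimal value. If the status-changing interval held a super-node besides $I_{j,k}$, then, being a contiguous interval, it would contain two adjacent super-nodes of equal optimal value $a_{i_k,j_k}$, contradicting distinctness. Hence the interval is exactly $\{I_{j,k}\}$. I expect the main obstacle to lie not in the cut bookkeeping, which Lemmas~\ref{lem: sink} and~\ref{lem: source} supply, but in the observation that every super-node shifting at step $k$ carries the same optimal value $a_{i_k,j_k}$; once this is combined with the already-recorded distinctness of adjacent optimal values throughout the range, the interval of Lemma~\ref{lem: source} collapses to the single super-node $I_{j,k}$ and the claim follows.
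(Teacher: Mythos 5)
Your proposal is correct and follows essentially the same route as the paper: the paper's own justification (the paragraph preceding the lemma) combines exactly the same three ingredients — the interval structure of status-changing super-nodes containing $I_{j,k}$ from the HL-algorithm (Lemmas \ref{lem: sink} and \ref{lem: source}), the fact that all super-nodes shifting at step $k$ attain the common optimal value $a_{i_k,j_k}$, and the absence of value fusion (hence no equal adjacent optimal values) for $\lambda \in [\lambda^{(f)}_j, \lambda^{(f)}_{j+1}-1]$. Your write-up is simply a more explicit rendering of that argument, including the clean dispatch of the $I_{j,k} \in T_{j,k-1}$ case via Lemma \ref{lem: sink}.
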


Based on the above observation, we shall prove the following theorem bounding the number of different $\lambda$-breakpoints of \emph{all} super-nodes in PL-FL-$\lambda^{(f)}_j$ for $\lambda \in [\lambda^{(f)}_j, \lambda^{(f)}_{j+1}-1]$:
\begin{Thm}\label{thm: induced_graph_breakpoints}
In PL-FL-$\lambda^{(f)}_j$, the number of $\lambda$-breakpoints of all super-nodes for $\lambda\in [\lambda^{(f)}_j, \lambda^{(f)}_{j+1}-1]$ is at most $q$.
\end{Thm}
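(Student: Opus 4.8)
The plan is to count $\lambda$-breakpoints by tracking, for each value of $\lambda \in [\lambda^{(f)}_j, \lambda^{(f)}_{j+1}-1]$, the sequence of minimum cuts $(S_{j,0}, T_{j,0}), (S_{j,1}, T_{j,1}), \ldots, (S_{j,q}, T_{j,q})$ generated by the HL-algorithm as $\alpha$ sweeps through the sorted breakpoint list $a_{i_1,j_1} < \cdots < a_{i_q,j_q}$. By Lemma \ref{lem: piecewise_constant}, every optimal value $x^*_I(\lambda)$ is one of these breakpoints, and $x^*_I(\lambda) = a_{i_k,j_k}$ precisely for the index $k$ at which super-node $I$ shifts from $S_{j,k-1}$ to $T_{j,k}$. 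A $\lambda$-breakpoint of super-node $I$ occurs exactly at a value of $\lambda$ where this \emph{shift index} $k$ changes. So the total number of $\lambda$-breakpoints of all super-nodes equals the total number of (super-node, $\lambda$) pairs at which a super-node's shift index changes as $\lambda$ is incremented across the sub-interval.

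\textbf{Charging breakpoints to the sorted breakpoint list.}

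First I would invoke Lemma \ref{lem: one-node-shift}: for any fixed $\lambda$ in the sub-interval, whenever a nonempty set of super-nodes shifts from source to sink at step $k$, that set \emph{must contain} the distinguished super-node $I_{j,k}$ (the one containing node $i_k$). The key consequence is that each index $k \in \{1,\ldots,q\}$ is ``owned'' by a single super-node $I_{j,k}$ in the following sense: at step $k$, the only super-node that is \emph{guaranteed} to be a candidate for shifting is $I_{j,k}$, and its optimal value is determined by the first index $k' \geq $ (its creation point) at which \emph{it} shifts. The plan is to show that the number of $\lambda$-breakpoints can be charged to the $q$ indices of the sorted breakpoint list, one charge per index, so that the total is at most $q$. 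Concretely, I would associate to each super-node $I$ the particular breakpoint index $k_I(\lambda)$ at which it shifts, and argue that as $\lambda$ increases the map $\lambda \mapsto k_I(\lambda)$ is monotone (the nested-cut property, Lemma \ref{lem: nestedness}, combined with the fact that larger $\lambda$ fuses more and pushes shift indices in a controlled direction), so that across the whole sub-interval each super-node contributes breakpoints that, when summed over all super-nodes, telescope against distinct indices $k$.

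\textbf{The monotonicity and disjointness argument.}

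The heart of the argument is to establish that the shift indices, viewed across all super-nodes, partition the index set $\{1,\ldots,q\}$ at each fixed $\lambda$ (each index $k$ is the shift index of at most one super-node, since a super-node shifts at a unique $k$ and by Lemma \ref{lem: one-node-shift} the shifting set at step $k$ is pinned to containing $I_{j,k}$). I would then show that as $\lambda$ ranges over $[\lambda^{(f)}_j, \lambda^{(f)}_{j+1}-1]$, each super-node's shift index moves monotonically through the sorted list, and that the regions swept by distinct super-nodes are essentially non-overlapping in $k$, so the total variation (number of value-changes, i.e. $\lambda$-breakpoints) summed over all super-nodes is bounded by the length $q$ of the index list. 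The main obstacle I anticipate is making the ``non-overlapping sweep'' precise: I must rule out that two different super-nodes both sweep through the same breakpoint index $k$ as $\lambda$ varies, which would double-count. I expect this to follow from Lemma \ref{lem: source} (the shifting set is always an \emph{interval} of $s$-nodes containing $i_k$) together with the nested-cut monotonicity, but verifying that these intervals behave coherently as \emph{both} $\alpha$ and $\lambda$ vary — rather than fixing one and sweeping the other — is the delicate step that will require the most care.
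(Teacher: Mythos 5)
Your high-level charging scheme (one breakpoint per index $k$ of the sorted breakpoint list, using Lemma \ref{lem: one-node-shift} to pin step $k$ to the unique super-node $I_{j,k}$) is the same skeleton as the paper's proof, but the mechanism you propose to make the charge work has a genuine flaw: the shift index $k_I(\lambda)$ is \emph{not} monotone in $\lambda$. The nested cut property (Lemma \ref{lem: nestedness}) gives monotonicity of the cuts in $\alpha$ for a \emph{fixed} $\lambda$; it says nothing about varying $\lambda$. In fact the paper's own structure result (the piecewise-constant-quasi-convexity corollary) shows that $x^*_I(\lambda)$ is bowl-shaped on $[\lambda^{(f)}_j, \lambda^{(f)}_{j+1}-1]$: while $smt_{j,k}(I) \geq 0$, the sink-$\lambda$-intervals of $I$ first appear in the middle of the sub-interval and expand toward both ends, so the optimal value first decreases and then increases in $\lambda$. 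Consequently a single value $a_{i_k,j_k}$ can be attained by $I_{j,k}$ on \emph{two} disjoint maximal $\lambda$-constant-intervals (one on each side of the bowl), each contributing endpoints, and your ``total variation bounded by the length of the index list'' count gives a bound of roughly $2q$, not $q$. The disjointness you worry about (two super-nodes sweeping the same index $k$) is actually the easy part --- it is immediate from Lemma \ref{lem: one-node-shift}, since the super-node partition is fixed throughout the sub-interval; the double-counting that actually threatens the bound comes from one super-node visiting the same index twice.

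What closes this factor-of-two gap in the paper is an inductive invariant plus an endpoint-sharing argument that your sketch does not contain. The paper proves (Lemma \ref{lem: m_breakpoints}, via the explicit $\lambda$-threshold computation in Proposition \ref{prop: lambda_ranges}) that after every step $k$ each super-node's sink set, viewed as a subset of $[\lambda^{(f)}_j, \lambda^{(f)}_{j+1}-1]$, is a single contiguous interval, and that when this interval expands at step $k$, the endpoints of the newly created maximal $\lambda$-constant-intervals are all \emph{inherited} from already-counted $\lambda$-breakpoints --- either of $I_{j,k}$ itself (the old sink-interval endpoints) or of its neighbors $I_{j,k}\pm 1$ (their sink-interval endpoints, which enter through the status conditions of Proposition \ref{prop: lambda_ranges}) --- except for at most one genuinely new value, namely $\lceil smt_{j,k}(I_{j,k})/2\rceil$ or $\lfloor tms_{j,k}(I_{j,k})/2\rfloor$. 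That is what yields ``at most one new $\lambda$-breakpoint per step,'' hence at most $q$ in total. The ``delicate step'' you flag at the end is exactly this missing argument, and the monotone-sweep route you propose for carrying it out would fail, since the premise it rests on is contradicted by the solution structure.
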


The remainder of the section is to prove Theorem \ref{thm: induced_graph_breakpoints}. To do so, we define some additional concepts. In PL-FL-$\lambda^{(f)}_j$, for every super-node $I$ in the parametric graph $G^{st}_j(\alpha)$, if $I$ is in the source set for $\lambda$-interval $[\lambda_\ell, \lambda_r]$, we call $[\lambda_\ell, \lambda_r]$ an \emph{$I$-source-$\lambda$-interval}; if $I$ is in the sink set for $\lambda$-interval $[\lambda_\ell, \lambda_r]$, we call $[\lambda_\ell, \lambda_r]$ an \emph{$I$-sink-$\lambda$-interval}.

Initially in $G_{j,0}$, as only source adjacent arcs have non-zero capacities, thus all super-nodes in $G_{j,0}$ are in $S_{j,0}$. Hence $[\lambda^{(f)}_j, \lambda^{(f)}_{j+1}-1]$ is $I$-source-$\lambda$-interval for every super-node $I$ in $G_{j,0}$. At $G_{j,k}$, some subintervals of $[\lambda^{(f)}_j, \lambda^{(f)}_{j+1}-1]$ may change from $I_{j,k}$-source-$\lambda$-intervals to $I_{j,k}$-sink-$\lambda$-intervals. To compute those subintervals, we solve the values of $\lambda$ such that the single super-node $I_{j,k}$ shifts from $S_{j,k-1}$ in $G_{j,k-1}$ to $T_{j,k}$ in $G_{j,k}$. On the other hand, for any $\lambda \in [\lambda^{(f)}_j, \lambda^{(f)}_{j+1}-1]$, all nodes except $I_{j,k}$ must have the same status between $G_{j,k-1}$ and $G_{j,k}$. As a result, one can easily solve the status of $I_{j,k}$ in $G_{j,k}$ for different values of $\lambda$, depending on the status of the two adjacent super-nodes of $I_{j,k}$, $I_{j,k}-1$ and $I_{j,k}+1$ (if exist):
\begin{Prop}\label{prop: lambda_ranges}
In $G_{j,k}$ for PL-FL-$\lambda^{(f)}_j$, if $I_{j,k} \in S_{j,k-1}$, then the status of $I_{j,k}$ is determined by the status of its two adjacent super-nodes $I_{j,k}-1, I_{j,k}+1$, and the value of $\lambda$, in the following way:
\begin{enumerate}
\item If both $I_{j,k}-1$ and $I_{j,k}+1$ exist ($1 < I_{j,k} < n_j$):
    \begin{enumerate}
    \item $I_{j,k}-1, I_{j,k}+1 \in S_{j,k-1}$: For $\lambda \in [0, (c_{I_{j,k},t} - c_{s,I_{j,k}})/2)$, $I_{j,k} \in T_{j,k}$; otherwise $I_{j,k} \in S_{j,k}$. Note that if $c_{I_{j,k},t} - c_{s,I_{j,k}} \leq 0$, then the interval $[0, (c_{I_{j,k},t} - c_{s,I_{j,k}}/2)$ is empty, thus $I_{j,k} \in S_{j,k}$ for all $\lambda \in [\lambda^{(f)}_j, \lambda^{(f)}_{j+1}-1]$.
    \item $I_{j,k}-1 \in S_{j,k-1}, I_{j,k}+1 \in T_{j,k-1}$, or the reverse: If $c_{I_{j,k},t} - c_{s,I_{j,k}} > 0$, then $I_{j,k} \in T_{j,k}$ for all $\lambda \in [\lambda^{(f)}_j, \lambda^{(f)}_{j+1}-1]$; otherwise $I_{j,k} \in S_{j,k}$ for all $\lambda \in [\lambda^{(f)}_j, \lambda^{(f)}_{j+1}-1]$.
    \item $I_{j,k}-1, I_{j,k}+1 \in T_{j,k-1}$: For $\lambda \in ((c_{s,I_{j,k}} - c_{I_{j,k},t})/2, +\infty)$, $I_{j,k} \in T_{j,k}$; otherwise $I_{j,k} \in S_{j,k}$. Note that if $c_{s,I_{j,k}} - c_{I_{j,k},t} < 0$, then for all $\lambda \in [\lambda^{(f)}_j, \lambda^{(f)}_{j+1}-1]$, $I_{j,k} \in T_{j,k}$, as $\lambda \geq 0$.
    \end{enumerate}
\item If either $I_{j,k}-1$ doesn't exist ($I_{j,k}=1$) or $I_{j,k}+1$ doesn't exist ($I_{j,k} = n_j$): W.l.o.g., we consider the case where $I_{j,k} = 1$, thus $I_{j,k}-1$ doesn't exist.
    \begin{enumerate}
    \item $I_{j,k}+1 \in S_{j,k-1}$: For $\lambda \in [0, c_{I_{j,k},t} - c_{s,I_{j,k}})$, $I_{j,k}\in T_{j,k}$; otherwise $I_{j,k}\in S_{j,k}$. Note that if $c_{I_{j,k},t} - c_{s,I_{j,k}} \leq 0$, then the interval $[0, c_{I_{j,k},t} - c_{s,I_{j,k}})$ is empty, thus $I_{j,k} \in S_{j,k}$ for all $\lambda\in [\lambda^{(f)}_j, \lambda^{(f)}_{j+1}-1]$.
    \item $I_{j,k}+1 \in T_{j,k-1}$: For $\lambda \in ((c_{s,I_{j,k}} - c_{I_{j,k},t}), +\infty)$, $I_{j,k} \in T_{j,k}$; otherwise $I_{j,k} \in S_{j,k}$. Note that if $c_{s,I_{j,k}} - c_{I_{j,k},t} < 0$, then for all $\lambda \in [\lambda^{(f)}_j, \lambda^{(f)}_{j+1}-1]$, $I_{j,k} \in T_{j,k}$, as $\lambda \geq 0$.
    \end{enumerate}
\end{enumerate}
\end{Prop}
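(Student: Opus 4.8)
The plan is to reduce the determination of the status of $I_{j,k}$ to a comparison between exactly two candidate minimum cuts, and then to read off the threshold on $\lambda$ from a single capacity difference. By Lemma \ref{lem: one-node-shift}, for any $\lambda \in [\lambda^{(f)}_j, \lambda^{(f)}_{j+1}-1]$ the only super-node that can change status between $G_{j,k-1}$ and $G_{j,k}$ is $I_{j,k}$ itself. Since we are given $I_{j,k} \in S_{j,k-1}$, the minimum cut $(S_{j,k}, T_{j,k})$ must be one of two candidates: either $I_{j,k}$ remains in the source set (so the cut is unchanged), or $I_{j,k}$ alone shifts to the sink set. In both candidates every other super-node keeps its $G_{j,k-1}$ status; in particular the statuses of the neighbors $I_{j,k}-1$ and $I_{j,k}+1$ are fixed. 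Thus the question collapses to determining which of the two candidate cuts has smaller capacity.

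First I would compute the signed capacity difference $\Delta(\lambda) := C_T - C_S$, where $C_T$ and $C_S$ are the capacities of the candidate cuts with $I_{j,k}\in T$ and $I_{j,k}\in S$ respectively. Only arcs incident to $I_{j,k}$ contribute to $\Delta(\lambda)$. The source-adjacent arc $(s,I_{j,k})$ is cut precisely when $I_{j,k}\in T$, contributing $+c_{s,I_{j,k}}$; the sink-adjacent arc $(I_{j,k},t)$ is cut precisely when $I_{j,k}\in S$, contributing $-c_{I_{j,k},t}$. For each existing neighbor, the relevant bi-path arc (of capacity $\lambda$ in PL-FL) crosses the cut only when the two endpoints land on opposite sides: if the neighbor is an $s$-node the arc is cut exactly when $I_{j,k}\in T$, adding $+\lambda$; if the neighbor is a $t$-node the arc is cut exactly when $I_{j,k}\in S$, adding $-\lambda$. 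I would emphasize here that $c_{s,I_{j,k}}$ and $c_{I_{j,k},t}$ are the $G_{j,k}$ capacities and, as noted after the three update cases, do not depend on $\lambda$; hence $\Delta(\lambda)$ is affine in $\lambda$ with slope determined solely by the neighbor statuses.

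To finish, I would invoke the maximal-source-set convention: $I_{j,k}\in T_{j,k}$ exactly when shifting strictly decreases the capacity, \ie $\Delta(\lambda)<0$, and $I_{j,k}\in S_{j,k}$ otherwise (ties breaking to the source). Substituting the neighbor contributions then yields one affine inequality per configuration. When both neighbors are $s$-nodes (Case 1(a)) $\Delta(\lambda)=c_{s,I_{j,k}}-c_{I_{j,k},t}+2\lambda$, giving $I_{j,k}\in T_{j,k}$ iff $\lambda<(c_{I_{j,k},t}-c_{s,I_{j,k}})/2$; when one neighbor is an $s$-node and the other a $t$-node (Case 1(b)) the $\pm\lambda$ terms cancel and the sign is governed by the constant $c_{I_{j,k},t}-c_{s,I_{j,k}}$; when both are $t$-nodes (Case 1(c)) $\Delta(\lambda)=c_{s,I_{j,k}}-c_{I_{j,k},t}-2\lambda$, giving the threshold $(c_{s,I_{j,k}}-c_{I_{j,k},t})/2$. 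The boundary cases $I_{j,k}=1$ or $I_{j,k}=n_j$ (Case 2) drop one neighbor, replacing the coefficient $2\lambda$ by $\lambda$ and yielding the stated thresholds without the factor of two. The degenerate sub-cases flagged in the statement (an empty qualifying interval, or the whole interval) are exactly the situations where the relevant constant $c_{I_{j,k},t}-c_{s,I_{j,k}}$ is nonpositive or nonnegative, so the half-line of qualifying $\lambda$ either misses $[\lambda^{(f)}_j,\lambda^{(f)}_{j+1}-1]$ entirely or contains it.

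The main obstacle I expect is purely the bookkeeping in the second step: correctly identifying, for each of the four arcs incident to $I_{j,k}$ and each neighbor configuration, exactly when the arc crosses from the source side to the sink side, and then translating the maximal-source-set rule into the correct open/closed endpoints of the $\lambda$-intervals. The conceptual reduction to a two-candidate comparison is immediate from Lemma \ref{lem: one-node-shift}; once that is in hand, the remainder is a finite, if slightly tedious, case analysis.
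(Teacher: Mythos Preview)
Your proposal is correct and takes essentially the same approach as the paper: reduce via Lemma~\ref{lem: one-node-shift} to a comparison between the two candidate cuts differing only in the status of $I_{j,k}$, compute the capacity difference from the arcs incident to $I_{j,k}$, and read off the $\lambda$-threshold using the maximal-source-set tie-breaking rule. The paper works out Case~1(a) explicitly and defers the rest to ``similarly'', whereas you give a unified affine $\Delta(\lambda)$ formulation, but the underlying argument is identical.
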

\begin{proof}
The proof is by straightforward computation and comparison. We only show the case 1-(a). The other cases can be derived similarly. If $I_{j,k} \in S_{j,k}$ in $G_{j,k}$, then $S_{j,k} = S_{j,k-1}$. Thus the cut capacity in $G_{j,k}$ is
\begin{equation*}
\begin{split}
C_1 &= C(\{s\}\cup S_{j,k-1}, T_{j,k-1}\cup\{t\}) \\
&= C\big(\{s\}\cup (S_{j,k-1}\cap([1,n_j]\setminus\{I_{j,k}\})), (T_{j,k}\cap([1,n_j]\setminus\{I_{j,k}\}))\cup \{t\}\big) + c_{I_{j,k},t}.
\end{split}
\end{equation*}
If $I_{j,k}\in T_{j,k}$ in $G_{j,k}$, then $S_{j,k} = S_{j,k-1}\setminus \{I_{j,k}\}$ and $T_{j,k} = T_{j,k-1}\cup \{I_{j,k}\}$. Thus the cut capacity in $G_{j,k}$ is
\begin{equation*}
\begin{split}
C_2 &= C(\{s\}\cup (S_{j,k-1}\setminus\{I_{j,k}\}), (T_{j,k-1}\cup\{I_{j,k}\})\cup\{t\}) \\
&= C\big(\{s\}\cup (S_{j,k-1}\cap([1,n_j]\setminus\{I_{j,k}\})), (T_{j,k}\cap([1,n_j]\setminus\{I_{j,k}\}))\cup \{t\}\big) + c_{s,I_{j,k}} + 2\lambda.
\end{split}
\end{equation*}
If $C_1 \leq C_2$, \ie, $\lambda \geq (c_{I_{j,k},t} - c_{s,I_{j,k}})/2$, then $I_{j,k} \in S_{j,k}$ (recall that we always select the maximal source set), otherwise $I_{j,k}\in T_{j,k}$.
\end{proof}

We observe from Proposition \ref{prop: lambda_ranges} that the two terms, $c_{s,I_{j,k}} - c_{I_{j,k},t}$ and $c_{I_{j,k},t} - c_{s,I_{j,k}}$, play important roles in determining the ranges of $\lambda$. Thus for ease of presentation, we define two shortcut terms $smt_{j,k}(I) = c_{s,I} - c_{I,t}$ and $tms_{j,k}(I) = -smt_{j,k}(I) = c_{I,t} - c_{s,I}$ for each super-node $I$ in $G_{j,k}$ (the $m$ in the notation refers to ``minus"). Note that for any super-node $I$, $smt_{j,k}(I)$ is nonincreasing in $k$, correspondingly $tms_{j,k}(I)$ is nondecreasing in $k$. Also note that, for a fixed $k$ and $I$, the $smt$ and $tms$ values are different in $j$ (for different reduced PL-FL problems, PL-FL-$\lambda^{(f)}_j$). This is important as the two quantities determine the $\lambda$-breakpoints.
%from non-negative to non-positive, , from non-positive to non-negative.

As we care only the case where $I_{j,k}$ shifts from $S_{j,k-1}$ to $T_{j,k}$, we focus on the conditions under which the source to sink shift happens. From Proposition \ref{prop: lambda_ranges}, we observe that, if $smt_{j,k}(I_{j,k}) < 0$ ($tms_{j,k}(I_{j,k}) > 0$), no matter what the status of its adjacent super-nodes are in, there could exist a range of $\lambda$ in $[\lambda^{(f)}_j, \lambda^{(f)}_{j+1} - 1]$ that $I_{j,k}$ may shift to the sink set. If $smt_{j,k}(I_{j,k}) \geq 0$ ($tms_{j,k}(I_{j,k}) \leq 0$), however, there is a restriction on the status of $I_{j,k}$'s adjacent super-nodes, as follows:
\begin{Cor}\label{cor: smt_condition}
Based on Proposition \ref{prop: lambda_ranges}, if If $smt_{j,k}(I_{j,k}) \geq 0$ ($tms_{j,k}(I_{j,k}) \leq 0$), a necessary condition for $I_{j,k}$ to shift from $S_{j,k-1}$ in $G_{j,k-1}$ to $T_{j,k}$ in $G_{j,k}$ is that both the adjacent super-nodes, $I_{j,k} - 1$ and $I_{j,k} + 1$ (if exist), must be in the sink set $T_{j,k-1}$ (so is in $T_{j,k}$).
\end{Cor}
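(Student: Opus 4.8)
The plan is to prove Corollary \ref{cor: smt_condition} as a direct case analysis layered on top of Proposition \ref{prop: lambda_ranges}; the corollary is essentially a re-reading of that proposition under the extra hypothesis $smt_{j,k}(I_{j,k}) \geq 0$, \ie, $c_{s,I_{j,k}} \geq c_{I_{j,k},t}$, equivalently $c_{I_{j,k},t} - c_{s,I_{j,k}} \leq 0$. I want to establish the claim by ruling out every neighbor configuration in which some existing neighbor of $I_{j,k}$ lies in the source set $S_{j,k-1}$: in each such configuration I show that under $smt_{j,k}(I_{j,k}) \geq 0$ the super-node $I_{j,k}$ remains in the source set for all $\lambda \in [\lambda^{(f)}_j, \lambda^{(f)}_{j+1}-1]$, so no shift occurs. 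What survives is exactly the configurations in which all existing neighbors are already in the sink set, which is the asserted necessary condition.

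First I would treat the interior case $1 < I_{j,k} < n_j$, where both neighbors exist, using part 1 of Proposition \ref{prop: lambda_ranges}. In sub-case 1(a) (both neighbors in $S_{j,k-1}$) the source-to-sink threshold is $(c_{I_{j,k},t} - c_{s,I_{j,k}})/2 = -smt_{j,k}(I_{j,k})/2$, so $smt_{j,k}(I_{j,k}) \geq 0$ renders the shift interval $[0, -smt_{j,k}(I_{j,k})/2)$ empty and $I_{j,k}$ stays in the source set. In sub-case 1(b) (exactly one neighbor in the source set) the proposition allows a shift only when $c_{I_{j,k},t} - c_{s,I_{j,k}} > 0$, \ie, $smt_{j,k}(I_{j,k}) < 0$, so again our hypothesis forbids a shift. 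Hence, among the interior configurations, a shift is possible only in sub-case 1(c), where both neighbors lie in $T_{j,k-1}$.

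Next I would dispatch the boundary case where one neighbor is absent ($I_{j,k} = 1$ or $I_{j,k} = n_j$), reading off part 2 of Proposition \ref{prop: lambda_ranges} and treating the missing neighbor as vacuously satisfying the sink requirement. In sub-case 2(a) (the lone existing neighbor in the source set) the shift threshold is $c_{I_{j,k},t} - c_{s,I_{j,k}} = -smt_{j,k}(I_{j,k}) \leq 0$, so the shift interval is again empty and $I_{j,k}$ remains a source node; only sub-case 2(b), with the existing neighbor in the sink set, permits a shift. Combining the interior and boundary analyses shows that whenever $smt_{j,k}(I_{j,k}) \geq 0$ and $I_{j,k}$ genuinely shifts, every existing adjacent super-node must already belong to $T_{j,k-1}$; the nested cut property (Lemma \ref{lem: nestedness}) then guarantees these neighbors stay in $T_{j,k}$ as well, matching the parenthetical remark in the statement.

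I do not anticipate a real obstacle: the whole argument is bookkeeping of the sign conditions that Proposition \ref{prop: lambda_ranges} has already isolated. The only points requiring mild care are the translation between the proposition's raw expressions $c_{I_{j,k},t} - c_{s,I_{j,k}}$ and the shorthand $smt_{j,k}(I_{j,k}) = c_{s,I_{j,k}} - c_{I_{j,k},t}$, a sign flip that must be tracked so the emptiness of each shift interval is concluded correctly, and the convention that a nonexistent neighbor imposes no constraint, so that the requirement that both neighbors be in the sink set degenerates gracefully at the two endpoints of the bi-path.
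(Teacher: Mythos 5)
Your proposal is correct and matches the paper's intent: the paper states this corollary without an explicit proof, treating it as an immediate reading of Proposition \ref{prop: lambda_ranges}, and your case-by-case check (empty shift intervals in cases 1(a), 1(b), 2(a) under $smt_{j,k}(I_{j,k}) \geq 0$, leaving only the all-neighbors-in-sink configurations) is exactly that implicit argument made explicit. Your appeal to the nested cut property for the parenthetical ``(so is in $T_{j,k}$)'' is also the right justification.
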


The following lemma is key to prove Theorem \ref{thm: induced_graph_breakpoints}:
\begin{Lem}\label{lem: m_breakpoints}
After the computation of minimum cut $(S_{j,k}, T_{j,k})$ for $G_{j,k}$, for each super-node $I$ in $G_{j,k}$,
\begin{enumerate}
\item If $smt_{j,k}(I) \geq 0$,
%either of the following two cases must be true:
%\begin{enumerate}
%  \item $[\lambda^{(f)}_j, \lambda^{(f)}_{j+1}-1]$ is a $I$-source-$\lambda$-interval;
  %\item
  there exists a subinterval, possibly empty, $\emptyset \subseteq [\lambda_{k,\ell}(I), \lambda_{k,r}(I)] \subseteq [\lambda^{(f)}_j, \lambda^{(f)}_{j+1}-1]$ such that $[\lambda_{k,\ell}(I), \lambda_{k,r}(I)]$ is an $I$-sink-$\lambda$-interval, $[\lambda^{(f)}_j, \lambda_{k,\ell}(I)-1]$ and $[\lambda_{k,r}(I)+1, \lambda^{(f)}_{j+1}-1]$ are both $I$-source-$\lambda$-intervals. Note that if $[\lambda_{k,\ell}(I), \lambda_{k,r}(I)] = \emptyset$, then the whole interval $[\lambda^{(f)}_j, \lambda^{(f)}_{j+1}-1]$ is an $I$-source-$\lambda$-interval.
 %\end{enumerate}
\item If $smt_{j,k}(I) < 0$, there exists a $\lambda_k(I) \in [\lambda^{(f)}_j-1, \lambda^{(f)}_{j+1}-1]$ such that $[\lambda^{(f)}_j, \lambda_k(I)]$ is an $I$-sink-$\lambda$-interval and $[\lambda_k(I)+1, \lambda^{(f)}_{j+1}-1]$ is an $I$-source-$\lambda$-interval.
\end{enumerate}
\end{Lem}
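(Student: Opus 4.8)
The plan is to prove Lemma~\ref{lem: m_breakpoints} by induction on $k$, maintaining the claimed structure for \emph{all} super-nodes of PL-FL-$\lambda^{(f)}_j$ simultaneously. For the base case $k=0$ every super-node lies in $S_{j,0}$, so its sink-$\lambda$-interval is empty; since $\alpha = a_{i_1,j_1}-\epsilon$ makes every right sub-gradient negative we have $c_{s,I} > 0 = c_{I,t}$, hence $smt_{j,0}(I) > 0$, and the first bullet holds with an empty middle interval. For the inductive step I would invoke Lemma~\ref{lem: one-node-shift}: in passing from $G_{j,k-1}$ to $G_{j,k}$, the only super-node whose status can depend on the new capacities is $I_{j,k}$, and for every other super-node $I$ both $smt_{j,k}(I) = smt_{j,k-1}(I)$ and its entire source/sink partition are unchanged, so the induction hypothesis transfers verbatim. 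Thus the whole step reduces to analyzing $I := I_{j,k}$.

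The key reformulation is that the new sink-$\lambda$-interval of $I$ equals its \emph{old} sink-$\lambda$-interval (from the hypothesis at $G_{j,k-1}$) together with the set of currently-source $\lambda$ at which $I$ newly shifts at step $k$. By the nested cut property (Lemma~\ref{lem: nestedness}) the old part remains sink, so I only need to describe the new shifts, which Proposition~\ref{prop: lambda_ranges} pins down in terms of the \emph{unchanged} statuses of the two neighbours $I-1, I+1$ and the threshold $smt_{j,k}(I)/2$. Two monotonicities drive the argument: $smt_{j,k}(I)$ is nonincreasing in $k$ (so the threshold $smt_{j,k}(I)/2$ only moves left), and, again by Lemma~\ref{lem: nestedness} applied along increasing $k$ at each fixed $\lambda$, the neighbours' sink-$\lambda$-intervals only grow. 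Using the hypothesis that each neighbour's sink-$\lambda$-interval is itself a left or middle interval, hence right-bounded, the set $\{\lambda : I-1, I+1 \in T_{j,k}\}$ is a right-bounded $\lambda$-interval.

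I would then split on the sign of $smt_{j,k}(I)$. When $smt_{j,k}(I) \geq 0$, Corollary~\ref{cor: smt_condition} forces every shift of $I$ to occur only where both neighbours are already in the sink set, and within that region Proposition~\ref{prop: lambda_ranges} (case 1c) makes $I$ shift exactly for $\lambda > smt_{j,k}(I)/2$; intersecting the right-bounded interval $\{I-1, I+1 \in T_{j,k}\}$ with the ray $(smt_{j,k}(I)/2, +\infty)$ yields a single middle interval, and the two monotonicities above guarantee it \emph{contains} the old sink-$\lambda$-interval, so the union is the one middle interval required by the first bullet. When $smt_{j,k}(I) < 0$, node $I$ prefers the sink set: Proposition~\ref{prop: lambda_ranges} shows $I$ stays or becomes source only where both neighbours are source \emph{and} $\lambda$ exceeds $-smt_{j,k}(I)/2$ (with the analogous boundary statements for $I = 1, n_j$), so the source-$\lambda$-region is a right interval and the sink-$\lambda$-region is the left interval $[\lambda^{(f)}_j, \lambda_k(I)]$ of the second bullet. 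The transition case $smt_{j,k-1}(I) \geq 0 > smt_{j,k}(I)$ is where the old middle interval's left source-flank collapses: once the threshold $-smt_{j,k}(I)/2$ becomes positive, the small-$\lambda$ source values turn to sink and merge with the old middle interval into a single left interval.

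The main obstacle is precisely this \emph{merging} claim --- that the newly shifted $\lambda$-values and the old sink-$\lambda$-interval combine into a single interval of the asserted type, rather than into two disjoint bands. Making it rigorous rests on carefully combining (i) the monotone decrease of $smt_{j,k}(I)/2$, (ii) the monotone growth of the neighbours' sink-$\lambda$-intervals along $k$, and (iii) the inductive interval structure of the neighbours, to control the region $\{I-1, I+1 \in T_{j,k}\}$ and show the new band extends the old one. The sign-crossing subcase of $smt_{j,k}(I)$ (the left-flank collapse) and the boundary super-nodes $I = 1, n_j$ are the fiddly endpoints to be checked separately, but they follow the same template.
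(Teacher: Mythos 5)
Your overall strategy is the same as the paper's: induction on $k$, reduction to the single super-node $I_{j,k}$ via Lemma~\ref{lem: one-node-shift}, the update rule of Proposition~\ref{prop: lambda_ranges}, and the two monotonicities (nonincreasing $smt_{j,k}(I)$, growing sink-$\lambda$-intervals). Your handling of the case $smt_{j,k}(I)\geq 0$ is essentially correct: there the new sink set equals $\{\lambda:\, I-1,I+1\in T_{j,k-1}\}\cap (smt_{j,k}(I)/2,+\infty)$, an intersection of intervals, and Corollary~\ref{cor: smt_condition} applied to $I$ together with the monotonicities shows this intersection contains the old sink-$\lambda$-interval, so the union is one middle interval.

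There is, however, a genuine gap in the case $smt_{j,k}(I)<0$ (including the sign-crossing sub-case). You pass from ``new source $=$ old source $\cap\{$both neighbours source$\}\cap\{\lambda\geq tms_{j,k}(I)/2\}$'' to ``the source-$\lambda$-region is a right interval,'' but the induction hypothesis alone does not make $\{$both neighbours source$\}$ right-anchored inside $I$'s old source region: a neighbour with $smt\geq 0$ is permitted by the hypothesis to carry a nonempty \emph{middle} sink-$\lambda$-interval. If such a middle interval sat strictly inside $I$'s old source region while $\lfloor tms_{j,k}(I)/2\rfloor$ were small, the update rule would produce a new sink set consisting of two disjoint bands --- $[\lambda^{(f)}_j, \lfloor tms_{j,k}(I)/2\rfloor]$ and the band where that neighbour is sink --- contradicting the left-interval conclusion of bullet 2. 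Your ingredients (i)--(iii) cannot exclude this configuration, since (iii) is exactly the per-node interval structure that allows it. The missing idea is the paper's neighbour-side analysis: apply Corollary~\ref{cor: smt_condition} and Proposition~\ref{prop: lambda_ranges} to the \emph{neighbours} $I\pm 1$ and track history --- at the step when a neighbour first shifted, $I$ itself was still a source node at those $\lambda$, which by Proposition~\ref{prop: lambda_ranges} forces that neighbour's $smt$ to be negative at that time, hence forever after. This yields the dichotomy the paper proves inside the induction (its cases (a)/(b)): each neighbour is either all-source on $I$'s source region or has a \emph{left-anchored} sink interval; equivalently, any middle-type sink interval of a neighbour is contained in $I$'s own sink set. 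Only with this cross-node invariant does your merging step, and hence the left-interval conclusion, go through.
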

\begin{proof}
We prove the result by induction on $k$, $k = 0,1,\ldots, q$.

The lemma holds for $k = 0$ because in $G_{j,0}$, $[\lambda^{(f)}_j, \lambda^{(f)}_{j+1}-1]$ is an $I$-source-$\lambda$-interval for all $I \in [1,\ldots,n_j]$. Thus for each super-node $I$, since $smt_{j,0}(I) \geq 0$, we have $[\lambda_{0,\ell}(I), \lambda_{0,r}(I)] = \emptyset$.

Suppose the lemma holds for $k-1 \geq 0$. We prove the lemma also holds for $k$. In $G_{j,k}$, as the only node that can possibly change status is $I_{j,k}$, we only need to consider the node $I_{j,k}$. For other super-node $I \neq I_{j,k}$, we have $[\lambda_{k,\ell}(I), \lambda_{k,r}(I)] = [\lambda_{k-1,\ell}(I), \lambda_{k-1,r}(I)]$ if $smt_{j,k}(I) = smt_{j,k-1}(I) \geq 0$, or $\lambda_{k}(I) = \lambda_{k-1}(I)$ if $smt_{j,k}(I) = smt_{j,k-1}(I) < 0$. We only need prove that, if there is some $I_{j,k}$-source-$\lambda$-interval changes to an $I_{j,k}$-sink-$\lambda$-interval, the lemma still holds for $I_{j,k}$.

%Based on the nested cut property (Lemma \ref{lem: nestedness}), any $I_{j,k}$-sink-$\lambda$-interval for $G_{j,k-1}$ remains an $I_{j,k}$-sink-$\lambda$-interval for $G_{j,k}$. Hence we only need to prove that,

Depending on the sign of $smt_{k-1}(I_{j,k})$ in $G_{j,k-1}$, we consider the two cases separately:
\begin{enumerate}
\item $smt_{j,k-1}(I_{j,k}) \geq 0$ in $G_{j,k-1}$: We first show that, after the computation of minimum cut in $G_{j,k-1}$, for super-nodes $I_{j,k} - 1$ and $I_{j,k} + 1$, either of the following two cases must hold:
	\begin{enumerate}
	\item $[\lambda^{(f)}_j, \lambda^{(f)}_{j+1} - 1]$ is $(I_{j,k}-1)$-source-$\lambda$-interval and $(I_{j,k} + 1)$-source-$\lambda$-interval in $G_{j,k-1}$, so is $G_{j,k}$.
	\item $smt_{j,k-1}(I_{j,k}-1) < 0$ and $smt_{j,k-1}(I_{j,k}+1) < 0$ in $G_{j,k-1}$, so is $G_{j,k}$.
	\end{enumerate}
	Suppose case (a) does not hold, it implies that there are $\lambda$-intervals in $[\lambda^{(f)}_j, \lambda^{(f)}_{j+1} - 1]$ for which $I_{j,k}-1, I_{j,k}+1 \in T_{j,k-1}$. According to Corollary \ref{cor: smt_condition}, since $smt_{p}(I_{j,k}) \geq 0$ for $0 \leq p \leq k-1$, if there ever was a $\lambda$-interval for $I_{j,k}$ that changed from $I_{j,k}$-source-$\lambda$-interval to $I_{j,k}$-sink-$\lambda$-interval, say in $G_{j,k'}\ (k' \leq k-1)$, then both $I_{j,k} - 1$ and $I_{j,k} + 1$ must have been in the sink set. In other words, there exists an $k'' < k'$ such that in $G_{j,k''}$, some $(I_{j,k}-1)$- and $(I_{j,k} + 1)$-sink-$\lambda$-intervals were generated. As in $G_{j,k''}$, $I_{j,k} \in S_{j,k''}$ for $\lambda \in [\lambda^{(f)}_j, \lambda^{(f)}_{j+1}-1]$, according to Proposition \ref{prop: lambda_ranges}, we have $smt_{k''}(I_{j,k}-1) < 0$ and $smt_{k''}(I_{j,k} + 1) < 0$ in $G_{j,k''}$. Recall that $smt_{j,k}(I)$ is nonincreasing in $k$, and $k'' < k' \leq k-1$, therefore case (b) holds.

By the above derivation, if case (a) holds for both $I_{j,k} - 1$ and $I_{j,k} + 1$, then $[\lambda^{(f)}_j, \lambda^{(f)}_{j+1} - 1]$ is also an $I_{j,k}$-source-$\lambda$-interval in $G_{j,k-1}$.

If case (b) holds, then by induction hypothesis, there exist $\lambda_{k-1}(I_{j,k}-1)$ and $\lambda_{k-1}(I_{j,k}+1)$ such that case 2 of the lemma holds.

In $G_{j,k}$, it could be either case that $smt_{j,k}(I_{j,k}) \geq 0$ or $smt_{j,k}(I_{j,k}) < 0$. We consider the two sub-cases separately:
\begin{enumerate}
\item[i.] $smt_{j,k}(I_{j,k}) \geq 0$ in $G_{j,k}$:

If case (a) holds, then according to Proposition \ref{prop: lambda_ranges}, $[\lambda^{(f)}_j, \lambda^{(f)}_{j+1}-1]$ remains $I_{j,k}$-source-$\lambda$-interval in $G_{j,k}$. The lemma holds with $[\lambda_{k,\ell}(I_{j,k}), \lambda_{k,r}(I_{j,k})] = \emptyset$.

%According to Corollary \ref{cor: smt_condition}, if $I_{j,k}$ were to shift from $S_{j,k-1}$ to $T_{j,k}$ for some value of $\lambda$, it must be that $I_{j,k}-1, I_{j,k}+1 \in T_{j,k-1}$. We further show that it must be that $smt_{I_{j,k}-1} < 0$ and $smt_{I_{j,k}+1} < 0$ in $G_{j,k-1}$, and thus $G_{j,k}$, as $smt_I$ is nonincreasing.

%We only prove $smt_{I_{j,k}-1} < 0$ here, as $smt_{I_{j,k}+1} < 0$ can be proved in the same way. We consider the smallest index of $k' \leq k$, for which super-node $I_{j,k'} = I_{j,k}$ (\ie, $[i_{k'}, i_{k}] \subseteq I_{j,k}$) such that there is a new $I_{j,k'}$-sink-$\lambda$-interval generated in $G_{j,k'}$ (note that $smt_{I_{j,k'}} \geq 0$). According to Corollary \ref{cor: smt_condition}, we have $I_{j,k'}-1 = I_{j,k}-1 \in T_{j,k'-1}$. Hence $I_{j,k'}-1$ shifts from an $s$-super-node to a $t$-super-node for some $\lambda$ value in graph $G_{j,k''}$ for some $k'' < k'$. On the other hand, by the definition of $k'$, we have that $I_{j,k'} = I_{j,k} \in S_{j,k''-1}$. Thus by Proposition \ref{prop: lambda_ranges}, we must have $smt_{I_{j,k'}-1} = smt_{I_{j,k}-1} < 0$ (\ie, $tms_{I_{j,k'}-1} = tms_{I_{j,k}-1} > 0$) in $G_{j,k''}$. Recall that $smt_{I}$ is nonincreasing, and $k'' < k' \leq k$, we thus have $smt_{I_{j,k}-1} < 0$ in $G_{j,k-1}$, and thus $G_{j,k}$.

    If case (b) holds, then by Proposition \ref{prop: lambda_ranges}, in $G_{j,k}$, only for $\lambda \in [\lceil smt_{j,k}(I_{j,k})/2 \rceil, \min\{\lambda_{k}(I_{j,k}-1), \lambda_{k}(I_{j,k}+1)\}]$, if $I_{j,k} \in S_{j,k-1}$ in $G_{j,k-1}$, then $I_{j,k}$ shifts to $T_{j,k}$ in $G_{j,k}$. According to the induction hypothesis, if the interval $[\lambda_{k-1,\ell}(I_{j,k}), \lambda_{k-1,r}(I_{j,k})]$ is empty, then in $G_{j,k}$, we have
    \begin{equation}\label{eqn: lambda-breakpoint1}
    \begin{split}
    &\lambda_{k,\ell}(I_{j,k}) = \max\{\lceil smt_{j,k}(I_{j,k})/2 \rceil, \lambda^{(f)}_j\},\\
    &\lambda_{k,r}(I_{j,k}) = \min\{\lambda_{k}(I_{j,k}-1), \lambda_{k}(I_{j,k}+1)\}.
    \end{split}
    \end{equation}
     Otherwise, there exists a $k' < k$ such that $\lambda_{k-1,\ell}(I_{j,k}) = \lambda_{k',\ell}(I_{j,k}) = \max\{\lceil smt_{k'}(I_{j,k}) / 2 \rceil, \lambda^{(f)}_j\}$ and $\lambda_{k-1,r}(I_{j,k}) = \lambda_{k',r}(I_{j,k}) = \min\{\lambda_{k'}(I_{j,k}-1), \lambda_{k'}(I_{j,k}+1)\}$. Since $smt_{j,k}(I_{j,k}) \leq smt_{k'}(I_{j,k})$, $\lambda_{k}(I_{j,k} - 1) \geq \lambda_{k'}(I_{j,k}-1)$, and $\lambda_{k}(I_{j,k} + 1) \geq \lambda_{k'}(I_{j,k}+1)$, we have, in $G_{j,k}$,
    \begin{equation*}
    [\lambda_{k-1, \ell}(I_{j,k}), \lambda_{k-1,r}(I_{j,k})] \subseteq [\max\{\lceil smt_{j,k}(I_{j,k})/2 \rceil, \lambda^{(f)}_j\}, \min\{\lambda_{k}(I_{j,k}-1), \lambda_{k}(I_{j,k}+1)\}].
    \end{equation*}
Hence in $G_{j,k}$,
   \begin{equation}\label{eqn: lambda-breakpoint2}
   [\lambda_{k,\ell}(I_{j,k}), \lambda_{k,r}(I_{j,k})] = [\max\{\lceil smt_{j,k}(I_{j,k})/2 \rceil, \lambda^{(f)}_j\}, \min\{\lambda_{k}(I_{j,k}-1), \lambda_{k}(I_{j,k}+1)\}].
   \end{equation}
   The lemma holds. The case (\ref{eqn: lambda-breakpoint2}) is illustrated in Figure \ref{figure: smt_positive_positive}. Note that at most one additional $\lambda$-breakpoint, $\lceil smt_{j,k}(I_{j,k})/2 \rceil$, is increased.
    \begin{figure}[!h]
    \centering
    \includegraphics[scale=0.8]{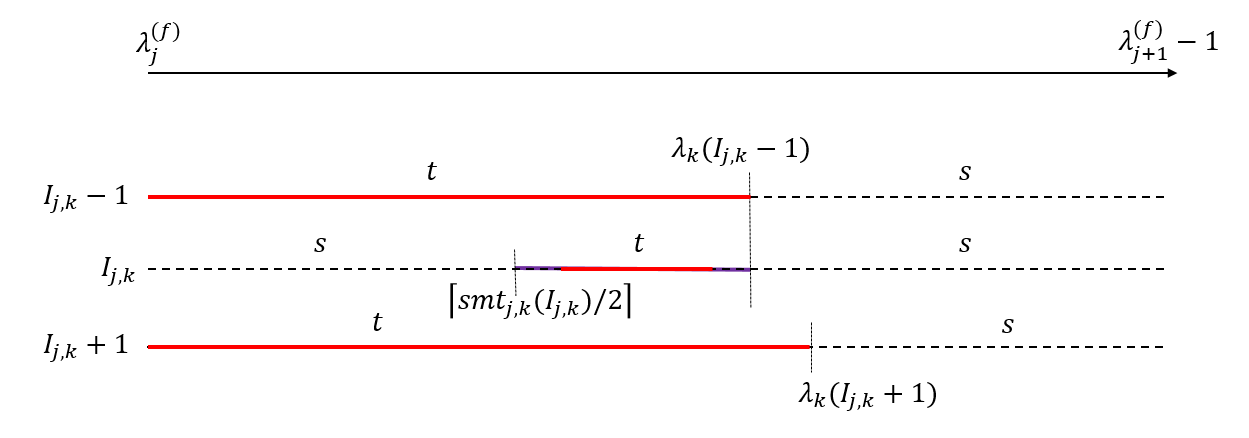}
    \caption{Illustration of the case (\ref{eqn: lambda-breakpoint2}). In this case, $smt_{j,k-1}(I_{j,k}) \geq 0$ in $G_{j,k-1}$, $smt_{j,k}(I_{j,k}) \geq 0$ in $G_{j,k}$, and $smt_{j,k}(I_{j,k}-1), smt_{j,k}(I_{j,k}+1) < 0$. The top black arrow denotes the $\lambda$ interval $[\lambda^{(f)}_j, \lambda^{(f)}_{j+1}-1]$. For super-nodes $I_{j,k}-1$, $I_{j,k}$ and $I_{j,k}+1$, the respective solid red line segment denotes the $\lambda$ interval for which the super-node is in $T_{j,k-1}$ in $G_{j,k-1}$. The solid purple line segments denote the new $I_{j,k}$-sink-$\lambda$-intervals introduced in $G_{j,k}$. The remaining dashed line segments denote the source-$\lambda$-intervals in $G_{j,k}$.}\label{figure: smt_positive_positive}
    \end{figure}
%    By the induction hypothesis on super-node $I_{j,k}$, no matter whether node $I_{j,k}$ was in case 1-(a) or case 1-(b) in $G_{j,k-1}$, there is still at most a single $I_{j,k}$-sink-$\lambda$-interval after the computation of minimum cuts for $G_{j,k}$, which is specified in case 1-(b).

\item[ii.] $smt_{j,k}(I_{j,k}) < 0$ in $G_{j,k}$:

If case (a) holds, then according to Proposition \ref{prop: lambda_ranges}, if $\lfloor tms_{j,k}(I_{j,k}) / 2 \rfloor > \lambda^{(f)}_j$, then
\begin{equation}\label{eqn: lambda-breakpoint3}
\begin{split}
\lambda_{k}(I_{j,k}) = \min\{\lfloor tms_{j,k}(I_{j,k})/2 \rfloor, \lambda^{(f)}_{j+1}-1\}
\end{split}
\end{equation}
such that $[\lambda^{(f)}_j, \lambda_{k}(I_{j,k})]$ becomes $I_{j,k}$-sink-$\lambda$-interval in $G_{j,k}$, otherwise $[\lambda^{(f)}_j, \lambda^{(f)}_{j+1}-1]$ remains an $I_{j,k}$-source-$\lambda$-interval ($\lambda_{k}(I_{j,k}) = \lambda^{(f)}_j - 1$). Thus the lemma holds. The case (\ref{eqn: lambda-breakpoint3}) is illustrated in Figure \ref{figure: smt_positive_negative}.
    \begin{figure}[!h]
    \centering
    \includegraphics[scale=0.8]{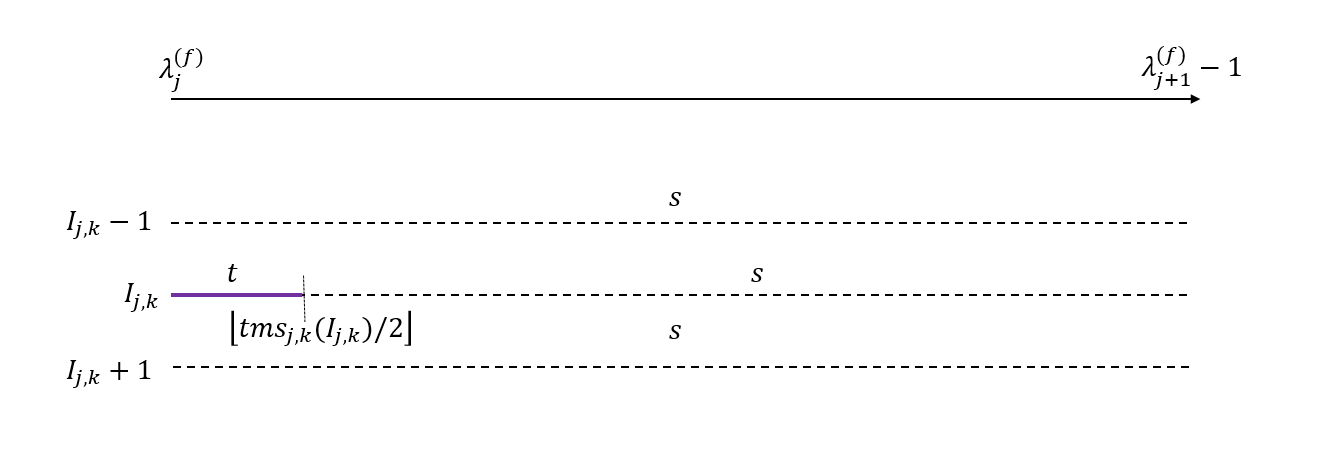}
    \caption{Illustration of the case (\ref{eqn: lambda-breakpoint3}). In this case, $smt_{j,k-1}(I_{j,k}) \geq 0$ in $G_{j,k-1}$, $smt_{j,k}(I_{j,k}) < 0$ in $G_{j,k}$, and $[\lambda^{(f)}_j, \lambda^{(f)}_{j+1}-1]$ is $(I_{j,k}-1)$-source-$\lambda$-interval and $(I_{j,k}+1)$-source-$\lambda$-interval in $G_{j,k}$. The top black arrow denotes the $\lambda$ interval $[\lambda^{(f)}_j, \lambda^{(f)}_{j+1}-1]$. The solid purple line segments denote the new $I_{j,k}$-sink-$\lambda$-intervals introduced in $G_{j,k}$. The remaining dashed line segments denote the source-$\lambda$-intervals in $G_{j,k}$.}\label{figure: smt_positive_negative}
    \end{figure}

If case (b) holds, then by Proposition \ref{prop: lambda_ranges}, the interval $[\lambda^{(f)}_j, \lambda^{(\max)}_{k} = \max\{ \lambda_{k}(I_{j,k}-1), \lambda_{k}(I_{j,k}+1)\}]$ must be $I_{j,k}$-sink-$\lambda$-interval, as at least one of $I_{j,k}-1$ and $I_{j,k} + 1$ is in $T_{j,k-1}$. If $\lfloor tms_{j,k}(I_{j,k}) / 2\rfloor > \lambda^{(\max)}_{k}$, the right endpoint can be further extended to $\min\{\lfloor tms_{j,k}(I_{j,k}) / 2\rfloor, \lambda^{(f)}_{j+1}-1\}$. Therefore we have
	\begin{equation}\label{eqn: lambda-breakpoint4}
	\lambda_k(I_{j,k}) = \begin{cases}
					\min\{\lfloor tms_{j,k}(I_{j,k}) / 2\rfloor, \lambda^{(f)}_{j+1}-1\},\ &\text{if}\ \lfloor tms_{j,k}(I_{j,k}) / 2\rfloor > \lambda^{(\max)}_{k},\\
					\lambda^{(\max)}_{k},\ &\text{otherwise}.
					\end{cases}
	\end{equation}
	The case (\ref{eqn: lambda-breakpoint4}) is illustrated in Figure \ref{figure: smt_positive_negative_2}. Note that in both cases, at most one additional $\lambda$-breakpoint, $\lfloor tms_{j,k}(I_{j,k}) / 2\rfloor$, is introduced.
    \begin{figure}[!h]
    \centering
    \includegraphics[scale=0.8]{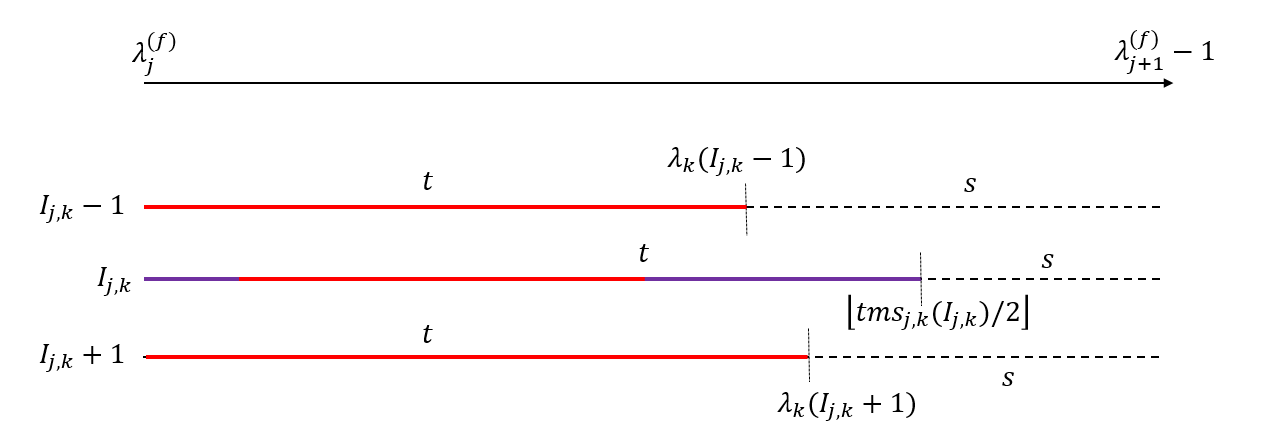}
    \caption{Illustration of the case (\ref{eqn: lambda-breakpoint4}). In this case, $smt_{j,k-1}(I_{j,k}) \geq 0$ in $G_{j,k-1}$, $smt_{j,k}(I_{j,k}) < 0$ in $G_{j,k}$, and $smt_{j,k}(I_{j,k}-1), smt_{j,k}(I_{j,k}+1) < 0$. The top black arrow denotes the $\lambda$ interval $[\lambda^{(f)}_j, \lambda^{(f)}_{j+1}-1]$. For super-nodes $I_{j,k}-1$, $I_{j,k}$ and $I_{j,k}+1$, the respective solid red line segment denotes the $\lambda$ interval for which the super-node is in $T_{j,k-1}$ in $G_{j,k-1}$. The solid purple line segments denote the new $I_{j,k}$-sink-$\lambda$-intervals introduced in $G_{j,k}$. The remaining dashed line segments denote the source-$\lambda$-intervals in $G_{j,k}$.}\label{figure: smt_positive_negative_2}
    \end{figure}
\end{enumerate}

\item $smt_{j,k-1}{I_{j,k}} < 0$ in $G_{j,k-1}$: It implies that $smt_{j,k}(I_{j,k}) \leq smt_{j,k-1}(I_{j,k}) < 0$ in $G_{j,k}$. By induction hypothesis, there exists an $\lambda_{k-1}(I_{j,k})$ such that $[\lambda_{k-1}(I_{j,k})+1, \lambda^{(f)}_{j+1}-1]$ is an $I_{j,k}$-source-$\lambda$-interval in $G_{j,k-1}$. Consider the status of super-nodes $I_{j,k}-1$ and $I_{j,k}+1$. If $smt_{j,k-1}(I_{j,k}-1) \geq 0$, since $I_{j,k}$ is $s$-super-node until $G_{j,k-1}$ for $\lambda \in [\lambda_{k-1}(I_{j,k})+1, \lambda^{(f)}_{j+1}-1]$, then by Proposition \ref{prop: lambda_ranges}, it must be that $[\lambda_{k-1}(I_{j,k})+1, \lambda^{(f)}_{j+1}-1]$ is an $(I_{j,k}-1)$-source-$\lambda$-interval in $G_{j,k-1}$, so is $G_{j,k}$. Otherwise $smt_{j,k-1}(I_{j,k}-1) < 0$, then by the induction hypothesis, there exists a $\lambda_{k-1}(I_{j,k}-1)$ such that $[\lambda^{(f)}_j, \lambda_{k-1}(I_{j,k}-1)]$ is an $(I_{j,k}-1)$-sink-$\lambda$-interval and $[\lambda_{k-1}(I_{j,k}-1)+1, \lambda^{(f)}_{j+1}-1]$ is an $(I_{j,k}-1)$-source-$\lambda$-interval. The same results hold for super-node $I_{j,k} + 1$.

To summarize, in the $I_{j,k}$-source-$\lambda$-interval $[\lambda_{k-1}(I_{j,k})+1, \lambda^{(f)}_{j+1}-1]$, for each of the two super-nodes $I_{j,k} - 1$ and $I_{j,k}+1$, either the the whole interval $[\lambda_{k-1}(I_{j,k})+1, \lambda^{(f)}_{j+1}-1]$ is a source-$\lambda$-interval, or the interval is dichotomized into two segments, with the left being a sink-$\lambda$-interval and the right being a source-$\lambda$-interval.

    In the first case, by Proposition \ref{prop: lambda_ranges}, in $G_{j,k}$, we have
	\begin{equation}\label{eqn: lambda-breakpoint5}
	\lambda_k(I_{j,k}) = \begin{cases}
    				\min\{\lfloor tms_{j,k}(I_{j,k})/2 \rfloor, \lambda^{(f)}_{j+1}-1\},\ &\text{if}\ \lfloor tms_{j,k}(I_{j,k})/2 \rfloor > \lambda_{k-1}(I_{j,k}),\\
				\lambda_{k-1}(I_{j,k}),\ &\text{otherwise}.
				    \end{cases}
	\end{equation}
    One example of the case (\ref{eqn: lambda-breakpoint5}) is illustrated in Figure \ref{figure: smt_negative_1}.
    \begin{figure}[!h]
    \centering
    \includegraphics[scale=0.8]{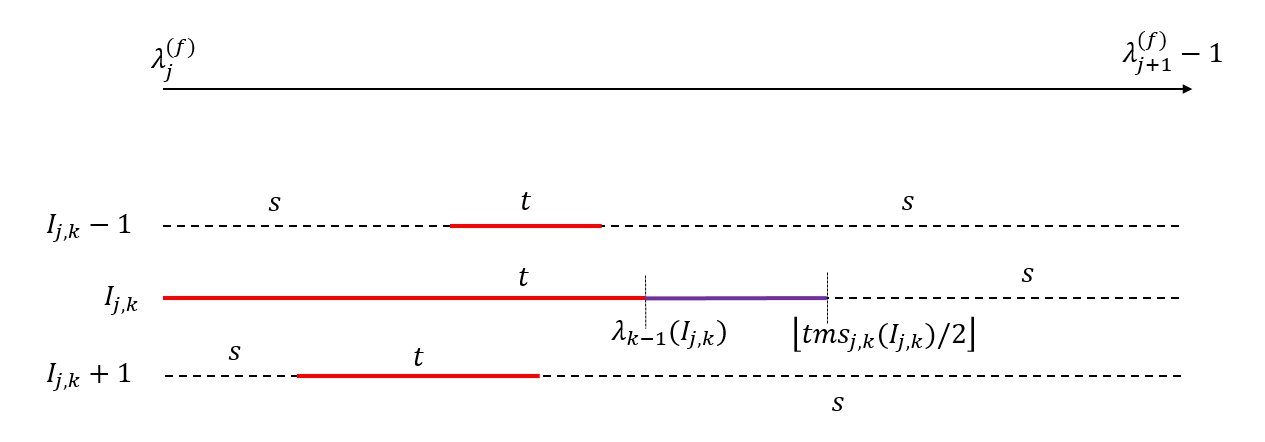}
    \caption{Illustration of one example of the case (\ref{eqn: lambda-breakpoint5}). In this example, for $I_{j,k}$, $smt_{j,k-1}(I_{j,k}) < 0$ in $G_{j,k-1}$ and $smt_{j,k}(I_{j,k}) < 0$ in $G_{j,k}$. For $I_{j,k}-1$ and $I_{j,k} + 1$, $smt_{j,k}(I_{j,k}-1), smt_{j,k}(I_{j,k}+1) \geq 0$ (it could be $smt_{j,k}(I_{j,k}-1) < 0$ and/or $smt_{j,k}(I_{j,k}+1) < 0$ that falls into the first case of (\ref{eqn: lambda-breakpoint5})). The top black arrow denotes the $\lambda$ interval $[\lambda^{(f)}_j, \lambda^{(f)}_{j+1}-1]$. For super-nodes $I_{j,k}-1$, $I_{j,k}$ and $I_{j,k}+1$, the respective solid red line segment denotes the $\lambda$ interval for which the super-node is in $T_{j,k-1}$ in $G_{j,k-1}$. The solid purple line segments denote the new $I_{j,k}$-sink-$\lambda$-intervals introduced in $G_{j,k}$. The remaining dashed line segments denote the source-$\lambda$-intervals in $G_{j,k}$.}\label{figure: smt_negative_1}
    \end{figure}

In the second case, let $\lambda^{(\max)}_{k}$ be the dichotomy point in $[\lambda_{k-1}(I_{j,k})+1, \lambda^{(f)}_{j+1}-1]$ such that in $[\lambda_{k-1}(I_{j,k}) + 1, \lambda^{\max}_{k}]$ is a sink-$\lambda$-interval for at least one of $I_{j,k}-1$ or $I_{j,k}+1$, and $[\lambda^{\max}_{k} + 1, \lambda^{(f)}_{j+1} - 1]$ is source-$\lambda$-interval for both $I_{j,k}-1$ and $I_{j,k}+1$. Note that $\lambda^{(\max)}_{k}$ is the largest of $\lambda_{k}(I_{j,k} - 1)\ (=\lambda_{k-1}(I_{j,k}-1))$ and $\lambda_{k}(I_{j,k} + 1)\ (=\lambda_{k-1}(I_{j,k}+1))$, if exist. Then in $G_{j,k}$, by Proposition \ref{prop: lambda_ranges}, we have
	\begin{equation}\label{eqn: lambda-breakpoint6}
	\lambda_k(I_{j,k}) = \begin{cases}
    				\min\{\lfloor tms_{j,k}(I_{j,k})/2 \rfloor, \lambda^{(f)}_{j+1}-1\},\ &\text{if}\ \lfloor tms_{j,k}(I_{j,k})/2 \rfloor > \lambda^{(\max)}_{k},\\
				\lambda^{(\max)}_{k},\ &\text{otherwise}.
				    \end{cases}
	\end{equation}
    Two examples of the case (\ref{eqn: lambda-breakpoint6}) are illustrated in Figure \ref{figure: smt_negative_2}. Note that in either case, at most one additional $\lambda$-breakpoint, $\lfloor tms_{j,k}(I_{j,k})/2 \rfloor$, is introduced.
    \begin{figure}[!h]
    \centering
    \subfloat[$smt_{j,k}(I_{j,k}-1) \geq 0$ and $smt_{j,k}(I_{j,k}+1) < 0$ in $G_{j,k}.$]{\includegraphics[scale=0.8]{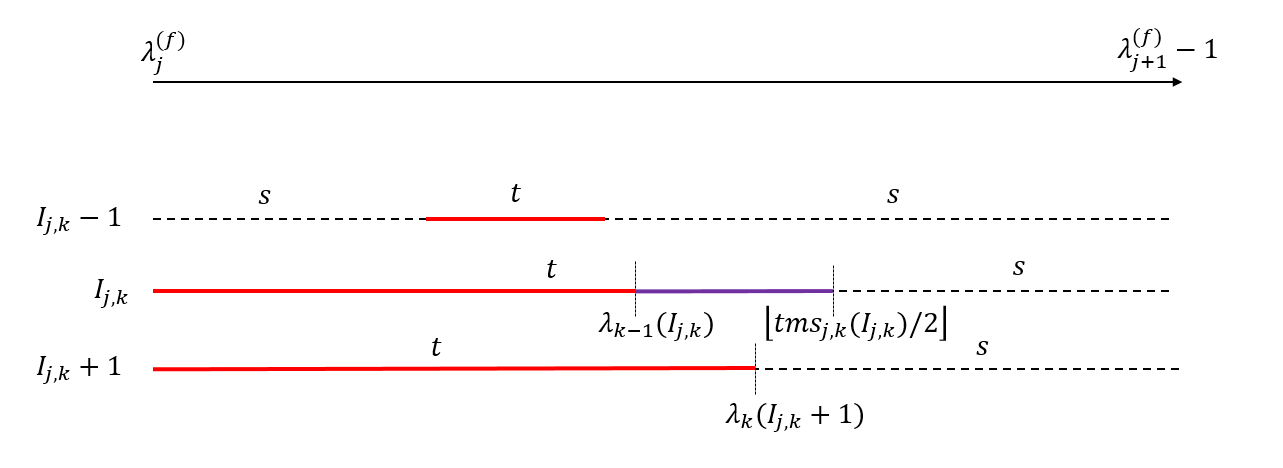}}\\
    \subfloat[$smt_{j,k}(I_{j,k}-1) < 0$ and $smt_{j,k}(I_{j,k}+1) < 0$ in $G_{j,k}$.]{\includegraphics[scale=0.8]{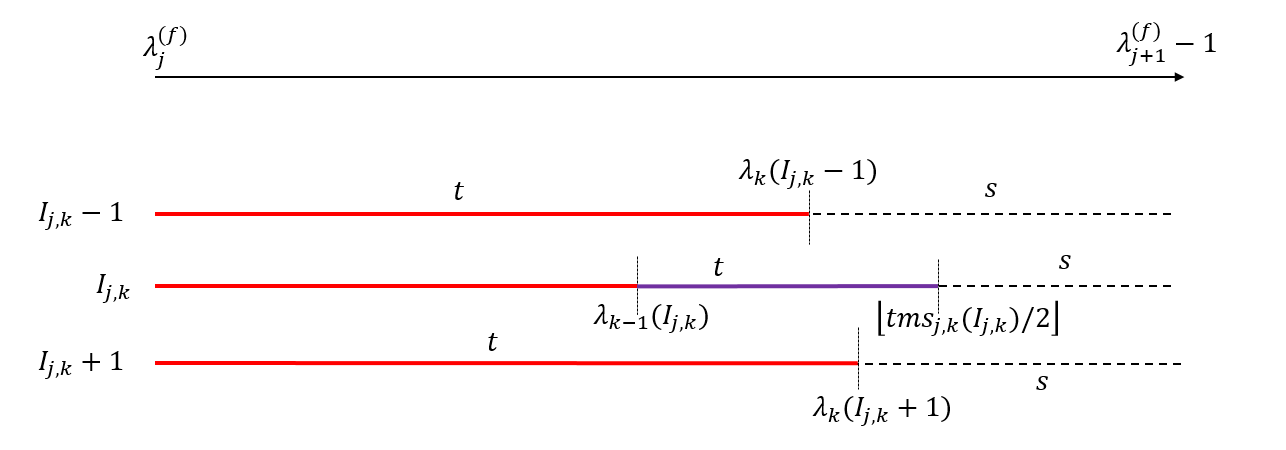}}
    \caption{Illustration of two examples of the case (\ref{eqn: lambda-breakpoint6}). In both examples, for $I_{j,k}$, $smt_{j,k-1}(I_{j,k}) < 0$ in $G_{j,k-1}$ and $smt_{j,k}(I_{j,k}) < 0$ in $G_{j,k}$. The top black arrow denotes the $\lambda$ interval $[\lambda^{(f)}_j, \lambda^{(f)}_{j+1}-1]$. For super-nodes $I_{j,k}-1$, $I_{j,k}$ and $I_{j,k}+1$, the respective solid red line segment denotes the $\lambda$ interval for which the super-node is in $T_{j,k-1}$ in $G_{j,k-1}$. The solid purple line segments denote the new $I_{j,k}$-sink-$\lambda$-intervals introduced in $G_{j,k}$. The remaining dashed line segments denote the source-$\lambda$-intervals in $G_{j,k}$.}\label{figure: smt_negative_2}
    \end{figure}
\end{enumerate}

The above analysis assumes the existence of both $I_{j,k}-1$ and $I_{j,k}+1$. For the corner cases of $I_{j,k} = 1$ ($I_{j,k} - 1$ does not exist) and $I_{j,k} = n_j$ ($I_{j,k} + 1$ does not exist), one can easily check that the results (\ref{eqn: lambda-breakpoint1}) to (\ref{eqn: lambda-breakpoint6}) hold by simply changing $smt_{j,k}(I_{j,k})/2$ to $smt_{j,k}(I_{j,k})$ (thus changing $tms_{j,k}(I_{j,k})/2$ to $tms_{j,k}(I_{j,k})$), with the introduced $\lambda$-breakpoints changed accordingly. Hence the lemma holds for $k$ and we complete the proof.
\end{proof}

%Lemma \ref{lem: m_breakpoints} can be re-interpreted as:
%\begin{Cor}\label{cor: reinterpret-lemma}
%After the computation of minimum cuts $(S_{j,k}, T_{j,k})$ for $G_{j,k}$, for each super-node $I$ in $G_{j,k}$, there exists an interval $[\lambda_\ell, \lambda_r]: \emptyset \subseteq [\lambda_\ell, \lambda_r] \subseteq [\lambda^{(f)}, \lambda^{(f)}_{j+1}-1]$, such that $[\lambda_{\ell}, \lambda_r]$ is an $I$-sink-$\lambda$-interval and $[\lambda^{(f)}, \lambda^{(f)}_{j+1}-1]\setminus [\lambda_\ell, \lambda_r]$ are unions of at most 2 $I$-source-$\lambda$-interval. Furthermore, if $smt_{I} < 0$, then $\lambda_\ell = \lambda^{(f)}$.
%\end{Cor}

Based on Lemma \ref{lem: m_breakpoints}, in each $G_{j,k} (k \in [q])$, at most one $\lambda$-breakpoint is introduced. Therefore Theorem \ref{thm: induced_graph_breakpoints} holds.

As there are at most $n$ reduced PL-FL problems, PL-FL-$\lambda^{(f)}_j$ for $j = 0, 1, \ldots, n-1$, by Theorem \ref{thm: induced_graph_breakpoints}, we immediately have:
\begin{Cor}\label{cor: total-fusing-num}
The total number of $\lambda$-breakpoints in PL-FL (\ref{prob: PL-FL-lambda}) over all nodes for $\lambda \geq 0$ is at most $qn + n - 1$.
\end{Cor}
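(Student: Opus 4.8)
The plan is to assemble the corollary directly from the two structural results already in hand: Corollary \ref{cor: fusing-num}, which caps the number of fusing $\lambda$ values at $n-1$, and Theorem \ref{thm: induced_graph_breakpoints}, which caps the number of $\lambda$-breakpoints inside a single fusing-free window at $q$. First I would fix the notation for the windows. Writing $\lambda^{(f)}_0 = 0 < \lambda^{(f)}_1 < \cdots < \lambda^{(f)}_p$ for the fusing values (with $p \le n-1$ by Corollary \ref{cor: fusing-num} and $\lambda^{(f)}_{p+1} = +\infty$), these values slice $[0,+\infty)$ into the $p+1 \le n$ integer windows $[\lambda^{(f)}_j, \lambda^{(f)}_{j+1}-1]$, $j = 0,\ldots,p$. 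By the definition of a fusing value, the grouping of coordinates into super-nodes does not change throughout each such window, so the reduced problem PL-FL-$\lambda^{(f)}_j$ reproduces the optimal solution of PL-FL (\ref{prob: PL-FL-lambda}) exactly on that window.

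Next I would transfer the per-window count from super-nodes back to the original nodes. For $\lambda$ in the window $[\lambda^{(f)}_j, \lambda^{(f)}_{j+1}-1]$, every original node $i$ shares its optimal value with the super-node $I \ni i$, so a value of $\lambda$ is a $\lambda$-breakpoint for some original node inside this window precisely when it is a $\lambda$-breakpoint for the corresponding super-node in PL-FL-$\lambda^{(f)}_j$. Hence the set of $\lambda$-breakpoint \emph{locations} contributed by this window coincides with the set of super-node $\lambda$-breakpoint locations of PL-FL-$\lambda^{(f)}_j$, whose cardinality is at most $q$ by Theorem \ref{thm: induced_graph_breakpoints}. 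Summing over the $p+1 \le n$ windows gives at most $qn$ breakpoint locations of this internal type.

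Finally I would account for the window boundaries themselves. Each of the $p \le n-1$ fusing values $\lambda^{(f)}_1,\ldots,\lambda^{(f)}_p$ is, by definition, a $\lambda$ value at which a pair of coordinates merges and thereby changes its optimal value, so each is itself a $\lambda$-breakpoint; moreover these are disjoint from the strictly-internal breakpoints counted above. Adding them yields the claimed total of at most $qn + (n-1) = qn + n - 1$.

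The step that needs the most care is the bookkeeping in the second paragraph, namely pinning down the right meaning of ``number of $\lambda$-breakpoints over all nodes.'' If one naively summed the per-node breakpoint counts, a single super-node breakpoint would be charged once for each of the up to $n$ original nodes it contains, inflating the per-window bound and the grand total to $\Theta(n^2 q)$. The bound $qn + n - 1$ instead relies on counting distinct breakpoint \emph{locations}: because all original nodes inside a super-node change value simultaneously, the set of original-node breakpoint locations within a window is identical to the set of super-node breakpoint locations, so Theorem \ref{thm: induced_graph_breakpoints} applies without any factor-$n$ blow-up. I would also verify explicitly that the fusing values are not silently double-counted with the in-window totals, which holds because Theorem \ref{thm: induced_graph_breakpoints} counts only breakpoints occurring strictly inside a window on which the super-node partition is already frozen.
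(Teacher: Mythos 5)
Your proof is correct and follows essentially the same route as the paper's: partition $[0,+\infty)$ into at most $n$ windows via Corollary \ref{cor: fusing-num}, apply the per-window bound of $q$ from Theorem \ref{thm: induced_graph_breakpoints}, and add the at most $n-1$ fusing values themselves. Your explicit remark that one must count distinct breakpoint locations (super-node events) rather than summing per-original-node occurrences is a sensible clarification of a convention the paper leaves implicit, but it does not change the argument.
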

\begin{proof}
Since each PL-FL-$\lambda^{(f)}_j$ contains at most $q$ $\lambda$-breakpoints, and there are at most $n$ such problems, so the total number is at most $qn$. The additional $n - 1$ accounts for the fusing $\lambda$ values.
\end{proof}

Besides the bound on the number of $\lambda$-breakpoints, from the proof of Lemma \ref{lem: m_breakpoints}, we obtain interesting structure characterizations on the $\lambda$-breakpoints and the path of solutions. For $\lambda$-breakpoints, we immediately have the following corollary:
\begin{Cor}\label{cor: lambda-breakpoint-structure}
In PL-FL (\ref{prob: PL-FL-lambda}), each $\lambda$-breakpoint is of value equals to either (1) $\lceil smt_{j,k}(I_{j,k}) / 2\rceil$ (or $\lceil smt_{j,k}(I_{j,k})\rceil$ if $I_{j,k} = 1$ or $I_{j,k} = n_j$) for some PL-FL-$\lambda^{(f)}_j$ and $k$ if $smt_{j,k}(I_{j,k}) \geq 0$; or (2) $\lfloor tms_{j,k}(I_{j,k})/2\rfloor$ (or $\lfloor tms_{j,k}(I_{j,k}) \rfloor$ if $I_{j,k} = 1$ or $I_{j,k} = n_j$) for some PL-FL-$\lambda^{(f)}_j$ and $k$ if $tms_{j.k}(I_{j,k}) > 0$.
\end{Cor}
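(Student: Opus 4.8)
The plan is to read the two stated forms directly off the explicit endpoint formulas derived in the proof of Lemma~\ref{lem: m_breakpoints}, so I first recast what a $\lambda$-breakpoint of a node means in the language of that lemma. By Lemma~\ref{lem: piecewise_constant} together with the nested cut property (Lemma~\ref{lem: nestedness}), the sink-$\lambda$-intervals of a fixed super-node $I$ grow monotonically in $k$ (as $\alpha = a_{i_k,j_k}$ increases, the source set shrinks), so for each $\lambda$ the threshold $k^*_I(\lambda) = \min\{k : I \in T_{j,k}\}$ is exactly the index with $x^*_I(\lambda) = a_{i_{k^*_I(\lambda)},\, j_{k^*_I(\lambda)}}$. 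Hence a value of $\lambda$ is a $\lambda$-breakpoint of $I$ (apart from the range anchors) precisely when $k^*_I$ jumps there, and by Lemma~\ref{lem: one-node-shift} this can only happen at a step $k$ with $I = I_{j,k}$, at which point the breakpoint value is one of the endpoints $\lambda_{k,\ell}(I_{j,k})$, $\lambda_{k,r}(I_{j,k})$, or $\lambda_k(I_{j,k})$ that the proof of Lemma~\ref{lem: m_breakpoints} introduces. Thus the corollary reduces to classifying the closed-form values of those endpoints.

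Second, I would enumerate those values across equations~(\ref{eqn: lambda-breakpoint1})--(\ref{eqn: lambda-breakpoint6}). In every case the endpoint that is genuinely \emph{created} at step $k$ (as opposed to copied from a neighbour) equals $\lceil smt_{j,k}(I_{j,k})/2\rceil$ when $smt_{j,k}(I_{j,k}) \geq 0$ (the left endpoint of (\ref{eqn: lambda-breakpoint1}) and (\ref{eqn: lambda-breakpoint2})) and $\lfloor tms_{j,k}(I_{j,k})/2\rfloor$ when $smt_{j,k}(I_{j,k}) < 0$, \ie, $tms_{j,k}(I_{j,k}) > 0$ (as in (\ref{eqn: lambda-breakpoint3}), (\ref{eqn: lambda-breakpoint4}), (\ref{eqn: lambda-breakpoint5}), (\ref{eqn: lambda-breakpoint6})). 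These already match forms (1) and (2) in the statement. Whenever such a formula instead clips to $\lambda^{(f)}_j$ or $\lambda^{(f)}_{j+1}-1$, the resulting value is a fusing boundary rather than an interior value-change, and is accounted for separately among the $n-1$ fusing $\lambda$ values of Corollary~\ref{cor: total-fusing-num}.

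The remaining work, and the only real subtlety, is to dispose of the endpoints in those same formulas that are \emph{inherited} rather than created, namely $\min\{\lambda_k(I_{j,k}-1), \lambda_k(I_{j,k}+1)\}$ in (\ref{eqn: lambda-breakpoint1})/(\ref{eqn: lambda-breakpoint2}) and $\lambda^{(\max)}_k$ in (\ref{eqn: lambda-breakpoint4})/(\ref{eqn: lambda-breakpoint6}). Each such value is a $\lambda_k(\cdot)$ of an adjacent super-node, and by the structure of Proposition~\ref{prop: lambda_ranges} a neighbour can contribute a $\lambda_k(\cdot)$ endpoint only through its own $smt<0$ branch; so when that endpoint was first produced, at the earlier step $k' < k$ where the neighbour was the active node, it already carried the form $\lfloor tms_{j,k'}(\cdot)/2\rfloor$, i.e.\ form (2). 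I would make this rigorous by an induction on $k$ running in parallel with the induction of Lemma~\ref{lem: m_breakpoints}, with inductive hypothesis that every endpoint present after step $k-1$ equals $\lceil smt/2\rceil$ or $\lfloor tms/2\rfloor$ for some earlier index; the classification above then shows the at-most-one endpoint added at step $k$ preserves this. Finally I would treat the corner cases $I_{j,k}=1$ and $I_{j,k}=n_j$ exactly as in the last paragraph of the proof of Lemma~\ref{lem: m_breakpoints}, where the division by $2$ is dropped, yielding $\lceil smt_{j,k}(I_{j,k})\rceil$ and $\lfloor tms_{j,k}(I_{j,k})\rfloor$ and matching the parenthetical variants in the statement. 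I expect the tracing of inherited endpoints to be the main (though modest) obstacle, since it is the only point where one must argue that copying an endpoint never manufactures a value of a new form.
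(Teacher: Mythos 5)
Your proof is correct and follows essentially the same route as the paper, which presents this corollary as an immediate consequence of the endpoint formulas (\ref{eqn: lambda-breakpoint1})--(\ref{eqn: lambda-breakpoint6}) in the proof of Lemma \ref{lem: m_breakpoints}, including the corner-case variants without the division by $2$. Your explicit induction showing that inherited endpoints (the $\min\{\lambda_k(I_{j,k}-1),\lambda_k(I_{j,k}+1)\}$ and $\lambda^{(\max)}_k$ terms) trace back to values of form (2) created at earlier steps when the neighbour was the active node is precisely the bookkeeping the paper leaves implicit by counting only newly introduced $\lambda$-breakpoints.
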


To characterize the structure of the path of solutions, we first define a piecewise constant function as \emph{piecewise-constant-quasi-convex} if the following holds:
\begin{Def}
A piecewise constant function $f(x)$ is piecewise-constant-quasi-convex if the list of constant values attained by the function, as $x$ increases, are (i) monotone decrease, or (ii) monotone increase, or (iii) first monotone decrease and then monotone increase.
\end{Def}
Recall from Lemma \ref{lem: piecewise_constant} that the path of solutions are piecewise constant. We have the following local piecewise-constant-quasi-convexity property on the structure of the path of solutions:
\begin{Cor}
In PL-FL-$\lambda^{(f)}_j$, for each super-node $I$ and $\lambda \in [\lambda^{(f)}_j, \lambda^{(f)}_{j+1}-1]$, its optimal solution $x^*_{I}(\lambda)$ as a function of $\lambda$ is piecewise-constant-quasi-convex. Therefore, in PL-FL, for each node $i \in [n]$ and $\lambda \geq 0$, its optimal solution $x^*_i(\lambda)$ as a function of $\lambda$ is locally piecewise-constant-quasi-convex for each $\lambda$-interval $[\lambda^{(f)}_j, \lambda^{(f)}_{j+1} - 1]$.
\end{Cor}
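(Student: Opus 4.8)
The plan is to reduce the claimed quasi-convexity to a statement about sublevel sets of $x^*_I(\cdot)$ and then to read off the required interval structure directly from Lemma \ref{lem: m_breakpoints}. Recall that in PL-FL-$\lambda^{(f)}_j$, for every super-node $I$ one has $x^*_I(\lambda) = a_{i_{k^*},j_{k^*}}$, where $k^*(\lambda) = \min\{k : I \in T_{j,k}\}$ is the first index at which $I$ shifts from the source set to the sink set (the index $k$ with $I\in S_{j,k-1}$ and $I\in T_{j,k}$). For each $k$ let $\mathrm{SI}_k(I)=\{\lambda\in[\lambda^{(f)}_j,\lambda^{(f)}_{j+1}-1]: I\in T_{j,k}\}$ denote the $I$-sink-$\lambda$-interval of $G_{j,k}$. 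Since the breakpoint values are strictly increasing, $a_{i_1,j_1}<\cdots<a_{i_q,j_q}$, the map $k^*\mapsto a_{i_{k^*},j_{k^*}}$ is strictly increasing; hence it suffices to show that every sublevel set $\{\lambda : x^*_I(\lambda)\le c\}$ is a single contiguous $\lambda$-interval, which is exactly quasi-convexity of the piecewise-constant function $x^*_I(\cdot)$.

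First I would establish that the sink-$\lambda$-intervals are nested increasing in $k$, i.e.\ $\mathrm{SI}_k(I)\subseteq\mathrm{SI}_{k+1}(I)$. This is immediate from the nested cut property (Lemma \ref{lem: nestedness}): fixing any $\lambda\in[\lambda^{(f)}_j,\lambda^{(f)}_{j+1}-1]$ freezes all coupling-arc capacities (each equal to $\lambda$) and leaves only the source/sink capacities varying with $\alpha=a_{i_k,j_k}$, so as $k$ increases the source sets shrink and $T_{j,k}\subseteq T_{j,k+1}$; consequently a $\lambda$ lying in $\mathrm{SI}_k(I)$ lies in $\mathrm{SI}_{k'}(I)$ for all $k'\ge k$. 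Next I would invoke Lemma \ref{lem: m_breakpoints} to record that each individual $\mathrm{SI}_k(I)$ is itself a single interval: when $smt_{j,k}(I)\ge 0$ it equals the middle interval $[\lambda_{k,\ell}(I),\lambda_{k,r}(I)]$, and when $smt_{j,k}(I)<0$ it equals the left interval $[\lambda^{(f)}_j,\lambda_k(I)]$; in both cases it is contiguous (possibly empty).

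Combining these two facts finishes the super-node claim. For any threshold $c$, let $\kappa$ be the largest index with $a_{i_\kappa,j_\kappa}\le c$ (the set being empty if $c<a_{i_1,j_1}$). Then $x^*_I(\lambda)\le c \iff k^*(\lambda)\le\kappa \iff \lambda\in\bigcup_{k\le\kappa}\mathrm{SI}_k(I)$, and by the nesting this union collapses to $\mathrm{SI}_\kappa(I)$, a single interval. Thus every sublevel set of $x^*_I(\cdot)$ is an interval, so $x^*_I(\cdot)$ is quasi-convex; together with its piecewise-constancy (Lemma \ref{lem: piecewise_constant}) this is precisely piecewise-constant-quasi-convexity, since for a piecewise-constant function having all sublevel sets contiguous forces its list of attained values to decrease, increase, or decrease-then-increase. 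The ``therefore'' statement for PL-FL follows by transport: for $\lambda$ in a fixed $[\lambda^{(f)}_j,\lambda^{(f)}_{j+1}-1]$, each node $i$ lies in a unique super-node $I$ of PL-FL-$\lambda^{(f)}_j$ and $x^*_i(\lambda)=x^*_I(\lambda)$, so the property is inherited locally on each such $\lambda$-interval.

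The heavy lifting has already been done in Lemma \ref{lem: m_breakpoints}, so I do not expect a genuine obstacle; the one place demanding care is the bookkeeping that ties the two halves together, namely verifying that the sublevel set $\{\lambda:x^*_I(\lambda)\le c\}$ coincides with a single $\mathrm{SI}_\kappa(I)$ rather than with a union of disjoint sink-intervals. This is exactly where the nesting of the $\mathrm{SI}_k(I)$ is essential: without it, a union of contiguous sink-intervals could a priori be disconnected, which would break quasi-convexity. I would therefore present the nesting step explicitly and dispose of the empty-interval edge cases (e.g.\ $c<a_{i_1,j_1}$, or $smt_{j,k}(I)<0$ with $\lambda_k(I)=\lambda^{(f)}_j-1$) as trivial special cases.
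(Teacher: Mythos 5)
Your proposal is correct, and it reorganizes the argument in a genuinely different way from the paper. The paper's proof argues directly about the \emph{shape} of $x^*_I(\cdot)$: it leans on the derivation (not merely the statement) of Lemma \ref{lem: m_breakpoints} to track how the $I$-sink-$\lambda$-intervals evolve as $k$ increases --- growing outward from a middle core while $smt_{j,k}(I) \geq 0$, so the newly attained values $a_{i_k,j_k}$ increase on both flanks (bowl shape), and then, once $smt_{j,k}(I) < 0$, anchored at $\lambda^{(f)}_j$ and growing only rightward, with values increasing to the right. You instead reduce quasi-convexity to connectedness of sublevel sets, identify the sublevel set $\{\lambda : x^*_I(\lambda) \le a_{i_\kappa,j_\kappa}\}$ with the sink-membership region $\mathrm{SI}_\kappa(I)$ of $G_{j,\kappa}$, and then need only two black-box facts: each $\mathrm{SI}_k(I)$ is a single interval (the statement of Lemma \ref{lem: m_breakpoints}), and the $\mathrm{SI}_k(I)$ are nested in $k$, which you obtain from the nested cut property (Lemma \ref{lem: nestedness}) applied pointwise in $\lambda$ --- a step the paper never isolates, since the monotone growth of the sink intervals is implicit in the induction of Lemma \ref{lem: m_breakpoints}. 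Your route buys modularity and rigor: it avoids the informal ``expansion'' language and the internal case analysis of that lemma's proof. The paper's route buys the finer structural picture (where the bottom of the bowl sits, which $\lambda$-breakpoints appear at which step), which it reuses in Corollary \ref{cor: lambda-breakpoint-structure} and the figures. Two points you assert but should make explicit: (i) $k^*(\lambda)$ is well defined because every super-node eventually enters the sink set (the last linear piece of every loss function has slope $+M$, so in $G_{j,q}$ all source adjacent capacities vanish and the maximal-source-set minimum cut has empty source set); and (ii) the equivalence between the paper's list-of-values definition of piecewise-constant-quasi-convexity and ``every sublevel set is an interval.'' Both are routine to verify, so neither is a gap.
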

\begin{proof}
Recall that each parametric graph $G_{j,k} = G^{st}_{j}(a_{i_k, j_k})$, where $a_{i_k, j_k}$ is the $k$th largest piecewise linear breakpoint. According to HL-algorithm, a super-node $I$ joins the sink set from source set in $G_{j,k}$ implies that the variable for super-node $I$ attains its optimal value, $a_{i_k, j_k}$. The later a super-node joins the sink set, its optimal value is larger.

For every super-node $I$, its $smt_{j,k}(I)$ value is nonincreasing in $k$ from $G_{j,0}$ to $G_{j,q}$, starting from positive to negative. Based on the derivation of Lemma \ref{lem: m_breakpoints}, if $I$ already joins the sink set at certain $G_{j,k}$ for some values of $\lambda$ where $smt_{j,k}(I) \geq 0$, its sink-$\lambda$-intervals first start from the middle in $[\lambda^{(f)}_{j}, \lambda^{(f)}_{j+1} - 1]$, and then expand to both sides until hitting the endpoints -- in this process the attained optimal values increase on both sides from the middle (bowl shape); when it reaches the stage where $smt_{j,k}(I) < 0$, the expansion of the sink-$\lambda$-intervals must have hit the left endpoint $\lambda^{(f)}_j$, and may further expand to the right until hitting the right endpoint $(\lambda^{(f)}_{j+1}-1)$ -- in this process the attained optimal values further increase to the right. In the other case, if $I$ only joins the sink set for some $\lambda$ values when $smt_{j,k}(I) < 0$, then the sink-$\lambda$-intervals must start from $\lambda^{(f)}_j$ and expand gradually to the right until hitting $\lambda^{(f)}_{j+1}-1$, where the attained optimal values monotonically increase from the left to the right. In both cases, the optimal solution $x^*_I(\lambda)$ is quasi-convex.
\end{proof}

The piecewise-constant-quasi-convexity is illustrated in Figure \ref{figure: sol_structure}.
    \begin{figure}[!h]
    \centering
    \subfloat[The first sink-$\lambda$-interval generated when $smt_{j,k}(I) \geq 0$ in some $G_{j,k}$.\label{subfigure: sol_str_1}]{\includegraphics[scale=0.8]{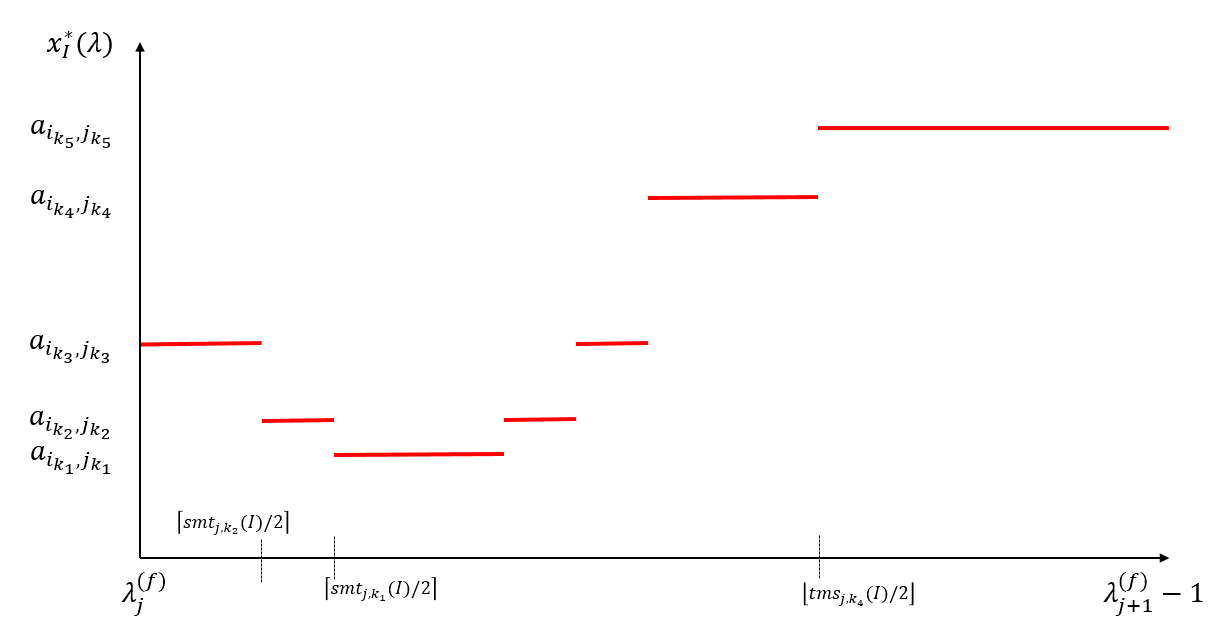}}\\
    \subfloat[The first sink-$\lambda$-interval generated when $smt_{j,k}(I) < 0$ in some $G_{j,k}$.\label{subfigure: sol_str_2}]{\includegraphics[scale=0.8]{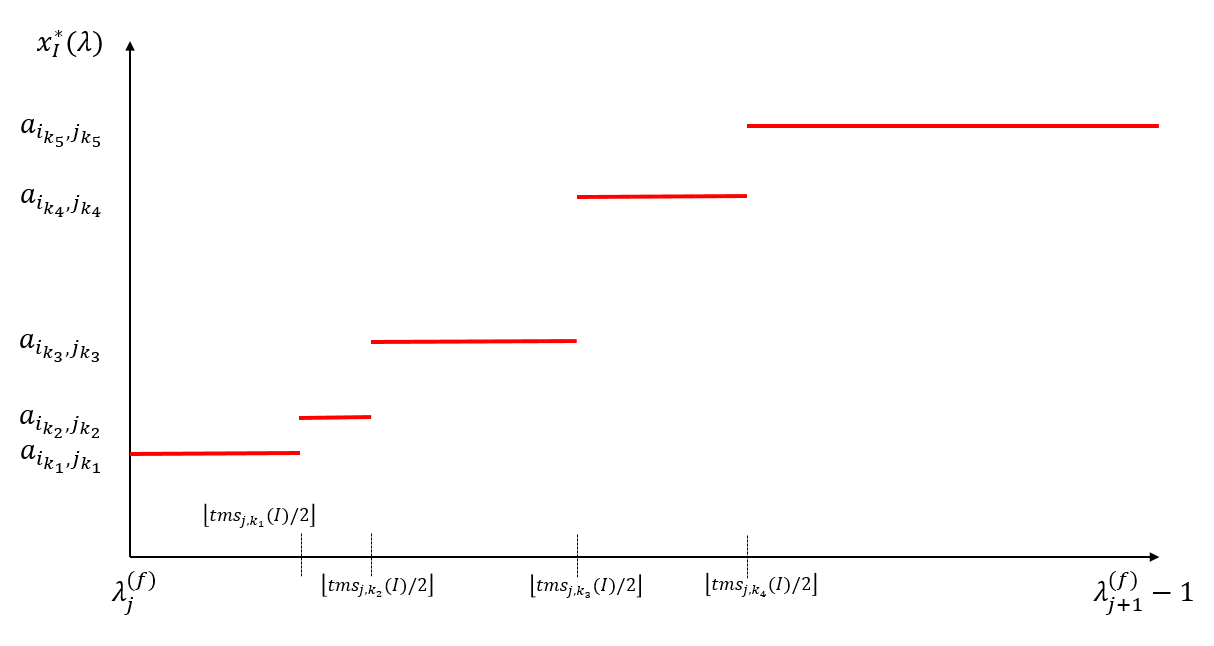}}
    \caption{Illustration of the structure of path of solutions $x^*_I(\lambda)$ for a super-node $I$ for $\lambda \in [\lambda^{(f)}_j, \lambda^{(f)}_{j+1}-1]$. The two cases are shown separately. Each horizontal line corresponds to one source-to-sink transition in one parametric graph. In both figures, $k_1 < k_2 < k_3 < k_4 < k_5$, so $a_{i_{k_1}, j_{k_1}} < a_{i_{k_2}, j_{k_2}} < a_{i_{k_3}, j_{k_3}} < a_{i_{k_4}, j_{k_4}} < a_{i_{k_5}, j_{k_5}}$. Newly introduced $\lambda$-breakpoints are shown on the horizontal axes. Note that in figure (\ref{subfigure: sol_str_1}), $\lceil smt_{j,k_3}(I)/2\rceil \leq \lambda^{(f)}_j$ and $\lfloor tms_{j,k_5}(I)/2\rfloor \geq \lambda^{(f)}_{j+1}-1$, so the two values are not introduced as $\lambda$-breakpoints. Similarly, in figure (\ref{subfigure: sol_str_2}), $\lfloor tms_{j,k_5}(I)/2 \rfloor \geq \lambda^{(f)}_{j+1}-1$, so is not introduced as $\lambda$-breakpoint either.}\label{figure: sol_structure}
    \end{figure}

\subsection{FL (\ref{prob: fused-lasso-lambda})}
%For a general convex loss function $f(x)$, as a solution is on the $\epsilon$-grid, the function can be transformed into a convex piecewise linear function, where for each $x$ on the $\epsilon$-grid, its left and right sub-gradients are defined as:
%\begin{equation*}
%\begin{split}
%f'_L(x) &= (f(x) - f(x-\epsilon))/\epsilon,\\
%f'_R(x) &= (f(x+\epsilon) - f(x))/\epsilon.
%\end{split}
%\end{equation*}
%Hence in a bounded range of $[\ell, u]$, the number of piecewise linear breakpoints is $O(\frac{u - \ell}{\epsilon})$. Let $U = \max_i\{u_i - \ell_i\}$. 
Recall from the introduction section that an FL (\ref{prob: fused-lasso-lambda}) of solution accurary $\epsilon$ is equivalent to a PL-FL (\ref{prob: PL-FL-lambda}) of total number of piecewise linear breakpoints $q = O(\frac{nU}{\epsilon})$, where $U = \max_i\{u_i - \ell_i\}$. As a result, applying Corollary \ref{cor: total-fusing-num}, we immediately have:
\begin{Thm}
For FL (\ref{prob: fused-lasso-lambda}) of general convex loss functions, the total number of $\lambda$-breakpoints over all nodes for $\lambda \geq 0$ is at most $O(\frac{n^2U}{\epsilon} + n) = O(\frac{n^2U}{\epsilon})$, where $U = \max_i\{u_i - \ell_i\}$.
\end{Thm}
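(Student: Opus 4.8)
The plan is to derive the bound for FL (\ref{prob: fused-lasso-lambda}) directly from the piecewise-linear bound in Corollary \ref{cor: total-fusing-num} together with the piecewise-linearization transformation set up in the introduction. The key observation is that the $\epsilon$-accurate path of solutions to FL coincides, by construction, with the exact path of solutions to the transformed PL-FL (\ref{prob: PL-FL-lambda}); consequently the $\lambda$-breakpoints counted for FL at accuracy $\epsilon$ are precisely the $\lambda$-breakpoints of the associated PL-FL. It therefore suffices to express the total breakpoint count $q$ of the transformed problem in terms of $n$, $U$, and $\epsilon$, and then substitute into the $qn + n - 1$ bound of Corollary \ref{cor: total-fusing-num}.

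First I would recall that the transformation introduces $q_i = (u_i - \ell_i)/\epsilon + 1$ breakpoints in the piecewise-linear surrogate $f^{pl}_i$ of each loss function $f_i$. Summing over all coordinates and using $u_i - \ell_i \leq U$ for every $i$ gives
\begin{equation*}
q = \sum_{i=1}^n q_i = \frac{1}{\epsilon}\sum_{i=1}^n (u_i - \ell_i) + n \leq \frac{nU}{\epsilon} + n = O\left(\frac{nU}{\epsilon}\right).
\end{equation*}
Substituting this estimate into the bound of Corollary \ref{cor: total-fusing-num} then yields
\begin{equation*}
qn + n - 1 \leq \left(\frac{nU}{\epsilon} + n\right)n + n - 1 = O\left(\frac{n^2U}{\epsilon}\right) + O(n) = O\left(\frac{n^2U}{\epsilon}\right),
\end{equation*}
which is exactly the asserted total count of $\lambda$-breakpoints over all nodes.

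I do not anticipate any genuine obstacle, since the statement is a routine composition of two facts already established: the $O(qn)$ breakpoint bound for PL-FL and the $q = O(nU/\epsilon)$ size of the linearized instance. The one point meriting care is the identification of the $\lambda$-breakpoints of FL at accuracy $\epsilon$ with those of the transformed PL-FL. This is guaranteed by the equivalence in the introduction: the $\epsilon$-accurate FL solution and the PL-FL optimum share their first $\log(1/\epsilon)$ significant digits, so a coordinate changes its (rounded) optimal value at a given $\lambda$ in one problem exactly when it does so in the other. Once this identification is granted, the bound follows by the arithmetic above, and the additive $n-1$ term from the fusing $\lambda$ values is absorbed into the leading $O(n^2U/\epsilon)$ term.
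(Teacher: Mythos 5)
Your proof is correct and follows essentially the same route as the paper: the paper likewise obtains this theorem by substituting the bound $q = O(\frac{nU}{\epsilon})$ from the piecewise-linearization of the introduction into the $qn + n - 1$ bound of Corollary \ref{cor: total-fusing-num}. Your additional remark identifying the $\lambda$-breakpoints of the $\epsilon$-accurate FL path with those of the transformed PL-FL makes explicit an equivalence the paper invokes implicitly, and the only cosmetic slip (the cross term $n \cdot n = n^2$, written as $O(n)$) is harmless since it is absorbed into $O(\frac{n^2U}{\epsilon})$ whenever $U \geq \epsilon$.
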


Next we focus on the algorithm to solve the path of solutions to PL-FL (\ref{prob: PL-FL-lambda}).

%Recall that each PL-FL-$\lambda^{(f)}_j$ problem has the same optimal solutions as PL-FL (\ref{prob: PL-FL-lambda}) for $\lambda \in [\lambda^{(f)}_j, \lambda^{(f)}_{j+1}-1]$, and $\bigcup_{j=0}^p [\lambda^{(f)}_j, \lambda^{(f)}_{j+1}-1] = [0, \lambda_{\max}]$, therefore to solve the path of solutions of PL-FL (\ref{prob: PL-FL-lambda}) for $\lambda\in [0,\lambda_{\max}]$, it is sufficient to solve the path of solutions of each PL-FL-$\lambda^{(f)}_j$, for $j = 0,1,\ldots,p$. This leads to the algorithm presented in the next section.

\section{Algorithm to solve the path of solutions to PL-FL (\ref{prob: PL-FL-lambda})}\label{sect: path_alg}
To solve the path of solutions for PL-FL (\ref{prob: PL-FL-lambda}), we first find all fusing $\lambda$ values via a binary search method, by which we generate the reduced PL-FL problems. Then for each PL-FL-$\lambda^{(f)}_j$ problems, for $\lambda\in [\lambda^{(f)}_j, \lambda^{(f)}_{j+1} - 1]$, we find all $\lambda$-breakpoints of all the super-nodes. In the process, for each super-node, we obtain all its maximal $\lambda$-constant-intervals and their corresponding optimal values.

\subsection{Base Data Structure}\label{sect: red-black-tree}
The base data structure employed in our algorithm to store the (intermediate) results is red-black tree \cite{CLRS09}. A red-black tree is a binary search tree. Each node of the tree contains the following five fields \cite{CLRS09}:\\
{\underline{\emph{color}:}} The ``color" of a node. Its value is either RED or BLACK.\\
{\underline{\emph{key}:}} The ``key" value of a node. It is a scalar.\\
{\underline{\emph{left, right}:}} The pointers to the left and the right child of a node. If the corresponding child does not exist, the corresponding pointer has value NIL.\\
{\underline{\emph{p}:}} The pointer to the parent of a node. If the node is the root node, the pointer value is NIL.

As it is a binary search tree, the keys of the nodes are comparable. Furthermore, it has the following two properties \cite{CLRS09}:
\begin{enumerate}
\item Binary-search-tree property: Let $x$ be a node in a binary search tree. If $y$ is a node in the left subtree of $x$, then $key[y] \leq key[x]$. If $y$ is a node in the right subtree of $x$, then $key[y] \geq key[x]$.
\item Tree height property: A red-black tree with $n$ nodes has height at most $2\log(n+1)$.
\end{enumerate}

Cormen et al. in \cite{CLRS09} define and analyze the following three operations on a red-black tree:
\begin{enumerate}
\item {\sf TREE-SEARCH}($T$, $k$): Search for a node in red-black tree $T$ with a given \emph{key} value $k$. It returns a pointer to a node with \emph{key} $k$ if one exists; otherwise it return NIL. %In our case the \emph{key} value is a tuple and the comparison of scalar \emph{key} values is extended to \emph{key tuples}.
\item {\sf RB-INSERT}($T$,$z$): Insert a node $z$ into red-black tree $T$. %The pseudo-code involves comparing the \emph{key} values of two nodes, where we can simply apply our definition of \emph{key tuple} comparison. As a result, \emph{literally} there is no change to the pseudo-code of {\sf RB-INSERT}($T$, $z$) in our extension.
\item {\sf RB-DELETE}($T$, $z$): Delete a node $z$ from red-black tree $T$. %The pseudo-code does not involve the \emph{key} field, hence it is directly applicable to our extension.
\end{enumerate}
Cormen et al. in \cite{CLRS09} prove that each of the above operation has complexity $O(\log n)$ for a tree of at most $n$ nodes. %The complexities are the same in our case with \emph{key tuples}.

We will extend the above base form of red-black tree while maintaining all the above complexity results.

\subsection{Compute fusing $\lambda$ values}\label{sect: binary-search}
We maintain a red-black tree $T_f$ in the search for all fusing $\lambda$ values with the following extension. The \emph{key} fields in $T_f$ are the integer $\lambda$ value. In addition, each node in $T_f$ of \emph{key} $\lambda$ contains an $n$-bit array $group_{\lambda}$, which contains the fused group information of the optimal solution for the $\lambda$ value. The $i$th bit in $group_{\lambda}$, $group_{\lambda}(i)$, corresponds to variable $x_i$ in PL-FL (\ref{prob: PL-FL-lambda}). The bits in $group_{\lambda}$ is defined as follows:
\begin{equation*}
group_{\lambda}(1) = 1,\ group_{\lambda}(i) = \begin{cases}
									group_{\lambda}(i-1),\ &\text{if}\ x^*_i(\lambda) = x^*_{i-1}(\lambda),\\
									1 - group_{\lambda}(i-1),\ &\text{if}\ x^*_i(\lambda) \neq x^*_{i-1}(\lambda).
									\end{cases}\ (\forall\ i \geq 2)
\end{equation*}
We say that $group_\lambda = group_{\lambda'}$ if $group_{\lambda}(i) = group_{\lambda'}(i)$ for all $i$, otherwise $group_\lambda \neq group_{\lambda'}$. $group_{\lambda} = group_{\lambda'}$ implies that the optimal solutions for $\lambda$ and $\lambda'$ have the same fused groups.

Given the optimal solution for a $\lambda$, generating the values for $group_{\lambda}$ takes an additional $O(n)$ time. Given two group values $group_{\lambda}$ and $group_{\lambda'}$, it also takes $O(n)$ time to check whether the two groups are equal. Note that as the group values are bit strings, checking whether $group_{\lambda} = group_{\lambda'}$ is equivalent to checking that whether bitwise XOR of the two bit strings is equal to 0. Bit operations can be done very efficiently in computers.

The algorithm maintains that each $\lambda$ value of a node in $T_f$ is a candidate fusing $\lambda$ value. As there are $O(n)$ fusing $\lambda$ values, the number of nodes in $T_f$ is $O(n)$. In the algorithm, we apply the following operations to $T_f$:
\begin{enumerate}
\item $z := {\sf new\_node}(\lambda, group_{\lambda})$: Create a new node $z$ with \emph{key} $\lambda$ and group array $group_\lambda$. This is done in $O(1)$ time.
\item {\sf TREE-SEARCH}$(T_f, \lambda)$: Search for the node in red-black tree $T_f$ with a given \emph{key} value $\lambda$. It returns a pointer to the node with \emph{key} $\lambda$ if one exists; otherwise it returns NIL. This operation can be done in time $O(\log n)$ for $T_f$ of at most $O(n)$ nodes.
\item {\sf RB-INSERT}$(T_f, z)$: Insert a node $z$ into red-black tree $T$. This can also be done in complexity $O(\log n)$ for $T_f$ of at most $O(n)$ nodes.
\end{enumerate}

We first introduce the binary search algorithm to find all fusing $\lambda$ values in an interval $[\lambda_\ell, \lambda_u]$ (assuming $group_{\lambda_\ell} \neq group_{\lambda_u}$).
%such that both $\lambda_{\ell}$ and $\lambda_u$ are currently \emph{keys} of two nodes of $T$ (\ie, $\lambda_{\ell}$ and $\lambda_u$ are two candidate fusing $\lambda$ values with different fused groups),
we compute the optimal solution of PL-FL (\ref{prob: PL-FL-lambda}) for $\lambda_{m} = \lfloor(\lambda_{\ell} + \lambda_u)/2\rfloor$ and get $group_{\lambda_{m}}$. Then we compare $group_{\lambda_{m}}$ with $group_{\lambda_{\ell}}$ and $group_{\lambda_u}$. There are three possibilities:
\begin{enumerate}
\item $group_{\lambda_{m}} \neq group_{\lambda_{\ell}}$ and $group_{\lambda_{m}} \neq group_{\lambda_{u}}$:\\
    A new node with \emph{key} value $\lambda_{m}$ and group array $group_{\lambda_{m}}$ is inserted into $T_f$. The search for fusing $\lambda$ values continues in the intervals $[\lambda_{\ell}, \lambda_{m}]$ and $[\lambda_{m} , \lambda_u]$.
\item $group_{\lambda_{m}} = group_{\lambda_{\ell}}$ but $group_{\lambda_{m}} \neq group_{\lambda_{u}}$:\\
    No update to $T_f$ and ignore the interval $[\lambda_\ell, \lambda_{m}]$, because there will be no additional fusing variables for $\lambda$ in the interval. The search for fusing $\lambda$ values continues in the interval $[\lambda_{m}, \lambda_u]$.
\item $group_{\lambda_{m}} \neq group_{\lambda_{\ell}}$ but $group_{\lambda_{m}} = group_{\lambda_{u}}$:\\
    Update the node with \emph{key} $\lambda_u$ in $T_f$ to the new $key$ value $\lambda_{m}$ while no change to the group array in the node as $group_{\lambda_{m}} = group_{\lambda_u}$. Ignore the interval $[\lambda_{m}, \lambda_u]$ because there will be no additional fusing variables for $\lambda$ in the interval. The search for fusing $\lambda$ values continues in the interval $[\lambda_\ell, \lambda_{m}]$.
\end{enumerate}

The pseudo-code is as follows:

\vspace{1em}

\noindent$T_f$ := {\sf search\_fusing\_values}$(\lambda_\ell, group_{\lambda_\ell}, \lambda_u, group_{\lambda_u}, T_f)$\\
\noindent\textbf{begin}\\
1\ \taba \textbf{if} $\lambda_u - \lambda_\ell \leq 1$ \textbf{then} exit; \textbf{end if}\\
2\ \taba \textbf{if} $group_{\lambda_{\ell}} = group_{\lambda_u}$ exit; \textbf{end if}\\
3\ \taba $\lambda_{m} := \lfloor\frac{\lambda_u + \lambda_\ell}{2}\rfloor$;\\
4\ \taba Solve PL-FL (\ref{prob: PL-FL-lambda}) for $\lambda = \lambda_{m}$, compute $group_{\lambda_{m}}$;\\
5\ \taba \textbf{if} $group_{\lambda_{m}} \neq group_{\lambda_\ell}$ and $group_{\lambda_{m}} \neq group_{\lambda_u}$ \textbf{then} \\
6\ \tabb $z := {\sf new\_node}(\lambda_{m}, group_{\lambda_m})$;\\
7\ \tabb {\sf RB-INSERT}$(T_f, z)$;\\
8\ \tabb $T_f$ := {\sf search\_fusing\_values}$(\lambda_{\ell}, group_{\lambda_\ell}, \lambda_{m}, group_{\lambda_{m}}, T_f)$; \\
9\ \tabb $T_f$ := {\sf search\_fusing\_values}$(\lambda_{m}, group_{\lambda_{m}}, \lambda_u, group_{\lambda_u}, T_f)$;\\
10\taba \textbf{else if} $group_{\lambda_{m}} = group_{\lambda_\ell}$ and $group_{\lambda_{m}} \neq group_{\lambda_u}$ \textbf{then} \\
11\tabb $T_f$ := {\sf search\_fusing\_values}$(\lambda_{m}, group_{\lambda_{m}}, \lambda_u, group_{\lambda_u}, T_f)$;\\
12\taba \textbf{else if} $group_{\lambda_{m}} \neq group_{\lambda_\ell}$ and $group_{\lambda_{m}} = group_{\lambda_u}$ \textbf{then} \\
13\tabb $z := ${\sf TREE-SEARCH}$(T_f, \lambda_u)$;\\
14\tabb $z.key := \lambda_{m}$;\\
15\tabb $T_f$ := {\sf search\_fusing\_values}$(\lambda_\ell, group_{\lambda_\ell}, \lambda_{m}, group_{\lambda_{m}}, T_f)$;\\
16\taba \textbf{end if}\\
\noindent\textbf{end}

\vspace{1em}

To compute all fusing $\lambda$ values in $[0, +\infty)$, we first replace the right endpoint from $+\infty$ to some $\lambda_{\max}$ such that PL-FL for $\lambda_{\max}$ has optimal solution where all variables are fused together. One feasible value for $\lambda_{\max}$ is
\begin{equation}\label{eqn: lambda-max}
\lambda_{\max} = \biggl\lceil \frac{\sum_{i=1}^n f^{pl}_i(0) - \sum_{i=1}^n \min_{x_i}f^{pl}_i(x_i)}{a_{\min, +}}\biggr\rceil + 1,
\end{equation}
where $\sum_{i=1}^n f^{pl}_i(0)$ is a feasible value for PL-FL (\ref{prob: PL-FL-lambda}), $\sum_{i=1}^n \min_{x_i}f^{pl}_i(x_i)$ is a lower bound of the optimal value for PL-FL, and
\begin{equation*}
a_{\min, +} = \min_{\substack{k = 2, \ldots, q,\\a_{i_k, j_k} > a_{i_{k-1}, j_{k-1}}}}\{a_{i_{k}, j_{k}} - a_{i_{k-1}, j_{k-1}}\},
\end{equation*}
which is the minimum positive distance among all the piecewise linear breakpoints of the loss functions. It is easy to verified that this $\lambda_{\max}$ value forces the optimal values of all variables in PL-FL to be the same. Recall that each loss function $f^{pl}_i(x_i)$ is represented by its piecewise linear breakpoints in ascending order and the slopes of the linear pieces in-between. Hence the complexity to compute the above $\lambda_{\max}$ value is $O(q)$.

The pseudo-code to compute all fusing $\lambda$ values is the following {\sf find\_all\_fusing\_values}. It returns a sorted list of fusing $\lambda$ values with associated $group_{\lambda}$ arrays.
%Let $0 = \lambda^{(f)}_0 \leq \lambda^{(f)}_1 < \lambda^{(f)}_2 < \ldots < \lambda^{(f)}_p < \lambda^{(f)}_{p+1} = \lambda_{\max}+1$ be the $p = O(n)$ fusing $\lambda$ values in $[0,\lambda_{\max}]$. The superscript ``(f)" denotes ``fuse".

\vspace{1em}

\noindent$\big(\lambda^{(f)}_j, group_{\lambda^{(f)}_j}\big)_{j = 0,1,\ldots, p} := {\sf find\_all\_fusing\_values}()$\\
\noindent\textbf{begin}\\
\taba Initialize an empty red-black tree $T_f$;\\
\taba Compute $\lambda_{\max}$ according to Equation (\ref{eqn: lambda-max});\\
\taba Solve PL-FL (\ref{prob: PL-FL-lambda}) for $\lambda = 0$, compute $group_0$;\\
\taba $z := {\sf new\_node}(0, group_0)$; {\sf RB-INSERT}$(T_f, z)$; \\
\taba Solve PL-FL (\ref{prob: PL-FL-lambda}) for $\lambda = \lambda_{\max}$, compute $group_{\lambda_{\max}}$; \\
\taba \textbf{if} $group_{\lambda_{\max}} \neq group_0$ \textbf{then} \\
    \tabb $z := {\sf new\_node}(\lambda_{\max}, group_{\lambda_{\max}})$; {\sf RB-INSERT}$(T_f, z)$; \\
    \tabb $T_f$ := {\sf search\_fusing\_values}($0, group_0, \lambda_{\max}, group_{\lambda_{\max}}, T_f$);\\
\taba \textbf{end if}\\
\taba In-order traversal on $T_f$ to \textbf{return} $\biggl(\lambda^{(f)}_j, group_{\lambda^{(f)}_j}\biggr)_{j = 0,1,\ldots, p}$;\\
\noindent\textbf{end}

\vspace{1em}

The correctness of the algorithms is justified by Theorem \ref{thm: fusing-points}. We analyze the complexity of the two pseudo-codes. First {\sf search\_fusing\_values}. As the number of fusing $\lambda$ values is $O(n)$, the number of times of the case $group_{\lambda_{m}} \neq group_{\lambda_{\ell}}$ and $group_{\lambda_{m}} \neq group_{\lambda_u}$ (line 5) happening is $O(n)$. Between two consecutive fusing $\lambda$ values, the computed $group_{\lambda_m}$ must fall in either of the latter two cases in the if/else-statement (line 10 to line 16), for which at every iteration the search interval is cut by half. As a result, the number of trial $\lambda$ values in {\sf search\_fusing\_values} is $O(n\log(\lambda_u - \lambda_\ell))$. For each trial $\lambda$ value, the algorithm first solves PL-FL (\ref{prob: PL-FL-lambda}) and compute $group_\lambda$ at line 4. Let $T_0$ be the time complexity to solve PL-FL for a fixed $\lambda$. Thus the complexity of line 4 is $O(T_0 + n)$.
To proceed with the if/else-statement, the code compares $group_{\lambda_m}$ with $group_{\lambda_\ell}$ and $group_{\lambda_u}$, which incurs an additional $O(n)$ time. Each block of the if/else-statement is at most $O(\log n)$. As a result, the total computation complexity for each trial $\lambda$ value is $O(T_0 + n + \log n)$. Therefore, the total complexity of {\sf search\_fusing\_values} is $O(n\log(\lambda_u - \lambda_\ell)(T_0 + n + \log n))$.

The complexity of {\sf find\_all\_fusing\_values} is dominated by {\sf search\_fusing\_values} and the computation of $\lambda_{\max}$, thus its complexity is $O((n\log \lambda_{\max})(T_0 + n + \log n) + q)$. %In most applications, it is reasonable to assume that $n = \Omega(\log \lambda_{\max})$. Hence the complexity can be further simplified to $O(n(t_{fl}+n))$.

For PL-FL (\ref{prob: PL-FL-lambda}) of fixed $\lambda$, the fastest algorithm is HL-algorithm of complexity $T_0 = O(q\log n)$ by \cite{HL17}. Therefore the complexity of {\sf find\_all\_fusing\_values} is $O(nq\log n\log\lambda_{\max})$ ($q = \Omega(n)$).

%finding all fusing $\lambda$ values of PL-FL (\ref{prob: PL-FL-lambda}) can be done in time $O(n(q\log n+n)) = O(nq\log n)$ as $q = \Omega(n)$.

\subsection{Solve PL-FL-$\lambda^{(f)}_j$ for $\lambda \in [\lambda^{(f)}_j, \lambda^{(f)}_{j+1}-1]$}
With the fusing $\lambda$ values and the fusing group arrays obtained, we can generate all the reduced PL-FL problems. Next we solve the path of solutions of PL-FL-$\lambda^{(f)}_j$ for $\lambda \in [\lambda^{(f)}_j, \lambda^{(f)}_{j+1}-1]$.

\subsubsection{Data structures}
In PL-FL-$\lambda^{(f)}_j$, we store the path of solutions of each super-node $I$ by a red-black tree $T_{j,I}$ with the following extension from the basic red-black tree in Section \ref{sect: red-black-tree}. The \emph{key} field of each node in $T_{j,I}$ is extended from a scalar to a 2-tuple, $\lambda_\ell < \lambda_r$, which represents a maximal $\lambda$-constant-interval $[\lambda_\ell, \lambda_r]$, and the node has a \emph{value} field that is the associated constant optimal value of $x_{I}$ for $\lambda \in [\lambda_\ell, \lambda_r]$. The comparison of the \emph{key tuples} of nodes in $T_{j,I}$ follows from the comparison of their respective $\lambda$-intervals defined in Section \ref{sect: path_structure} following Lemma \ref{lem: piecewise_constant}. According to Theorem \ref{thm: induced_graph_breakpoints}, the number of nodes in each $T_{j,I}$ is $O(q)$.

The extension of red-black trees from scalar keys to tuple keys is also employed in HL-algorithm in \cite{HL17}, where they use the following four operations with complexities shown:
\begin{enumerate}
\item $z := {\sf new\_node}(\lambda_\ell, \lambda_r, a)$: Create a new node $z$ with \emph{key tuple} $key[z].first = \lambda_\ell, key[z].right = \lambda_r$ and $value[z] = a$. This is done in $O(1)$ time.
\item $[\lambda_\ell, \lambda_r] := {\sf get\_\lambda\_interval}(T_{j,I}, \lambda)$: Find the maximal $\lambda$-constant-interval $[\lambda_\ell, \lambda_r]$ in $T_{j,I}$ that contains the given $\lambda$ value. This is done in $O(\log q)$ time for $T_{j,I}$ of at most $O(q)$ nodes.
\item {\sf TREE-SEARCH}$(T_{j,I}, [\lambda_\ell, \lambda_r])$: Search for the node in red-black tree $T_{j,I}$ with given \emph{key tuple} generated from an $\lambda$-interval $[\lambda_\ell, \lambda_r]$. This is done in $O(\log q)$ time for $T_{j,I}$ of at most $O(q)$ nodes.
\item {\sf RB-INSERT}$(T_{j,I}, z)$: Insert a node $z$ into $T_{j,I}$. This is done in $O(\log q)$ time for $T_{j,I}$ of at most $O(q)$ nodes.
\end{enumerate}
Our algorithm presented here will apply the above four operations to $T_{j,I}$. Initially, red-black trees $T_{j,I}$ for all super-nodes $I$ are empty.

Recall that PL-FL-$\lambda^{(f)}_j$ is generated from PL-FL (\ref{prob: PL-FL-lambda}) by fusing nodes of same optimal value into a super-node. From the $(\lambda^{(f)}_j, group_{\lambda^{(f)}_j})$, we create a table for mapping between a node in PL-FL and its corresponding super-node in PL-FL-$\lambda^{(f)}_j$, and vice versa. To create the table, one only needs to traverse $group_{\lambda^{(f)}_j}$ once, in time $O(n)$. Let the table be $TB_j$, where $I = TB_j(i) (i = 1,\ldots, n)$ returns the super-node $I$ in PL-FL-$\lambda^{(f)}_j$ that corresponds to node $i$ in PL-FL.

%The correspondence of the super-nodes in PL-FL-$\lambda^{(f)}$ and the original nodes in PL-FL (\ref{prob: PL-FL-lambda}) is characterized in the mappings $I_j(\cdot)$ and $I^{-1}_j(\cdot)$. We implement the two mappings as two arrays $(I_j(I))_{I = 1,\ldots,n_j}$ and $(I^{-1}_j(i))_{i=1,\ldots,n}$. Both arrays can be pre-computed using the $group_{\lambda^{(f)}_j}$ array returned by subroutine {\sf find\_all\_fusing\_values} in Section \ref{sect: binary-search}.

For each super-node $I$ in PL-FL-$\lambda^{(f)}_j$, its corresponding piecewise linear loss function is generated by summing up the piecewise linear functions of its containing nodes in PL-FL. For our algorithm purpose, we do not need to merge and sort the sub-lists of the piecewise linear breakpoints of the fused piecewise linear loss functions because our algorithm always traverse the full list of all the piecewise linear breakpoints in ascending order. From the analysis in Section \ref{sect: path_structure}, we introduce an array $(smt_j(I))_{I=1,\ldots, n_j}$ for the quantities $\{smt_{j,k}(I)\}_{k=1,\ldots,q; I = 1,\ldots, n_j}$. The array $(smt_j(I))_{I=1,\ldots,n_j}$ is updated throughout the algorithm such that for every super-node $I$, in $G_{j,k}$, $smt_j(I) = smt_{j,k}(I)$. The array $(smt_j(I))_{I=1,\ldots,n_j}$ is updated as follows: 
%We introduce another array $(smt(I))_{I = 1,\ldots,n_j}$ such that for each super-node $I$, $smt(I) = smt_{I} = c_{s,I} - c_{I, t}$. This is to facilitate the update of the above $\lambda_\ell$ and $\lambda_r$ values according to Proposition \ref{prop: lambda_ranges} and Lemma \ref{lem: m_breakpoints}. The $tms_{I}$ value used in Proposition \ref{prop: lambda_ranges} and Lemma \ref{lem: m_breakpoints} can be obtained by $-smt(I)$.
\begin{enumerate}
\item Initially, in $G_{j,0}$:
\begin{equation}\label{eqn: smt_init}
smt_j(I) = c_{s,I} - c_{I,t} = c_{s,I} = \sum_{i: TB_j(i) = I} -w_{i,0} \geq 0.
\end{equation}
The array $(smt_j(I))$ for all super-nodes $I$ can be initiated by traversing the nodes from $1$ to $n$ in PL-FL once, which has $O(n)$ complexity\footnote{In practice, one can speed up the initialization of the array for every reduced PL-FL problem by introducing a global partial-sum array $(sa(i))_{i=0,\ldots,n}$ for PL-FL (\ref{prob: PL-FL-lambda}) as follows: $sa(0) = 0$, $sa(i) = sa(i - 1) - w_{i,0}, i = 1, \ldots, n$. Then for each super-node $I = [i_\ell, i_r]$ in PL-FL-$\lambda^{(f)}_j$, $smt_j(I) = sa(i_r) - sa(i_\ell - 1)$.}.
\item $G_{j,k-1}$ to $G_{j,k}$: Only the source and sink adjacent arc capacities of super-node $I_{j,k}$ change. The right sub-gradient of convex piecewise linear function $f^{pl}_{I_{j,k}}$ changes by the amount $\Delta_{k} = w_{i_k,j_k} - w_{i_k, j_{k-1}} > 0$. 
%Thus depending on the signs of $c_{s,I_{j,k}}$ and $c_{I_{j,k},t}$ in $G_{j,k-1}$, we have the following three possible cases:
%\begin{enumerate}
%\item $c_{s,I_{j,k}} \geq 0$ and $c_{I_{j,k},t} = 0$ in $G_{j,k-1}$: We define the value $c = c_{s,I_{j,k}} - \Delta_{i_k}$. Then in $G_{j,k}$, if $c \geq 0$, $c_{s,I_{j,k}} := c$, $c_{I_{j,k},t} := 0$; If $c < 0$, $c_{s,I_{j,k}} := 0$, $c_{I_{j,k},t} := -c$.
%\item $c_{s,I_{j,k}} = 0$ and $c_{I_{j,k},t} \geq 0$ in $G_{j,k-1}$: In $G_{j,k}$, $c_{s,I_{j,k}} := 0$, $c_{I_{j,k},t} := c_{I_{j,k},t} + \Delta_{i_k}$.
%\end{enumerate}
One can verify that we have:
\begin{equation*}
smt_j(I_{j,k}) := smt_j(I_{j,k}) - \Delta_{k}.
\end{equation*}
The update is done in $O(1)$ time.
\end{enumerate}
For convenience of presentation, we also an array $(tms_j(I))_{I = 1,\ldots,n_j}$ such that $tms_j(I) = -smt_j(I)$.

%As PL-FL-$\lambda^{(f)}_j$ is an instance of PL-FL (\ref{prob: PL-FL-lambda}), we adapt the algorithm \cite{HL15} to solve PL-FL-$\lambda^{(f)}_j$ by employing the special solution structure (Proposition \ref{prop: lambda_ranges}, Lemma \ref{lem: m_breakpoints}, and Corollary \ref{cor: reinterpret-lemma}) to efficiently solve the path of solution of PL-FL-$\lambda^{(f)}_{j}$ for $\lambda \in [\lambda^{(f)}_j, \lambda^{(f)}_{j+1}-1]$. Following the algorithm in \cite{HL15}, we still compute the respective minimum cuts of graphs $G_{j,k}$ in increasing order of $k$ from $0$ to $q$. And the minimum cut $(S_{j,k}, T_{j,k})$ in $G_{j,k}$ is computed from the minimum cut $(S_{j,k-1}, T_{j,k-1})$ in $G_{j,k-1}$. We have that either of the following cases must hold from $(S_{j,{k-1}}, T_{j,{k-1}})$ to $(S_{j,k}, T_{j,k})$:
%\begin{enumerate}
%\item $I_{j,k} \in T_{j,k-1}$: According to Lemma \ref{lem: sink}, we immediately have $I_{j,k}\in T_{j,k}$.
%\item $I_{j,k} \in S_{j,k-1}$: The status of $I_{j,k}$, and the associated ranges of $\lambda$ values, is specified in Proposition \ref{prop: lambda_ranges}.
%\end{enumerate}

Our algorithm follows the analysis in Proposition \ref{prop: lambda_ranges} and Lemma \ref{lem: m_breakpoints}. Following Lemma \ref{lem: m_breakpoints}, we define a tuple array for each super-node $I$, $(sink\_intv_j(I))_{I=1,\ldots, n_j}$, such that after processing $G_{j,k}$, the $\lambda$-interval $[sink\_intv_j(I).first, sink\_intv_j(I).second]$ is the maximal $I$-sink-$\lambda$-interval. According to Lemma \ref{lem: m_breakpoints}, if $smt_{j,k}(I) \geq 0$,
\begin{equation*}
sink\_intv_j(I).first = \lambda_{k,\ell}(I),\ sink\_intv_j(I).second = \lambda_{k,r}(I);
\end{equation*}
if $smt_{j,k}(I) < 0$,
\begin{equation*}
sink\_intv_j(I).first = \lambda^{(f)}_j,\ sink\_intv_j(I).second = \lambda_k(I).
\end{equation*}
Initially, in $G_{j,0}$, $sink\_intv_j(I).first = \lambda^{(f)}_j, sink\_intv_j(I).second = \lambda^{(f)}_{j}-1$, \ie, $sink\_intv_j(I) = \emptyset$, for all super-nodes $I$.

%To record the status of all super-nodes $I$ in $G^{st}_j(\alpha)$ for different $\lambda \in [\lambda^{(f)}_j, \lambda^{(f)}_{j+1}-1]$, according to Corollary \ref{cor: reinterpret-lemma}, it is sufficient to store the interval $[\lambda_\ell, \lambda_r]$ for each super-node $I$ such that $I$ is in the sink set for $\lambda \in [\lambda_\ell, \lambda_r]$, otherwise $I$ is in the source set. We introduce an array $(status_\lambda(I))_{I = 1,\ldots,n_j}$ such that $status_\lambda(I) = [\lambda_\ell, \lambda_r]$ to record the information. Initially, in $G_{j,0}$, $status_\lambda(I) = \emptyset$ for all $I$. If $status_\lambda(I)\neq \emptyset$, we refer to the left endpoint of the interval with $status_\lambda(I).first$ and the right endpoint of the interval with $status_\lambda(I).second$.

%Initially, in $G_{j,0}$, trees $T_{j,I}$ for all $I$ are empty.

\subsubsection{Algorithm}
%To summarize, the algorithm proceeds from $G_{j,k-1}$ to $G_{j,k}$, by checking the status of super-node $I_{j,k}$. The algorithm checks whether some new $I_{j,k}$-sink-$\lambda$-intervals can be generated from the $I_{j,k}$-source-$\lambda$-intervals in $G_{j,k-1}$. If so, then new $\lambda$-breakpoints are found and new maximal $\lambda$-constant-intervals are inserted into tree $T_{j,I_{j,k}}$. The value of $status_\lambda(I_{j,k})$ is also updated accordingly to reflect the status change, from $s$ to $t$, of super-node $I_{j,k}$ for some ranges of $\lambda$ in $G_{j,k}$.

The algorithm directly follows Lemma \ref{lem: m_breakpoints}, with data $(smt_j(I), sink\_intv_j(I), T_{j,I})_{I = 1,\ldots,n_j}$ updated when the algorithm processes from $G_{j,0}$ to $G_{j,q}$. We present the pseudo-code to solve the path of solutions of PL-FL-$\lambda^{(f)}_j$ for $\lambda \in [\lambda^{(f)}_j, \lambda^{(f)}_{j+1}-1]$ as follows:

\vspace{1em}

\noindent$\{(TB_j(i))_{i=1,\ldots,n}, (T_{j,I})_{I = 1,\ldots,n_j}\} := ${\sf solve\_reduced\_PL-FL}$(\lambda^{(f)}_j, \lambda^{(f)}_{j+1}, group_{\lambda^{(f)}_j})$\\
\noindent\textbf{begin}\\
1\ \taba Compute $n_j$ and $(TB_j(i))_{i=1,\ldots,n}$ from $group_{\lambda^{(f)}_j}$;\\
2\ \taba Initialize the array $(smt_j(I))_{I = 1,\ldots,n_j}$ according to (\ref{eqn: smt_init});\\
3\ \taba Initialize red-black trees $T_{j,I}$ to be empty and $sink\_intv_j(I) = \emptyset$ for $I = 1,\ldots,n_j$;\\
4\ \taba \textbf{for} $k := 1,\ldots,q$:\\
5\    \tabb $I_{j,k} := TB_j(i_k)$; \\
6\    \tabb \{Update graph\} $smt_j(I_{j,k}) := smt_j(I_{j,k}) - (w_{i_k,j_k} - w_{i_k,j_{k-1}})$;\\
7\    \tabb $(T_{j,I_{j,k}}, sink\_intv_j(I_{j,k})) := {\sf compute\_\lambda\_breakpoint}(I_{j,k}, n_j, \lambda^{(f)}_j, \lambda^{(f)}_{j+1}, smt_j(\cdot), a_{i_k, j_k},\\
\tabi \tabf T_{j,I_{j,k}}, sink\_intv_j(\cdot))$; \\
%8\    \tabb \textbf{if} $[\lambda_\ell, \lambda_r] \neq \emptyset$ \textbf{then} \\
%9\      \tabc ${\sf update\_\lambda\_breakpoints}(I_{j,k}, [\lambda_\ell, \lambda_r], \lambda^{(f)}_j, \lambda^{(f)}_{j+1}, a_{i_k, j_k}, sink\_intv_j(\cdot), T_{j,I_{j,k}})$;\\
%10\tabb \textbf{end if}\\
8\ \taba \textbf{end for}\\
9\ \taba \textbf{return} $(T_{j,I})_{I=1,\ldots,n_j}$;\\
\noindent\textbf{end}

\vspace{1em}

The initialization of the data structures is done from line 1 to line 3. At line 4, the \textbf{for} loop computes, in the $k$th iteration, the minimum cut in $G_{j,k}$ from the minimum cut in $G_{j,k-1}$. The super-node $I_{j,k}$ is obtained from the table $TB_j(i_k)$ at line 5. Then in line 6, the value of $smt_j(I_{j,k})$ is updated from $G_{j,k-1}$ to $G_{j,k}$. Line 7 follows the analysis in Lemma \ref{lem: m_breakpoints}, where (at most one) new $\lambda$-breakpoint and (at most two) new maximal $\lambda$-constant-intervals, together with the corresponding optimal value $a_{i_k, j_k}$, could be introduced for $I_{j,k}$, and thus the data $T_{j,I_{j,k}}$ and $sink\_intv_j(I_{j,k})$ are potentially updated. The detailed implementation of {\sf compute\_$\lambda$\_breakpoint} is in Appendix \ref{appx: compute}.
%If $[\lambda_\ell, \lambda_r] \neq \emptyset$, checked at line 8, then we find some ranges of $\lambda$ values for which $I_{j,k}$ shifts from $S_{j,k-1}$ in $G_{j,k-1}$ to $S_{j,k}$ in $G_{j,k}$. This means that $x^*_{I_{j,k}} = a_{i_k,j_k}$ for $\lambda$ in those ranges in PL-FL-$\lambda^{(f)}_j$. As a result, we update the value of $status_\lambda(I_{j,k})$ with the newly computed $[\lambda_\ell, \lambda_r]$ and insert at most two maximal $\lambda$-constant-intervals with associated optimal value $a_{i_k,j_k}$ into $T_{j,I_{j,k}}$. This is achieved in subroutine {\sf update\_$\lambda$\_breakpoints}. The detailed implementation of {\sf update\_$\lambda$\_breakpoints} is in Appendix \ref{appx: update}.

In Appendix \ref{appx: compute}, we show that each call to subroutine {\sf compute\_$\lambda$\_breakpoint} takes $O(\log q)$ time.
%In Appendix \ref{appx: update}, we show that each call to subroutine {\sf update\_$\lambda$\_breakpoints} takes $O(\log q)$ time. 
As a result, the total complexity of the \textbf{for} loop from line 4 to line 8 is $O(q\log q)$. The initialization steps from line 1 to line 3 has complexity $O(n)$. Therefore, the total complexity of {\sf solve\_reduced\_PL-FL} is $O(q\log q + n) = O(q\log q)$ as $q = \Omega(n)$.

\subsection{Complete algorithm}
With the above subroutines discussed, we are ready to present the complete algorithm to solve the path of solutions of PL-FL (\ref{prob: PL-FL-lambda}) for $\lambda \geq 0$. The pseudo-code of the complete algorithm is as follows:

\vspace{1em}

\noindent{\sf solve\_PL-FL\_solution\_path}\\
\noindent\textbf{input}: $\{\{a_{i,1}, \ldots, a_{i,q_i}\}, \{w_{i,0},\ldots,w_{i,q_i}\}\}_{i=1,\ldots,n}$.\\
\noindent\textbf{output}: $\big\{\lambda^{(f)}_j, group_{\lambda^{(f)}_j}, (TB_j(i))_{i=1,\ldots,n}, \{T_{j,I}\}_{I=1,\ldots,n_j}\big\}_{j=0,1,\ldots,p}$.\\
\noindent\textbf{begin}\\
\taba $(\lambda^{(f)}_j, group_{\lambda^{(f)}_j})_{j = 0,1,\ldots, p} := {\sf find\_all\_fusing\_values}()$;\\
\taba $\lambda^{(f)}_{p+1} = \lambda^{(f)}_p+1$; \\
\taba Sort the breakpoints as $a_{i_1,j_1} < a_{i_2,j_2} < \ldots < a_{i_q,j_q}$;\\
\taba \textbf{for} $j := 0,\ldots,p$:\\
    \tabb $\{(TB_j(i))_{i=1,\ldots,n}, (T_{j,I})_{I=1,\ldots,n_j}\} := ${\sf solve\_reduced\_PL-FL}$(\lambda^{(f)}_j, \lambda^{(f)}_{j+1}, group_{\lambda^{(f)}_j})$;\\
\taba \textbf{end for}\\
\taba \textbf{return} $\big\{\lambda^{(f)}_j, group_{\lambda^{(f)}_j}, (TB_j(i))_{i=1,\ldots,n}, \{T_{j,I}\}_{I=1,\ldots,n_j}\big\}_{j=0,1,\ldots,p}$;\\
\noindent\textbf{end}

\vspace{1em}

The complexity of {\sf find\_all\_fusing\_values} is $O(nq\log n\log\lambda_{\max})$, the complexity of all the calls to {\sf solve\_reduced\_PL-FL} is $O(pq\log q) = O(nq\log q)$ as $p = O(n)$, and the complexity of sorting the breakpoints from $n$ sorted sub-lists is $O(q\log n)$ \cite{HL17}, therefore the total complexity of {\sf solve\_PL-FL\_solution\_path} is $O(nq(\log n\log\lambda_{\max} + \log q)) = \tilde{O}(nq)$.

%The path of solution of PL-FL (\ref{prob: PL-FL-lambda}) is encoded in the output of {\sf solve\_PL-FL\_solution\_path}.

\subsection{Discussions}\label{sect: discussion}
The path of solutions is stored in the tuple $\{\lambda^{(f)}_j, group_{\lambda^{(f)}_j}, (TB_j(i))_{i=1,\ldots,n}, \{T_{j,I}\}_{I=1,\ldots,n_j}\}_{j=0,1,\ldots,p}$. The space complexity is $O(n(1 + n + n + nq)) = O(n^2q)$.

Given the above encoded path of solutions, we can solve the optimal solution of PL-FL (\ref{prob: PL-FL-lambda}) for any given $\lambda$ value efficiently. We first do a binary search on all fusing $\lambda$ values to find the interval $[\lambda^{(f)}_j, \lambda^{(f)}_{j+1}-1]$ that contains the $\lambda$ value. This has complexity $O(\log n)$. Then for fused group in $group_{\lambda^{(f)}_j}$, we arbitrarily pick on node $i$ $(i\in [n])$ and compute the super-node $I := TB_j(i)$. Then we find in $T_{j,I}$ the node whose maximal $\lambda$-constant-interval contains the $\lambda$ value. This is done in $O(\log q)$ time by calling the subroutine {\sf get\_$\lambda$\_interval} on $T_{j,I}$. The \emph{value} field of the found node in $T_{j,I}$ is the optimal value of $x_i$ for all nodes $i$ in the fused group. The complexity of this procedure is $O(\log n + n_j\log q) = O(n_j \log q)$. It is much faster than solving it from scratch using HL-algorithm in \cite{HL17} of complexity $O(q\log n)$. As a result, using the generated path of solutions, solving PL-FL (\ref{prob: PL-FL-lambda}) of $K$ different $\lambda$ values has worst total complexity $O(nq(\log n\log\lambda_{\max} + \log q) + Kn\log q)$, while solving PL-FL (\ref{prob: PL-FL-lambda}) from scratch for each $\lambda$ using HL-algorithm has complexity $O(Kq\log n)$. Therefore if $K = \Omega(n)$, using the path of solutions gives a faster algorithm.

The encoding of the path of solutions using red-black trees $T_{j,I}$ facilitates the search of optimal solution for a given $\lambda$ value. One can add an additional data structure for the path of solutions that facilitates the search of $\lambda$ values for a given optimal solution. For each PL-FL-$\lambda^{(f)}_j$ problem, we introduce $q$ lists $\{L_{j,k}\}_{k = 1,\ldots,q}$ such that $L_{j,k}$ stores the sorted maximal $\lambda$-constant-intervals in $[\lambda^{(f)}_j, \lambda^{(f)}_{j+1}-1]$ whose optimal solution of super-node $I_{j,k}$ is $a_{i_k, j_k}$. The $L_{j,k}$ array is created in subroutine {\sf update\_$\lambda$\_breakpoint} (see Appendix \ref{appx: compute}): At $G_{j,k}$, if there are (at most two) new maximal $\lambda$-constant-intervals of optimal value $a_{i_k, j_k}$ generated for super-node $I_{j,k}$, these maximal $\lambda$-constant-intervals form $L_{j,k}$. It only incurs an additional $O(1)$ complexity to {\sf update\_$\lambda$\_breakpoint} subroutine.

With the $L_{j,k}$ arrays, we can solve the following inverse optimization problem: Given a stretch of nodes $i_\ell,i_\ell+1,\ldots,i_r-1,i_r$, identify all $\lambda$ values such that $x^*_{i_\ell} = x^*_{i_\ell+1} = \ldots = x^*_{i_r-1} = x^*_{i_r} = a_{i_k,j_k}$ in PL-FL (\ref{prob: PL-FL-lambda}), or output {\sf NULL} if no such $\lambda$ exists. To solve this problem, we first identify the smallest fusing $\lambda$ value, say $\lambda^{(f)}_{j_0}$, such that $x_{i_\ell}$ to $x_{i_r}$ have the same optimal value and $[i_\ell, i_\ell+1,\ldots,i_r-1,i_r] \subseteq I_{j_0,k} (:= TB_{j_0}(i_k))$. Then we check each $L_{j, k}$ for $\lambda^{(f)}_j \geq \lambda^{(f)}_{j_0}$. If $L_{j,k} \neq \emptyset$, then the $\lambda$ values in the maximal $\lambda$-constant-intervals in $L_{j,k}$ are part of the solution. From the solution set, we can also answer questions like the minimum and maximum values of $\lambda$ that achieve the optimal solution. Identifying the $\lambda^{(f)}_{j_0}$ fusing value can be done via binary search on $\{\lambda^{(f)}_j, group_{\lambda^{(f)}_j}\}_{j=0,1,\ldots, p}$ in $O(n \log n)$ time, where the $O(n)$ factor pays for checking in $group_{\lambda^{(f)}_j}$ whether $i_\ell$ to $i_r$ are fused together with $i_k$ for $\lambda^{(f)}_j$. Then the total time to check $L_{j,k}$ for all $\lambda^{(f)}_j \geq \lambda^{(f)}_{j_0}$ is $O(n)$. Therefore the total time complexity to solve the inverse optimization problem is $O(n\log n + n) = O(n\log n)$.

The analysis and results in this section all apply to FL (\ref{prob: fused-lasso-lambda}) with $q = O(\frac{nU}{\epsilon})$. In particular, the path-of-solution algorithm, when applied to FL (\ref{prob: fused-lasso-lambda}), has time complexity $\tilde{O}(\frac{n^2U}{\epsilon})$, and the space complexity to store the path of solutions is $O(\frac{n^3U}{\epsilon})$.

\section{Conclusions}\label{sect: conclusion}
In this paper, we characterize the solution structure of the fused lasso problem FL (\ref{prob: fused-lasso-lambda}) of arbitrary convex loss functions as $\lambda$ varies and provide an algorithm to compute the path of solutions to FL for all $\lambda \geq 0$. The $\lambda$ parameter determines the relative importance between the loss terms and the regularization terms. Our method is to create an equivalent fused lasso problem PL-FL (\ref{prob: PL-FL-lambda}), to the solution accuracy $\epsilon$, with convex piecewise linear loss functions. The characterization and algorithm for the path of solutions to PL-FL are investigated, the results of which apply to FL of $\epsilon$ solution accuracy.

Besides being a bridge for FL of arbitrary convex loss functions, our results for PL-FL can also be applied to many problems in statistics, bioinformatics and signal processing where the loss functions are defined as convex piecewise linear functions in the first place. In those applications, finding a good value of $\lambda$ is a lengthy trial-and-error process. Our work makes the parameter tuning process more effective. If a large set/interval of pre-specified $\lambda$ values are to be examined, our algorithm is more efficient than solving PL-FL from scratch for every $\lambda$ value in the set/interval. In addition, our algorithm can efficiently solve the inverse optimization problem of finding a $\lambda$ value for the desired optimal solution, which makes design of experiments more effective.

\appendix
\section{Pseudo-code of {\sf compute\_$\lambda$\_breakpoint}}\label{appx: compute}
The pseudo-code {\sf compute\_$\lambda$\_breakpoint} computes potentially (at most one) new $\lambda$-breakpoint and (at most two) new maximal $\lambda$-constant-intervals in $G_{j,k}$. The optimal value of $I_{j,k}$ for the new maximal $\lambda$-constant-intervals is $a_{i_k,j_k}$. It follows the analysis of Lemma \ref{lem: m_breakpoints}, with a succinct presentation to summarize all cases discussed in the lemma.

\vspace{1em}

\noindent$(T_{j,I_{j,k}}, sink\_intv_j(I_{j,k})) := {\sf compute\_\lambda\_breakpoint}(I_{j,k}, n_j, \lambda^{(f)}_j, \lambda^{(f)}_{j+1}, smt_j(\cdot), a_{i_k, j_k}, T_{j,I_{j,k}}, sink\_intv_j(\cdot))$\\
\noindent\textbf{begin}\\
\taba \textbf{if} $I_{j,k} = 1$ \textbf{then} \{edge case\}\\
    % case >= 0
    \tabb \textbf{if} $smt_j(I_{j,k}) \geq 0$ \textbf{then}\\
        \tabc \textbf{if} $sink\_intv_j(I_{j,k} + 1) \neq \emptyset$ \textbf{then}\\
	    \tabd $\lambda_{k,\ell} := \max\{\lceil smt_j(I_{j,k})\rceil, \lambda^{(f)}_j\}$;\\
            \tabd $\lambda_{k,r} := sink\_intv_j(I_{j,k}+1).second$; \\
            \tabd $(T_{j, I_{j,k}}, sink\_intv_j(I_{j,k}))$ := {\sf update\_$\lambda$\_breakpoint}$(I_{j,k}, \lambda_{k,\ell}, \lambda_{k,r}, a_{i_k, j_k}, T_{j,I_{j,k}}, sink\_intv_j(I_{j,k}))$;
        \tabc \textbf{end if}\\
    % case < 0
    \tabb \textbf{else} \{$smt_j(I_{j,k}) < 0$\} \\
        \tabc $\lambda_{k,\ell} := \lambda^{(f)}_j$;\\
        \tabc $\lambda_{k,r} := \max\{sink\_intv_j(I_{j,k} + 1).second, \min\{\lfloor tms_j(I_{j,k})\rfloor, \lambda^{(f)}_{j+1}-1\}\}$;\\
        \tabc  $(T_{j, I_{j,k}}, sink\_intv_j(I_{j,k}))$ := {\sf update\_$\lambda$\_breakpoint}$(I_{j,k}, \lambda_{k,\ell}, \lambda_{k,r}, a_{i_k, j_k}, T_{j,I_{j,k}}, sink\_intv_j(I_{j,k}))$;\\
    \tabb \textbf{end if}\\
\taba \textbf{else if} $I_{j,k} = n_j$ \textbf{then} \{edge case\}\\
    % case >= 0
    \tabb \textbf{if} $smt_j(I_{j,k}) \geq 0$ \textbf{then}\\
        \tabc \textbf{if} $sink\_intv_j(I_{j,k} - 1) \neq \emptyset$ \textbf{then}\\
	    \tabd $\lambda_{k,\ell} := \max\{\lceil smt_j(I_{j,k})\rceil, \lambda^{(f)}_j\}$;\\
            \tabd $\lambda_{k,r} := sink\_intv_j(I_{j,k} -1).second$; \\
            \tabd $(T_{j, I_{j,k}}, sink\_intv_j(I_{j,k}))$ := {\sf update\_$\lambda$\_breakpoint}$(I_{j,k}, \lambda_{k,\ell}, \lambda_{k,r}, a_{i_k, j_k}, T_{j,I_{j,k}}, sink\_intv_j(I_{j,k}))$;
        \tabc \textbf{end if}\\
    % case < 0
    \tabb \textbf{else} \{$smt_j(I_{j,k}) < 0$\} \\
        \tabc $\lambda_{k,\ell} := \lambda^{(f)}_j$;\\
        \tabc $\lambda_{k,r} := \max\{sink\_intv_j(I_{j,k} - 1).second, \min\{\lfloor tms_j(I_{j,k})\rfloor, \lambda^{(f)}_{j+1}-1\}\}$;\\
        \tabc  $(T_{j, I_{j,k}}, sink\_intv_j(I_{j,k}))$ := {\sf update\_$\lambda$\_breakpoint}$(I_{j,k}, \lambda_{k,\ell}, \lambda_{k,r}, a_{i_k, j_k}, T_{j,I_{j,k}}, sink\_intv_j(I_{j,k}))$;\\
    \tabb \textbf{end if}\\
\taba \textbf{else} \{$1 < I_{j,k} < n_j$\}\\
    % case >= 0
    \tabb \textbf{if} $smt_j(I_{j,k}) \geq 0$ \textbf{then}\\
        \tabc \textbf{if} $sink\_intv_j(I_{j,k} - 1) \neq \emptyset$ and $sink\_intv_j(I_{j,k} + 1) \neq \emptyset$ \textbf{then}\\
	    \tabd $\lambda_{k,\ell} := \max\{\lceil smt_j(I_{j,k}) / 2\rceil, \lambda^{(f)}_j\}$;\\
            \tabd $\lambda_{k,r} := \min\{sink\_intv_j(I_{j,k} -1).second, sink\_intv_j(I_{j,k}+1).second\}$; \\
            \tabd $(T_{j, I_{j,k}}, sink\_intv_j(I_{j,k}))$ := {\sf update\_$\lambda$\_breakpoint}$(I_{j,k}, \lambda_{k,\ell}, \lambda_{k,r}, a_{i_k, j_k}, T_{j,I_{j,k}}, sink\_intv_j(I_{j,k}))$;
        \tabc \textbf{end if}\\
    % case < 0
    \tabb \textbf{else} \{$smt_j(I_{j,k}) < 0$\} \\
        \tabc $\lambda_{k,\ell} := \lambda^{(f)}_j$;\\
        \tabc $\lambda_{k,r} := \max\big\{\max\{sink\_intv_j(I_{j,k} - 1).second, sink\_intv_j(I_{j,k}+1).second\},\\
        \tabf \min\{\lfloor tms_j(I_{j,k}) / 2\rfloor, \lambda^{(f)}_{j+1}-1\}\big\}$;\\
        \tabc  $(T_{j, I_{j,k}}, sink\_intv_j(I_{j,k}))$ := {\sf update\_$\lambda$\_breakpoint}$(I_{j,k}, \lambda_{k,\ell}, \lambda_{k,r}, a_{i_k, j_k}, T_{j,I_{j,k}}, sink\_intv_j(I_{j,k}))$;\\
    \tabb \textbf{end if}\\
\taba \textbf{end if}\\
\noindent\textbf{end}

\vspace{1em}

In the above pseudo-code, the subroutine {\sf update\_$\lambda$\_breakpoint} updates $T_{j,I_{j,k}}$ and $sink\_intv_j(I_{j,k})$ for the newly computed maximal $I_{j,k}$-sink-$\lambda$-interval $[\lambda_{k,\ell}, \lambda_{k,r}]$ (could be empty), from which (at most one) new $\lambda$-breakpoint and (at most two) new maximal $\lambda$-constant-intervals with optimal value $a_{i_k, j_k}$ for $I_{j,k}$ could be introduced. The pseudo-code is as follows:

\vspace{1em}

\noindent$(T_{j,I_{j,k}}, sink\_intv_j(I_{j,k})) := {\sf update\_\lambda\_breakpoint}(I_{j,k}, \lambda_{k,\ell}, \lambda_{k,r}, a_{i_k, j_k}, T_{j,I_{j,k}}, sink\_intv_j(I_{j,k}))$ \\
\noindent\textbf{begin}\\
    \taba \textbf{if} $\lambda_{k,\ell} \leq \lambda_{k,r}$ \textbf{then}\\
    	\tabb \textbf{if} $sink\_intv_j(I_{j,k}) = \emptyset$ \textbf{then}\\
	    \tabc $z := {\sf new\_node}(\lambda_{k,\ell}, \lambda_{k,r}, a_{i_k,j_k})$;\\
	    \tabc {\sf RB-INSERT}$(T_{j,I_{j,k}}, z)$;\\
	    \tabc $sink\_intv_j(I_{j,k}).first = \lambda_{k,\ell}, sink\_intv_j(I_{j,k}).second = \lambda_{k,r}$;\\
	\tabb \textbf{else}\\
    	    \tabc \textbf{if} $\lambda_{k,\ell} < sink\_intv_j(I_{j,k}).first$ \textbf{then} \\
        		\tabd $z := {\sf new\_node}(\lambda_{k,\ell}, sink\_intv_j(I_{j,k}).first-1, a_{i_k,j_k})$;\\
        		\tabd {\sf RB-INSERT}$(T_{j,I_{j,k}}, z)$;\\
        		\tabd $sink\_intv_j(I_{j,k}).first := \lambda_{k,\ell}$;\\
    	    \tabc \textbf{end if}\\
    	    \tabc \textbf{if} $\lambda_{k,r} > sink\_intv_j(I_{j,k}).second$ \textbf{then} \\
        		\tabd $z := {\sf new\_node}(sink\_intv_j(I_{j,k}).second+1, \lambda_{k,r}, a_{i_k,j_k})$;\\
	        \tabd {\sf RB-INSERT}$(T_{j,I_{j,k}}, z)$;\\
	        \tabd $sink\_intv_j(I_{j,k}).second := \lambda_{k,r}$;\\
    	    \tabc \textbf{end if}\\
	\tabb \textbf{end if}\\
    \taba \textbf{end if}\\
    \taba \textbf{return} $(T_{j,I_{j,k}}, sink\_intv_j(I_{j,k}))$;\\
\noindent\textbf{end}

\vspace{1em}

Recall that the number of nodes in each $T_{j,I_{j,k}}$ is $O(q)$. As a result, each call to {\sf RB-INSERT}$(T_{j,I_{j,k}}, z)$ is $O(\log q)$. Hence the complexity of {\sf update\_$\lambda$\_breakpoint} is $O(\log q)$. As a result, the complexity of {\sf compute\_$\lambda$\_breakpoint} is $O(\log q)$.


\begin{thebibliography}{99}
\bibitem{CLRS09} T.~H.~Cormen, C.~E.~Leiserson, R.~L.~Rivest, and C.~Stein. {\em Introduction to Algorithms}, The MIT Press, Cambridge, MA, 2009.

\bibitem{EdM05} P.~H.~C.~Eilers and R.~X.~de Menezes. Quantile smoothing of array CGH data. {\em Bioinformatics}, 21(7): pp. 1146--1153, 2005.


\bibitem{FHHT07} J.~Friedman, T.~Hastie, H.~Hoefling, and R.~Tibshirani. Pathwise coordinate optimization. {\em Ann. Appl. Statist.}, 1(2): pp. 302--332, 2007.

\bibitem{GGT89} G.~Gallo, M.~D.~Grigoriadis, and R.~E.~Tarjan. A fast parametric maximum flow algorithm and applications. {\em SIAM J. Comput.}, 18(1): pp. 30--55, 1989.

\bibitem{DH01} D.~S.~Hochbaum. An efficient algorithm for image segmentation, Markov random fields, and related problems. {\em J. ACM}, 48(4): pp. 686--701, 2001.

\bibitem{DH08} D.~S.~Hochbaum. The pseudoflow algorithm: A new algorithm for the maximum flow problem. {\em Oper. Res.}, 58(4): pp. 992--1009, 2008.

\bibitem{HL17} D.~S.~Hochbaum and C.~Lu. A faster algorithm solving a generalization of isotonic median regression and a class of fused lasso problems. {\em SIAM J. Optimization}, 27(4), pp. 2563--2596, 2017.

\bibitem{HS90} D.~S.~Hochbaum and J.~G.~Shanthikumar. Nonlinear separable optimization is not much harder than linear optimization. {\em Journal of ACM}, 37(4): pp. 843--862, 1990.

\bibitem{HH10} H.~Hoefling. A path algorithm for the fused lasso signal approximator. {\em Journal of Computational and Graphical Statistics}, 19(4): pp. 984--1006, 2010.

\bibitem{SWU16} M.~Storath, A.~Weinmann, and M.~Unser. Exact algorithms for $L^1$-TV regularization of real-valued or circle-valued signals. {\em SIAM J.~Sci.~Comput.}, 38(1): pp. A614--A630, 2016.

\bibitem{THT11} R.~J.~Tibshirani, H.~Hoefling, and R.~Tibshirani. Nearly-isotonic regression. {\em Technometrics}, 53(1): pp. 54--61, 2011.

\bibitem{TT11} R.~J.~Tibshirani and J.~Taylor. The solution path of the generalized lasso. {\em Ann. Statist.}, 39(3): pp. 1335--1371, 2011.

\end{thebibliography}
\end{document}